\documentclass[11pt]{article}

\usepackage{fullpage}

\usepackage[utf8]{inputenc}

\usepackage[algo2e,ruled,vlined,linesnumbered]{algorithm2e}

\usepackage[normalem]{ulem}
\usepackage[inline]{enumitem}
\usepackage{amsmath,amsfonts,amssymb,amsthm,hyperref,algorithm,eqparbox,array,ragged2e,stmaryrd,bbm,bm}
\usepackage{xspace}
\usepackage{tikz}
\usepackage{xcolor}
\usepackage{ifthen}
\usepackage{mathtools} 
\usepackage{bussproofs}
\usepackage{thmtools,thm-restate}

\usepackage{lineno}
\usepackage{thm-restate}
\usepackage{enumitem}
\usepackage[capitalise,nameinlink,noabbrev]{cleveref}

\usepackage[dvipsnames]{xcolor}
\hypersetup{
	colorlinks=true,
	urlcolor=MidnightBlue,
	linkcolor=MidnightBlue,
	citecolor=MidnightBlue,
	unicode
}

\declaretheorem[name=Theorem, parent=section]{theorem}
\declaretheorem[name=Lemma,sibling=theorem]{lemma}
\declaretheorem[name=Claim,sibling=theorem]{claim}
\declaretheorem[name=Proposition,sibling=theorem]{proposition}

\theoremstyle{definition}
\declaretheorem[name=Definition, sibling=theorem, style=definition]{definition}
\declaretheorem[name=Open Problem, sibling=theorem, style=definition]{problem}

\theoremstyle{remark}

\makeatletter
\def\th@example{%
  \thm@notefont{}%
  \normalfont %
}
\def\th@definition{%
  \thm@notefont{}%
  \normalfont %
}
\makeatother

\theoremstyle{example}

\input{restate}

\renewcommand{\setminus}{\smallsetminus}
\renewcommand{\hat}{\widehat}

\newcommand{\Z}{\mathbb{Z}}
\newcommand{\R}{\mathbb{R}}

\newcommand{\Th}{\mathrm{Th}}

\newcommand{\calR}{\mathcal{R}}
\newcommand{\calG}{\mathcal{G}}

\newcommand{\calT}{\mathcal{T}}

\newcommand{\NP}{\mathbf{NP}}

\DeclareMathOperator{\Exp}{\mathbb{E}}

\renewcommand{\epsilon}{\varepsilon}

\newcommand{\fix}{\textsf{\upshape fix}}

\newcommand{\search}{\mathrm{Search}}

\newcommand{\threshold}{q}

\newcommand{\random}[1]{\bm{#1}}
\newcommand{\newmeasure}[2]{\newcommand{#1}{{\textup{\sffamily #2}}\xspace}}
\newmeasure{\C}{C}

\newcommand{\cl}{\mathrm{Cl}}
\newcommand{\CNbound}{R}

\newcommand{\Hmin}{\mathrm{H}_{\infty}}

\newcommand{\xbf}{\bm{x}}
\newcommand{\ybf}{\bm{y}}
\newcommand{\vbf}{\bm{v}}

\newcommand{\Gbf}{\bm{G}}
\newcommand{\protocolPi}{\Pi}
\newcommand{\protocolPiAlt}{\Pi'}
\newcommand{\protocolPiRand}{\Pi_r}
\newcommand{\Hminof}[1]{\Hmin( {#1} )}
\newcommand{\commCost}[1]{\lvert#1\rvert}

\newcommand{\ResParity}{\mathrm{Res}(\oplus)}

\newcommand{\comment}[1]{}

\ifthenelse{1>0}{
\newcommand{\artur}[1]{\textcolor
{orange}{[Artur: #1]}}
\newcommand{\david}[1]{\textcolor{red}{[David: #1]}}
\newcommand{\duri}[1]{\textcolor{purple}{[Duri: #1]}}
\newcommand{\yassine}[1]{\textcolor{ForestGreen}{[Yassine: #1]}}
\newcommand{\susanna}[1]{\textcolor{blue}{[Susanna: #1]}}
}{
\newcommand{\artur}[1]{}
\newcommand{\david}[1]{}
\newcommand{\duri}[1]{}
\newcommand{\yassine}[1]{}
\newcommand{\susanna}[1]{}
}

\newcommand{\setsize}[1]{\lvert#1\rvert}

\newcommand{\prflength}{L}
\newcommand{\ResLin}{\textrm{Res}(\oplus)}
\newcommand{\binClique}[2]{\mathsf{BClique}({#1}, {#2})}
\newcommand{\binCliqueGk}{\binClique{G}{k}}
\newcommand{\binCliqueBlock}[2]{\mathsf{BClique}_{\textrm{block}}({#1}, {#2})}
\newcommand{\binCliqueBlockGk}{\binCliqueBlock{G}{k}}

\newcommand{\searchComm}[1]{\mathsf{BClique}^{cc}({#1}, k)}

\newcommand{\binCliqueBlockComm}[2]
{\mathsf{BClique}_{\textrm{block}}^{cc}(#1,#2)}

\newcommand{\graphSize}{N}

\newcommand{\kClique}{$k$-\textsf{Clique}\xspace}

\providecommand{\erdosrenyi}{Erd\H{o}s-R\'{e}nyi\xspace}

\makeatletter
\newcommand{\aas}{a.a.s.\@\xspace}
\makeatother
\newcommand{\ndense}{almost-complete\xspace}

\newcommand{\F}{\mathbb{F}}

\newcommand{\affineDAG}{\Pi}
\newcommand{\edgeProbResLin}{\alpha}
\newcommand{\betabound}{1/300}
\newcommand{\rbound}{1/100(1-\edgeProbResLin)}

\usepackage{ifthen}

\newboolean{conferenceversion}
\setboolean{conferenceversion}{false}

\ifthenelse{\boolean{conferenceversion}}
{}{}

\newcommand{\myparagraph}[1]{\paragraph{#1}}

\begin{document}
\title{
Average-Case Hardness of Binary-Encoded Clique \\in Proof and Communication Complexity}
\author{Susanna F.\ de Rezende \\
  Lund University
  \and
  David Engström  \\
  Lund University
  \and  
  Yassine Ghannane \\
  University of Copenhagen \\and Lund University
  \and
  Duri Andrea Janett \\
  University of Copenhagen \\
  and Lund University
  \and
  Artur Riazanov\\
  EPFL}
\date{\today}

\maketitle

\begin{abstract}
    We study the average-case hardness of establishing that a graph does not have a large clique in both proof and communication complexity.
    We show exponential lower bounds on the length of cutting planes and 
    bounded-depth resolution over parities 
    refutations 
    of the binary encoding of clique formulas on randomly sampled dense graphs. 
    Moreover, we show that  
    the randomized communication complexity of 
    finding a falsified clause in these formulas 
    is polynomial.
\end{abstract}

\thispagestyle{empty}
\newpage
\ifthenelse{\boolean{conferenceversion}}
{}{\tableofcontents}

\newpage

\section{Introduction}
\label{sec:introduction}

The \kClique problem asks whether a given graph $G$ contains 
a complete subgraph of size $k$.
This is one of the classical $\NP$-hard problems \cite{Karp1972}.
Under the \emph{Exponential Time Hypothesis},
solving \kClique on an $\graphSize$-vertex graph requires time $\graphSize^{\Omega(k)}$ \cite{CHKX04approx}, nearly matching the $O(\graphSize^k)$ upper bound that can be obtained by simply checking all $\binom{\graphSize}{k}$ possible subgraphs.
Furthermore, it is $\NP$-hard to approximate the maximum size of a clique even within a factor $\graphSize^{1-\varepsilon}$~\cite{haastad1999clique, Zuckerman07}.

Showing unconditional super-polynomial lower bounds for \kClique seems beyond the reach of current techniques, as it would immediately imply $\mathbf{P} \neq \NP$. Thus, significant effort has been made to provide unconditional \emph{evidence} for the hardness of \kClique. The most direct evidence is hardness in restricted computational models. 
In circuit complexity, 
\kClique was the first 
problem proven to require exponential-size \emph{monotone Boolean circuits} \cite{Razborov1985}. 
In a later breakthrough, Rossman \cite{Rossman08ConstantDepth, Rossman14Monotone} showed 
that the \kClique problem is hard for monotone and bounded-depth circuits even \emph{on average}. 
\subsection{Clique in Proof Complexity}
Another type of evidence for hardness comes from \emph{propositional proof complexity},
the study of 
the complexity of certifying, in various formal proof systems, that a given Boolean formula is a tautology or that it is unsatisfiable.

Proof complexity provides a perspective to analyze several algorithmic approaches to the \kClique problem:
If an algorithm rejects a \kClique instance, 
the trace of its execution can be viewed as a certificate 
of the claim that the given graph \emph{does not} contain a $k$-clique.
This claim can be encoded as a CNF formula asserting that the graph \emph{does} have a $k$-clique, and in this paper we  study certificates of unsatisfiability, also called \emph{refutations}, of this formula.  
Many classes of algorithms can be captured by propositional proof systems in this way: the core reasoning of state-of-the-art 
SAT-solvers is captured by \emph{resolution} \cite{BKS04}; Gr\"oebner basis algorithms by \emph{polynomial calculus} \cite{CEI96}; the cutting planes methods in integer linear optimization by the \emph{cutting planes} proof system~\cite{CCT87ComplexityCP};
and the Lassere hierarchy in semi-definite programming~\cite{Lasserre01} by the \emph{sum-of-squares} proof system. 

Assuming the \emph{Nondeterministic Exponential Time Hypothesis} \cite{CIMPR16}, in \emph{every} proof system the shortest refutation of \kClique for \emph{some} graph has length at least $\graphSize^{\Omega(k)}$.
On the other hand, unconditional lower bounds are only known for very restricted proof systems. 
Two encodings of the clique problem into CNF formulas are studied in the literature.
The \emph{unary encoding} uses $\graphSize$ propositional variables to encode the $i$th vertex of the purported $k$-clique, for every $i$,
while the more succinct \emph{binary encoding} uses only $\log \graphSize$ variables to encode the same information in binary. 

The following results are known about the unary encoding. 
For \emph{treelike} resolution, there are $\graphSize^{\Omega(k)}$ length lower bounds even 
on average 
with respect to the \erdosrenyi graph distribution~\cite{BGL13ParameterizedDPLL,Lauria18Cliques}. 
The current best known results are analogous average-case bounds for \emph{regular} resolution~\cite{ABdRLNR21Clique}, and Sherali--Adams with bounded coefficients~\cite{dRPR23UnarySA}.
For general resolution, in the parameter range $\graphSize^{5/6}\ll k \leq \graphSize/3$, an $\exp (\graphSize^{\Omega(1)})$ average-case lower bound for dense graphs is known
\cite{BIS07IndependentSets}, and when $k<\graphSize^{1/3}$, an average-case lower bound $2^{k^{1-\varepsilon}}$ holds~\cite{Pang21}.
Improving this to an optimal $\graphSize^{\Omega(k)}$ lower bounds is a long-standing open problem, see, e.g.,~\cite{BGLR12Parametrized,BN21ProofCplxSAT}, even for worst-case instances. 
In particular, all known techniques for proving resolution lower bounds seem to fail for this task~\cite{ABdRLNR21Clique}.

Weak proof systems can, however, be very sensitive to the choice of encoding, as was shown in, e.g.,~\cite{DGGM24Binary}.
This is also the case for the binary encoding of clique, where strong lower bounds are known for resolution:
There is an $\graphSize^{\Omega(k)}$ average-case lower bound for resolution~\cite{LPRT17ComplexityRamsey}, which was further extended to $s$-DNF resolution, for $s=o(\log \log n)$, at the cost of some loss in the exponent of the bound depending on $s$~\cite{DGGM24Binary}.

\subsection{Our contributions} 

We prove new lower bounds for $k$-\textsf{Clique} in three settings. Our first result is that
the binary encoding of $k$-\textsf{Clique} requires exponential proof length in cutting planes.
Prior to our work, nothing was known about the hardness of \kClique for cutting planes, even in the treelike setting.

\begin{theorem}[Informal]\label{thm:CP-informal}
    For $\graphSize$-vertex graphs $G$ sampled from the \erdosrenyi distribution with appropriate edge density, any cutting planes refutation of the binary-encoded clique formula must have length at least $2^{\graphSize^{\Omega(1)}}$.
\end{theorem}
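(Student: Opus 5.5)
Our plan is to go through the standard reduction from cutting planes length to communication complexity, namely the lifting/interpolation route via \emph{dag-like} protocols. Concretely, we use that a cutting planes refutation of length $L$ of a CNF $F$ yields, for any partition of the variables between Alice and Bob, a real (or triangle-) dag-like communication protocol of size $\mathrm{poly}(L)$ solving the falsified-clause search problem $\search(F)$. So it suffices to exhibit a variable partition of the binary clique formula $\binCliqueGk$ and show that the induced search problem $\searchComm{G}$ needs exponential size for such protocols when $G$ is a random dense graph. The abstract warns that the \emph{randomized} tree-like communication complexity is polynomial, so the lower bound cannot come from plain communication complexity. It must exploit dag-like structure, i.e.\ a lower bound on rectangle/triangle-dag size.

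For the partition, we split the $\log \graphSize$ bits encoding each of the $k$ block-vertices into Alice's half and Bob's half. Then Alice holds the high-order bits and Bob the low-order bits of every chosen vertex, and a pair (block $i$, block $j$) is ``bad'' exactly when the joint vertices are non-adjacent or equal. We then aim for a \emph{gadget-composition} view: the falsified-clause problem should contain, as a subproblem, a lifted version of a hard search problem. This would be a pigeonhole-like or Ramsey-type search problem on the random graph, composed with an inner gadget given by the bipartite adjacency structure between Alice-part and Bob-part vertex labels. For a random graph of suitable density, that bipartite structure contains the needed gadget (e.g.\ an inner-product or versatile-gadget substructure) with high probability; we would verify this by a union bound over small configurations.

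The core step is a dag-like lower bound for the lifted problem. We would either use the dag-like lifting theorem (Garg--G\"o\"os--Kamath--Sokolov), translating a resolution-width/``Prover--Delayer'' lower bound for the outer problem into a triangle-dag lower bound, or argue directly with a bottleneck-counting argument on triangles. Under the direct approach, we fix a distribution on \emph{critical} inputs: partial assignments describing near-cliques in $G$ where exactly one pair is non-adjacent. We then show that every large triangle can be ``responsible'' for only an exponentially small fraction of such inputs. The required width lower bound for the outer problem should come from the known resolution lower bounds for binary clique on random graphs \cite{LPRT17ComplexityRamsey}, which rest on the fact that a random dense graph has no $k$-clique but every small vertex set has many common neighbours. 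That extension property lets a Delayer survive for $\graphSize^{\Omega(1)}$ rounds.

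We expect the main obstacle to be making the lifting compatible with the binary encoding. In the binary encoding, the gadget is not an independent copy per variable: the graph $G$ fixes a single global adjacency relation shared by all blocks, and the inner gadget is therefore correlated across coordinates. We would handle this by restricting to a random subfamily of vertex labels per block, chosen so that the induced bipartite adjacency pattern behaves like independent gadgets. Proving that such a restriction exists with high probability, while keeping $k$ and the edge density so that $G$ has no $k$-clique, is the delicate part. It is also where the final bound degrades to $2^{\graphSize^{\Omega(1)}}$ rather than $\graphSize^{\Omega(k)}$.
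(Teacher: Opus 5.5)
\textbf{There is a genuine gap in the core step.} Your first step agrees with the paper. A semantic cutting planes refutation yields a triangle-DAG of the same length for the split in which Alice holds the first half and Bob the second half of the bits of every clique member. (The paper first passes to the block encoding, which removes the functionality axioms.) The gap is the core step: an exponential triangle-DAG lower bound for this particular search problem, which you name but do not supply.

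Your lifting route has nothing to lift. The falsified-clause problem is not of the form $S\circ g^m$: its answers at $(x,y)$ are the block pairs $i\neq j$ whose full labels $(x_i,y_i)$ and $(x_j,y_j)$ are non-adjacent in a single random relation. You give no outer problem, no gadget, and no reduction that maps Alice's and Bob's inputs separately and maps answers back. Such a reduction would need $G$ to contain complete bipartite connections between the gadget fibres of every allowed pair of outer values, across essentially all block pairs. This is at odds with a dense random $G$ having no transversal $k$-clique, once the gadgets have the (at least logarithmic-bit) size that dag-like lifting theorems require. So your ``random subfamily of labels'' step is not a technicality but the entire theorem, and plausibly false. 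Also, \cite{LPRT17ComplexityRamsey} gives a resolution \emph{size} lower bound for the binary clique formula itself, not a width bound for a Boolean outer formula you could lift.

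Your direct alternative is only a slogan. You give no mechanism limiting how many critical inputs a triangle is ``responsible'' for. Moreover, your critical inputs need not exist: for $p=1-n^{-1/3}$ and $k=n$, a.a.s.\ every transversal misses $\Omega(n^{5/3})$ pairs. (Incidentally, the ``polynomial'' randomized communication complexity in the abstract is a \emph{lower} bound $N^{\Omega(1)}$, not a warning.)

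\textbf{What the paper does instead.} The missing idea is that neither lifting nor critical inputs are needed. The paper isolates one property, $s$-almost-completeness: for all $i\neq j$ and all $x_i,y_i,x_j\in\Sigma$, at most $s$ values $y_j$ make $(x_i,y_i)$ and $(x_j,y_j)$ non-adjacent. By Chernoff this holds a.a.s.\ with $s=\max(2\sqrt{n}(1-p),9e^2\log(kn))$.

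It then runs a bottleneck count in the style of \cite{Sokolov24RandomCNFs,BeameW2025multiparty} on \emph{half}-inputs:
\begin{itemize}
\item The block width of $x$ in a triangle $T$ is the least number of blocks mentioned by a cover of the slice $(\{x\}\times Y)\cap T$ with non-edge rectangles $R_{u,v}$.
\item Processing the DAG from the sinks upward, each $x\in X$ or $y\in Y$ is charged to the first node where its block width (in the triangle restricted to still-uncharged inputs) exceeds $q$.
\item At the root, an uncharged $x$ has its slice of uncharged $y$'s covered by at most $\binom{q}{2}\sqrt{n}\,s\le n/2$ such rectangles. Each contains a $1/n$ fraction of $Y$, so at least a quarter of $X\sqcup Y$ gets charged.
\item At a fixed node, uncharged inputs have block width at most $2q$. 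A greedy tree of coverings, compressed so that block-depth strictly increases, has at most $(\sqrt{n}/2)^d$ nodes at block-depth $d$. Here almost-completeness bounds the children adding one new block by $2q^2s\le\sqrt{n}/4$, and those adding two by $2q^2\sqrt{n}\,s\le n/4$.
\item Each node at block-depth $d$ is consistent with at most $|X|/\sqrt{n}^{\,d}$ inputs. Hence at most $2^{-q}|X\sqcup Y|$ inputs are charged to any node.
\end{itemize}
The DAG therefore has size at least $2^{q-2}$ with $q=n^{1/4}/\sqrt{8s}$. For $p=1-n^{-1/3}$, $k=n$ and $N=n^2$, this is $2^{\Omega(N^{1/12})}$.
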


We consider next the frontier proof system \emph{resolution over parities} ($\ResParity$)~\cite{IS20ResLin},
which recently received a lot of attention~
\cite{BhattacharyaCD24,AI25lifting,EI25amortized,BC25exponential,BI25bphp}, particularly following the breakthrough of~\cite{EGI24regular}.
Lower bounds are only known for restricted versions of this system, and in particular,
nothing is known about the hardness of refuting $k$-\textsf{Clique} in $\ResLin$.
We prove the following length lower bound on $\ResLin$ proofs of bounded depth.
\begin{theorem}[Informal]\label{thm:ResLin-informal}
    For $\graphSize$-vertex graphs $G$ sampled from the \erdosrenyi distribution with appropriate edge density,
    any $\ResLin$ refutation of the binary-encoded clique formula in depth $m^{1.5-\varepsilon}$, where $m$ is the number of variables in the formula, must have length at least~$2^{\graphSize^{\Omega(1)}}$.
\end{theorem}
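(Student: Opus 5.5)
The plan is to reduce the depth-bounded $\ResLin$ lower bound to a lower bound for an \emph{affine decision DAG} searching for a falsified clause of $\binCliqueGk$. Bounded-depth $\ResLin$ refutations of length $\prflength$ correspond to affine DAGs $\affineDAG$ of size $\prflength$ and depth $D \le m^{1.5-\varepsilon}$. I would then run the random-walk / bottleneck-counting argument in the style of Efremenko--Garlík--Itsykson. The block structure of the binary encoding is essential here: the formula has $k$ blocks of $\log \graphSize$ variables each, block $i$ encoding the $i$th clique vertex. I would first pass to the block version $\binCliqueBlockGk$, where vertex $i$ is chosen from its own part $V_i$ of size $\graphSize/k$; this variant is at most as hard to refute as the original, so lower bounds for it transfer. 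I would then take $G\sim\calG(\graphSize,\edgeProbResLin)$ with $\edgeProbResLin$ close to $1$, so that \aas every small set of chosen vertices has many common neighbours in every other block. This is the \ndense property that drives the adversary.

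The key steps, in order, are as follows. First, establish the combinatorial properties of $\Gbf$: for every set $S$ of at most $r$ vertices, spread over distinct blocks with $r \le \rbound\cdot k$ or a similar bound, and for every block $j$ untouched by $S$, the common neighbourhood of $S$ in $V_j$ has size at least $(\edgeProbResLin^{r}/2)\,|V_j|$. Second, define a distribution on root-to-leaf paths in $\affineDAG$. At each node we keep an affine subspace of the $\log\graphSize\cdot k$ variables, and we maintain a ``closure'': a small set of blocks that are fixed to actual vertices, such that on the remaining blocks the linear constraints have high min-entropy ($\Hmin$). Each linear query is answered uniformly at random when it is not determined by the current constraints. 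Whenever the rank on some set of blocks becomes too large relative to their total dimension, we fix those blocks to a uniformly random common neighbour consistent with the constraints. Step one guarantees that such a neighbour exists. Third, show that a leaf, i.e.\ a falsified clause, can only be reached once the closure contains about $\Omega(k)$ blocks, or some block is fully fixed. Fourth, perform the bottleneck count. Any single node of $\affineDAG$ is hit, with a large closure, with probability at most $2^{-\graphSize^{\Omega(1)}}$. Hence there must be many nodes, which gives $\prflength\ge 2^{\graphSize^{\Omega(1)}}$.

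The main obstacle will be controlling the closure under the depth bound. Each query may enlarge the closure, and after fixing blocks we have to restore a row-echelon or ``safe'' form of the constraints without accumulating too much rank. The depth bound $m^{1.5-\varepsilon}$ should enter exactly here. Along a path of length $D$, the expected number of blocks added to the closure must stay $o(k)$ while the min-entropy loss stays bounded. I would first try an amortized potential: the sum over unfixed blocks of the rank deficit, weighted by block size $\log\graphSize$. I would show that each query increases it by $O(1)$ in expectation, and that each block fixing costs at most $\log\graphSize$ of it. Choosing $k=\graphSize^{\delta}$ so that $m=k\log\graphSize$ balances against $D$ should then yield the claimed exponent.
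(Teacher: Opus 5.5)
There is a genuine gap, and it sits at what you call the main obstacle. Your plan is a single walk through the whole depth-$D$ DAG, keeping the closure of the \emph{accumulated} constraints, followed by one bottleneck count. For non-regular $\mathrm{Res}(\oplus)$ this count has nothing to grip. The only control on how often a node $v$ is reached is the rank of its own label $\Psi_v$ (\autoref{lemma:rank-probability-relation}). Labels may forget constraints, so a walk can carry a large accumulated closure through nodes whose labels have tiny rank. Hence ``every node is hit with a large closure with probability $2^{-N^{\Omega(1)}}$'' is unjustified.

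Step 3 is also the wrong invariant. A leaf's label need only fix the two blocks of a non-edge, so leaves are not characterised by a large closure. What the walk actually guarantees is that it cannot sit at a leaf while its fixed vertices form a clique. Keeping that clique inside common neighbourhoods large enough for any entropy/rank argument (size $\ge 2n/3$) works only for about $R\approx 1/(1-\alpha)$ vertices. Unsatisfiability forces $k\gtrsim \log n/(1-\alpha)$, so $R=o(k)$, not $\Omega(k)$. A single pass therefore yields only a depth bound of order $R$ (the first part of \autoref{lemma:reduction}), not a size bound at depth $m^{1.5-\varepsilon}\gg m\ge k$. Your potential grows with every non-redundant query and never interacts with the node labels, so it cannot repair this, and nothing in it produces the exponent $1.5$.

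The missing idea is random walks \emph{with restarts} (Alekseev--Itsykson, Byramji--Impagliazzo), which is what the paper adapts. Walk only $d$ steps, sampling each touched block uniformly from $N^\cap(M,i)$; by \autoref{lemma:PDTwalkRelation}, node-visit probabilities then equal those of a product input. This needs two-sided concentration $|N^\cap(S,i)|\in(1\pm\beta)\alpha^{|S|}n$. Your one-sided $\alpha^r/2$ bound is too weak, because the per-step success probability is a ratio of two such sizes. With it, the walk avoids a non-edge with probability $\exp(-O(d\beta+d^2(1-\alpha)))\ge S^{-O(1)}$ for $d\approx\sqrt{\log S}\,\min(\sqrt R,1/\beta)$.

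Some node at depth $d$ is then reached in a successful run with probability $S^{-O(1)}$, so its label has rank $O(\log S)$. By \autoref{lemma:assignment-extraction}, its closure can be satisfied while adding only $O(\log S)$ clique vertices to $M$, and the sub-DAG below it solves $\mathrm{NonEdge}_{M^*}$ in depth $D-d$. This restart discards the roughly $d$ vertices the walk itself fixed, which is exactly what a single pass cannot do. Iterating about $R/\log S$ times gives $D\sqrt{\log S}\gtrsim R\min(\sqrt R,1/\beta)$.

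Finally, take $p=1-n^{-1/3}$, $R=n^{1/3}/100$ and $k=n^{1/3+\varepsilon/100}$, so the formula is asymptotically almost surely unsatisfiable and $m=k\log N$. Then $R^{1.5}\approx n^{1/2}$ beats $m^{1.5-\varepsilon}$ by a polynomial factor, forcing $\log S\ge n^{\Omega(\varepsilon)}$.
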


Lastly, we study the communication version of the problem. 
Two parties are both given partial information about the members of a purported $k$-clique in a graph $G$.
Again, $G$ does not actually have a $k$-clique, and the parties must find a missing edge 
in
the purported clique, while communicating as few bits as possible.
This corresponds to finding a falsified clause in the binary-encoded clique formula, if the partial information consists of the bits encoding the purported clique members.
We prove the following lower bound
in the setting where the parties have access to shared randomness.

\begin{theorem}[Informal]\label{thm:Comm-informal}
    For $\graphSize$-vertex graphs $G$ sampled from the \erdosrenyi distribution with appropriate edge density, the randomized communication cost of finding a falsified clause in the binary-encoded clique formula is $\graphSize^{\Omega(1)}$.
\end{theorem}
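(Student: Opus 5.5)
We prove the communication lower bound by a reduction from a search problem with known randomized hardness. The natural candidate is set disjointness, or its unique-intersection variant, which has randomized communication complexity $\Omega(n)$ even under a suitable product-like hard distribution.

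\textbf{Setup.} Take the blockwise binary encoding: vertices are split into $k$ blocks $V_1,\dots,V_k$ of size $\graphSize/k$, each indexed by $\ell=\log(\graphSize/k)$ bits. The purported clique member of block $i$ is encoded by a string $x_i\in\{0,1\}^\ell$. Split the $\ell$ bits of each $x_i$ between Alice and Bob, so that Alice holds $a_i$ and Bob holds $b_i$. The vertex chosen in block $i$ is then $v_i=(a_i,b_i)$. A falsified clause is a pair $i\neq j$ such that $\{v_i,v_j\}$ is a non-edge of $G$. We work with a graph of density $1-\edgeProbResLin$ chosen so that a.a.s.\ $G$ has no $k$-clique but is almost complete.

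\textbf{Reduction.} The goal is to embed a hard instance of a search version of disjointness into the choices $(a_i,b_i)_{i\le k}$. The plan has three steps.

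1. For a random $G$, show that a.a.s.\ one can find many coordinates $i$ and "gadget" structures inside the bipartite graphs between $V_i$ and $V_j$. Within these, the edge/non-edge pattern between $(a_i,b_i)$ and $(a_j,b_j)$ computes a two-party gadget such as an index or inner-product gadget. This is the step that uses randomness of $G$: random dense bipartite graphs contain every small pattern as an induced subgraph on product sets, by a union-bound or counting argument over $2^{\ell/2}\times 2^{\ell/2}$ grids.

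2. Use these embedded gadgets to simulate a composed search problem. Choose the inputs so that every pair $(i,j)$ is an edge except pairs corresponding to intersecting coordinates of a disjointness instance. A falsified clause then reveals an intersecting coordinate.

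3. Because $G$ has no $k$-clique, every input must have a falsified pair. This is exactly why the reduction must go from a total search problem; unique-intersection or a lifted Pigeonhole/Tseitin-style search problem is therefore more appropriate than plain disjointness.

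Concretely, we would lift a total search problem with high randomized query complexity, for example a pigeonhole- or matching-type problem on the $k$ blocks that has query complexity $k^{\Omega(1)}$. We would use a gadget of size $\graphSize^{\epsilon}$ and apply a randomized query-to-communication lifting theorem. For $k=\graphSize^{\Omega(1)}$ this yields the bound $\graphSize^{\Omega(1)}$.

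\textbf{Main obstacle.} The hardest step is showing that a random graph, with no planted structure, contains the required embedding. There are two requirements:

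1. The non-edges among the chosen vertex sets must realize the lifted search problem exactly, so that a found falsified clause maps back to a valid solution.
2. The vertex sets must be rectangles in the bit-split, so that the players' inputs correspond to the bits they hold.

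First we would attempt this with a restriction argument. Fix most bits of each $x_i$ so that each block is restricted to a subcube of size $\graphSize^{\epsilon}$, and show via Chernoff and a union bound that the induced $k$-partite graph is "generic" enough. Then apply a min-entropy ($\Hmin$) lifting theorem directly to the resulting structured rectangles, rather than relying on exact combinatorial embeddings.
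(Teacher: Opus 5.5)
\textbf{The proposal has a genuine gap.} Everything rests on embedding a lifted hard search problem into the \emph{fixed} random graph $G$, and you neither carry out this step nor is it possible as stated.

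Step~2 never says how each player, seeing only its own part of the source input, chooses its halves of the $k$ vertex labels so that non-adjacency in $G$ encodes exactly the solutions of the source problem, with no spurious non-edges. The outer problem, the gadget and the lifting theorem are also left unspecified.

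The combinatorial claim of Step~1 fails at the scale you need. Take the almost-complete regime, say $1-p=n^{-1/3}$ with $n$ the block size. A prescribed $t\times t$ pattern with $\Theta(t^2)$ non-edges, which any balanced gadget such as index or inner product forces, appears at a given location with probability $n^{-\Theta(t^2)}$. There are at most $n^{2t}$ locations, so a.a.s.\ no such pattern of super-constant size exists at all. In particular there is none of size $N^{\epsilon}$ on product sets aligned with the Alice/Bob bit split and serving all $k-1$ partner blocks at once.

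This tension is intrinsic. The problem is hard only because $G$ is almost complete: at constant density a cheap randomized protocol exists, as noted in the introduction. Almost-completeness is exactly what rules out large balanced patterns.

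\textbf{The missing idea is that nothing has to be embedded.} On an almost-complete graph the clique search problem is already hard in a birthday-paradox sense, and this can be lifted directly. The paper proceeds as follows.

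It applies Yao's principle with the uniform distribution. It replaces the protocol by a $0.9$-subcube-like one of cost $d=O(|\Pi|/\epsilon)$ and error $2\epsilon$ via \cref{thm:ggjl-cleanup}.

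It then uses the $s$-almost-complete property of \cref{def:neighbor-dense}: for $i\neq j$ and fixed $x_i,y_i,x_j$, at most $s$ values of $y_j$ give a non-edge. This holds a.a.s.\ with $s=\max(2\sqrt n(1-p),9e^2\log(kn))$ by \cref{lemma:probability-concentrates}. This property plays the role of a good gadget: if one half of an endpoint is $0.9$-spread, the pair is a non-edge with probability at most $s/|\Sigma|^{0.9}=s/n^{0.45}$.

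If at the leaf one of the two output blocks has no fixed bits, this bound applies directly. Otherwise one union-bounds over the at most $d^2$ pairs of blocks fixed at the leaf. Each pair is examined at the first node where both blocks have become fixed. Only one player spoke last, so on the other side one of the two blocks is still spread.

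Hence the success probability is at most $2d^2s/n^{0.45}$, which forces $d=\Omega(\sqrt{n^{0.45}/s})$. This is $N^{\Omega(1)}$ for $p=1-n^{-1/3}$ and $k=n$.

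Your closing remark about arguing directly with min-entropy on structured rectangles points in this direction. Without the almost-complete property and the pairwise union bound, however, it yields no bound.
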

Randomized communication complexity remains the most versatile tool for obtaining lower bounds for \emph{treelike} cutting planes. Most of the known lower bounds are obtained via this connection~\cite{IPU94Treelike,BPS07LS,HuynhN12,GP18CommunicationLowerBounds,IR21}. 
Treelike cutting planes is a natural next candidate for proving clique lower bounds in the \emph{unary} encoding.

There is, however, a growing evidence that the randomized communication model is too strong to give strong lower bounds for \kClique. If, for example, the density of \erdosrenyi graph is constant, then Alice and Bob can randomly sample a pair of indices of their nodes and verify whether or not the corresponding pair is connected with an edge. Thus, we can only hope for a non-trivial lower bound for the dense setting of \kClique. 
Another weakness of the communication approach to \kClique lower bounds was established by Jukna~\cite{Jukna12}, who observed that for a \emph{bipartite} version of the unary \kClique, the corresponding problem has a $O(\log n)$-cost \emph{deterministic} communication protocol.

Given these drawbacks, we view \cref{thm:Comm-informal} as an indication of some hope for the communication approach to \kClique lower bounds: at least in the dense binary setting the barriers above do not apply. Moreover, \cref{thm:Comm-informal} (qualitatively) generalizes the main theorem in \cite{YZ24Comm} with a much simpler proof.

\subsection{Discussion}

Our contributions can be viewed
as matching the lower bounds known for the \emph{weak} version of  \emph{binary pigeonhole principle} (BPHP) 
with bounds for the binary encoding of \kClique. 
In fact, all  three bounds are obtained by adapting 
the techniques previously used to prove lower bounds for the weak binary pigeonhole principle~\cite{BeameW2025multiparty,BI25bphp}, or by adapting extensions of such techniques~\cite{GoosGJL2025quantum,YZ24Comm}. 
Is there a more general principle behind our results, i.e., is there an explicit reduction from BPHP to average-case \kClique?
We know such a reduction exists in the \emph{worst case}---indeed, the BPHP is a special case of the binary-encoded \kClique---but in the average case such reductions are not known.

One concrete direction 
would be to prove 

lower
bounds on \kClique in the treelike version of $\Th(2)$, the generalization of cutting planes refutations to degree-$2$ inequalities. 
Only the \emph{strong} BPHP is known to be hard for this system \cite{IR21}, and the proof of this fact uses global symmetries of the formula. Therefore, it is unclear how to apply the technique to \kClique for an \erdosrenyi random graph, since the latter is unlikely to have symmetries. 
Does it fail because of the particular technique, or is the strong version of BPHP crucially easier than the weak one? 
\begin{problem}
    Suppose weak BPHP requires exponential tree-like $\Th(2)$ refutation. Show that the binary encoding of $k$-\textsf{Clique} does as well. 
\end{problem}
Another interesting direction to extend our lower bounds is to the \emph{Ramsey principles}~\cite{LPRT17ComplexityRamsey}.  
A graph is \emph{$c$-Ramsey} if it has neither a clique nor an independent set of size $c\log n$. 
The Ramsey principle formulas defined in~\cite{LPRT17ComplexityRamsey} encodes in binary
the claim that a given graph is $c$-Ramsey. For resolution, an asymptotically optimal $n^{\Omega (\log n)}$ length lower bound was proven in \cite{LPRT17ComplexityRamsey}.
We ask the same question for cutting planes and $\ResLin$.
\begin{problem}
    Prove that the Ramsey principles are hard for cutting planes or bounded-depth $\ResLin$.
\end{problem}
In particular, matching the bounds
known for resolution would require 
improving our lower bounds and 
extending them to randomly sampled graphs for a broader range of parameters,
specifically to graphs which are not as dense.

As discussed above, for sparse graphs there is always a \emph{randomized} communication protocol solving \kClique in binary or unary encoding. However, we can weaken the communication model so it still captures tree-like cutting planes. One such model is a \emph{deterministic} communication protocol with a \textsc{Greater Than} oracle, i.e., the players can compare real numbers at unit cost. 

This model is shown to be significantly weaker than randomized communication \cite{CLV19,CHHNPS25,GoosHR25}. Can we still solve \kClique over a sparse graph?
\begin{problem}
    Suppose that $\Pi$ is a deterministic communication protocol with a \textsc{Greater Than} oracle. Can $\Pi$ solve \kClique for \erdosrenyi graphs of constant sparsity with $O(\log n)$ cost?
\end{problem}

\subsection{Techniques}
We give a brief technical overview of our proofs. Notably, all of our techniques are to some extent based on previous work proving lower bounds on BPHP. 

\myparagraph{Cutting Planes.}
\autoref{thm:CP-informal} is proved via a \emph{bottleneck counting} argument \cite{HAKEN1999326}. The approach is based on
an adaptation of this framework to cutting planes due to Sokolov~\cite{Sokolov24RandomCNFs}, that was also subsequently refined by Beame and Whitmeyer~\cite{BeameW2025multiparty} to prove cutting planes $\mathrm{BPHP}$ lower bounds. It leverages a connection between cutting planes proofs and triangle-DAG protocols, saying that small cutting planes refutations of some formula imply small protocols for a related search problem. Informally, the main idea is to then define a partial function $\mu$ that maps inputs of the search problem to nodes of the protocol. Intuitively, $\mu$ measures progress made by the protocol at some step. By giving a lower bound on the size of the domain of $\mu$ and an upper bound on the maximum number of inputs mapped to each node, we establish a size lower bound on the protocol and thus on cutting planes refutations of our starting formula. The twist of our approach is to show that a certain combinatorial property of graphs related to the density of neighborhoods is sufficient to construct such a partial map $\mu$. In particular, dense random graphs asymptotically almost surely exhibit such properties.

\myparagraph{Bounded depth $\ResParity$.}
Recent progress has led to a BPHP lower bound for depth-$m^{1.5-\varepsilon}$ $\ResLin$~\cite{BI25bphp}, 
based on a notion of \emph{closure} for a set of linear forms~\cite{EGI24regular}, and 
the \emph{random walks with restarts} framework~\cite{AI25lifting}.
This framework works by performing a random walk on the proof in a top-down manner. 
If one can bound below the probability that this walk proceeds for many steps without reaching a falsified clause, and assuming that the given refutation is small, some node in the proof must be visited by many random walks, and thus the rank of the linear system at that node must be low.
Using the notion of closure, one can then fix only few variables to satisfy that system, which allows repeating this procedure, walking further down the proof, and eventually establishing a depth lower bound. 

Adapting the framework to binary-encoded clique formulas, finding a falsified clause is the same as identifying a non-edge in the underlying graph. 
Our random walk then mimics a uniformly drawn input while maintaining a partial assignment of the clique members to vertices without non-edges between them.
We ensure this by always staying within the common neighborhood of previously assigned vertices,
and find that the walk can proceed for many steps with sufficient probability. 
Additionally, we use concentrations bounds on the number of common neighbors of a set of vertices to show that, in the right parameter regimes,
after fixing variables, the remaining instance 
is similar enough to a smaller instance of the original problem, thus allowing us to restart the random walk. 
\myparagraph{Randomized communication.} 
\cref{thm:Comm-informal} (qualitatively) generalizes the BPHP lower bound of Yang and Zhang \cite{YZ24Comm}. They used query-to-communication lifting machinery from \cite{GPW20} to essentially reduce the communication lower bound to a corresponding \emph{decision tree} lower bound. The proof in \cite{YZ24Comm} is quite involved due to a white-box use of the lifting machinery.
Later, \cite{GoosGJL2025quantum} proved a structural theorem that allows a black-box proof of the main result in \cite{YZ24Comm}, and this 
theorem was recently extended and used in \cite{RiazanovSSY2025searching} to get a lower bound for the clause-search problem for random $O(\log n)$-CNFs. 
We show how to apply this framework to the clique problem.

\ifthenelse{\boolean{conferenceversion}}{
\subsection{Outline of This Paper}
The rest of this paper is organized as follows. In \autoref{sec:preliminaries} we present some preliminaries and in \autoref{sec:density-properties} we prove that random graphs satisfy some combinatorial properties. In \autoref{sec:binCliqueDagCP}
in \autoref{sec:binCliqueResLin}
in \autoref{sec:binCliqueRandComm}
}{}
\section{Preliminaries}
\label{sec:preliminaries}

We use the symbol $\sqcup$ to denote a disjoint union of sets. We write $\{1, \dots, n\}$ as $[n]$, where $n \in \mathbb{N}^+$.
Let  
$\bm x \sim A$ denote $\bm x$ uniformly sampled from a set $A$. 
We say that an event occurs \emph{asymptotically almost surely (a.a.s)} if it occurs with probability $1-o(1)$ as $n\rightarrow \infty$. We will also need the following special case of the Chernoff bound:
\begin{theorem}[Multiplicative Chernoff bounds]
Let $\bm X_1,\dots,\bm X_n$ be independent $\{0,1\}$-valued random variables, and let
$\bm X = \sum_{i=1}^n \bm X_i$, $ 
\mu = \mathbb{E}[\bm X].
$
Then, for $0\leq \delta\leq 1$,
\begin{align*}
\Pr\left[\bm X \geq (1+\delta)\mu\right] \leq \exp\left(-\frac{\mu \delta^2}{3}\right),
\\
\Pr\left[\bm X \leq (1-\delta)\mu\right] \leq \exp\left(-\frac{\mu \delta^2}{2}\right).
\end{align*}
\end{theorem}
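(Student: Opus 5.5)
The plan is the standard exponential-moment (Chernoff) method, handling the upper and lower tails in parallel. Throughout, write $p_i = \Pr[\bm X_i = 1]$, so that $\mu = \sum_i p_i$.

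\textbf{Upper tail.} Fix $t>0$ and apply Markov's inequality to the nonnegative variable $e^{t\bm X}$:
\[
\Pr[\bm X \geq (1+\delta)\mu] \leq e^{-t(1+\delta)\mu}\,\Exp[e^{t\bm X}].
\]
By independence, $\Exp[e^{t\bm X}] = \prod_i \Exp[e^{t\bm X_i}] = \prod_i \bigl(1+p_i(e^t-1)\bigr)$. Using $1+x\leq e^x$ on each factor gives $\Exp[e^{t\bm X}] \leq \exp\bigl(\mu(e^t-1)\bigr)$. Choosing $t=\ln(1+\delta)$ yields
\[
\Pr[\bm X \geq (1+\delta)\mu] \leq \left(\frac{e^{\delta}}{(1+\delta)^{1+\delta}}\right)^{\mu}.
\]

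\textbf{Lower tail.} Apply Markov's inequality to $e^{-t\bm X}$ with $t>0$. The same two steps (independence, then $1+x\leq e^x$) give the bound $\exp\bigl(\mu(e^{-t}-1)+t(1-\delta)\mu\bigr)$. Choosing $t=-\ln(1-\delta)$, valid for $\delta<1$, gives
\[
\Pr[\bm X \leq (1-\delta)\mu] \leq \left(\frac{e^{-\delta}}{(1-\delta)^{1-\delta}}\right)^{\mu}.
\]
The case $\delta=1$ follows by taking the limit $\delta\to 1$, or directly, since $\Pr[\bm X=0]=\prod_i(1-p_i)\leq e^{-\mu}\leq e^{-\mu/2}$.

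\textbf{Elementary inequalities.} The remaining step, and the only genuinely fiddly one, is to reduce these closed forms to the stated exponents. This amounts to two calculus facts on $[0,1]$:
\begin{enumerate}
\item[(i)] $f(\delta)=\delta-(1+\delta)\ln(1+\delta)+\delta^2/3\leq 0$;
\item[(ii)] $g(\delta)=-\delta-(1-\delta)\ln(1-\delta)+\delta^2/2\leq 0$.
\end{enumerate}

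For (i), note $f(0)=0$ and $f'(\delta)=-\ln(1+\delta)+2\delta/3$. This derivative satisfies $f'(0)=0$, and $f''(\delta)=-1/(1+\delta)+2/3$ is negative on $[0,1/2)$ and positive on $(1/2,1]$. So $f'$ first decreases and then increases, and since $f'(1)=2/3-\ln 2<0$, we get $f'\leq 0$ on $[0,1]$. Hence $f\leq 0$.

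For (ii), $g(0)=0$ and $g'(\delta)=\ln(1-\delta)+\delta\leq 0$ on $[0,1)$, so $g\leq 0$.

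Exponentiating (i) and (ii) and raising to the power $\mu\geq 0$ gives the two claimed bounds. I expect the sign analysis of $f'$ in (i) to be the main point requiring care; everything else is mechanical.
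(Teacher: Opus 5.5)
Your proof is correct. It is the standard exponential-moment argument: Markov's inequality applied to $e^{\pm t\bm X}$, independence, $1+x\le e^x$, the choices $t=\ln(1+\delta)$ and $t=-\ln(1-\delta)$, the two calculus inequalities (both verified correctly), and the separate handling of $\delta=1$. The paper gives no proof of this statement and simply quotes it as a standard special case of the Chernoff bound, so your argument is the usual textbook proof it implicitly relies on.
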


\subsection{Graph Theory}
Let $G$ be a graph over $k$ blocks $V(G)=V_1 \sqcup \dots \sqcup V_k$, with $\lvert V_i \rvert = n$ for all $i\in [k]$.  
In everything that follows, let us assume, without loss of generality, that $n$ is a power of 2. 
For $u\in V(G)$, let $B(u)\in [k]$ denote the index of the block that contains $u$, and for every set of vertices $U \subseteq V(G)$, we define $B(U) \coloneqq \{B(u)\;\mid \; u \in U\}$. For a subset $U\subseteq V(G)$ we use $N^\cap(U,i)$ to denote the \emph{common neighborhood} of $U$ at block $i$. For a set of blocks $I \subset [k]$, we then use $N^\cap(U,I) = \bigcup_{i \in I} N^\cap(U,i)$. For all $i \in [k]$, every $u \in V_i$ can be identified with an integer in $[n]$, and we denote its binary representation by $\mathsf{bin}(u) =u_1\ldots u_{\log(n)}$.

 The natural distribution for which \kClique is conjectured to be hard is an \erdosrenyi distribution. The $p$-biased \erdosrenyi distribution $\calG(n,p)$ is defined by sampling an $n$-vertex graph with each of the $\binom{n}{2}$ edges present with probability $p$ independently of the other edges. The parameter $p$ that conjecturally makes finding a $k$-clique in $\bm{G} \sim \calG(n,p)$ hard is at the threshold $p'$ where for a lower $p<p'$ the graph $\bm G$ does not have a $k$-clique \aas, and for a higher $p>p'$ it does have a $k$-clique \aas. By looking at the expected number of $k$-cliques in $\bm G$, one can conclude that the threshold value is $\Theta(n^{-2/(k-1)})$. 

 We also define the \emph{$k$-partite} version of the same distribution: For a fixed partition of the nodes, $\calG(n, p, k)$ is sampled from by first sampling $\bm G \sim \calG(nk, p)$ and then intersecting $\bm G$ with a complete $k$-partite graph with $n$ nodes in each part. 

\subsection{Proof Complexity}
We next recall some basic notions from proof complexity; see, e.g.,~\cite{Krajicek19ProofComplexity,BN21ProofCplxSAT} for a more thorough exposition.  
A Boolean variable $x$ or its negation $\overline{x}$ is called a \emph{literal},
and a disjunction of literals over pairwise disjoint variables $C= \ell_1 \lor \cdots \lor \ell_k$ is called a \emph{clause}.
A \emph{CNF formula} is a conjunction of clauses $F = C_1 \land \cdots \land C_m$. 
We sometimes call the clauses of $F$ \emph{axioms}, and denote the set of variables occurring in $F$ as $\textrm{Vars}(F)$.

\myparagraph{Cutting Planes.}
The \emph{cutting planes} proof system~\cite{CCT87ComplexityCP} operates on systems of linear inequalities. In its syntactic form, it derives new constraints from previously derived ones by \emph{linear combination} and \emph{division with rounding}. Since our lower bounds also apply to it, we define the stronger \emph{semantic} version of the proof system, which subsumes all usual cutting planes derivation rules. 

Let $Ax\geq b$ be a system of linear inequalities, where $A\in \Z^{m\times n}$ and $b\in \Z^m$.
A \emph{semantic cutting planes refutation} of $Ax\geq b$ is a sequence of linear inequalities 
$\{d_i x\geq c_i\}_{i\in [\prflength]}$, 
where $d_i \in \R^n$ and $c_i\in \R$, such that the following hold: The final inequality is the trivially false $0\geq 1$, and
for every $i \in [\prflength]$, the inequality $d_i x\geq c_i$ is part of the linear system $Ax\geq b$, or there are $j,k<i$ such that $d_i x\geq c_i$ follows from $d_j x\geq c_j$ and $d_k x\geq c_k$ by \emph{semantic deduction}. 
That is, from inequalities $d_j x\geq c_j$ and $d_k x\geq c_k$, one can derive any inequality $d_i x\geq c_i$ for which it holds that any $x \in \{0,1\}^n$ satisfying both $d_j x\geq c_j$ and $d_k x\geq c_k$ satisfies $d_i x\geq c_i$.
The \emph{length} of a semantic cutting planes refutation is the number $\prflength$ of inequalities appearing in the sequence.

The following translation enables us to consider cutting planes refutations of unsatisfiable CNF formulas.  
For every propositional variable $x$, we add the constraints $0\leq x$ and $x\leq 1$. For a clause $C=\bigvee_{i\in I}x_i\lor\bigvee_{j\in J} \overline{x_j}$, we add the inequality $\sum_{i\in I} x_i + \sum_{j\in J} (1-x_j) \geq 1$.

\myparagraph{Resolution over Parities.} 
The \emph{resolution over parities} proof system~\cite{IS20ResLin}, which we refer to as $\ResLin$, operates on linear equations over $\F_2$.
Let $x_1, \dots, x_n$ be variables taking values in $\F_2$. A \emph{linear form} is a polynomial $\sum_{i=1}^n a_ix_i$, where all $a_i\in \F_2$. Taking a linear form $f$ and $a\in \F_2$, we get a \emph{linear equality} $f=a$. 
A disjunction of linear equations $C=\bigvee_{i=1}^m (f_i=a_i)$ is called a \emph{linear clause}. Sometimes, it is convenient to view a linear clause as a negation of a linear system, i.e., $C=\lnot \bigwedge_{i=1}^m (f_i = a_i+1)$. 
Note that a propositional clause $C=\bigvee_{i\in I}x_i\lor\bigvee_{j\in J}x_j$ is a special case of a linear clause: we can write $C=\bigvee_{i\in I} (x_i=1) \lor \bigvee_{j\in J}(x_j=0)$.

Given a CNF formula $F$, a \emph{$\ResLin$ refutation 
of $F$} is a sequence of linear clauses $\{C_i\}_{i\in\prflength}$ ending with the empty clause and where every clause either belongs to $F$ or was derived from two previous clauses using one of the following two derivation rules:  The \emph{resolution rule}
\begin{equation*}
  \AxiomC{$C \lor (x=1)$}
  \AxiomC{$D \lor (x=0)$}
  \BinaryInfC{$C \lor D$}
  \DisplayProof ,
\end{equation*}
or the \emph{semantic weakening rule}, which allows to derive from a linear clause $C$ any linear clause $D$ that semantically follows from it, i.e., such that any $x\in \{0,1\}^n$ satisfying $C$ also satisfies $D$.  

The \emph{length} of a $\ResLin$ refutation is the number $\prflength$ of linear clauses appearing in it. The \emph{depth} $d$ of a $\ResLin$ refutation is the largest number of resolution steps among all the paths from axioms to the empty clause in the refutation.
The proof system \emph{depth-$d$ $\ResLin$} is the subsystem of $\ResLin$ consisting only of refutation of depth at most $d$, where $d$ can depend on the number of variables $n$.

\myparagraph{Shape-DAGs.}
We will also work with the top-down definition of the proof systems defined above, namely, \emph{triangle-} and \emph{affine-DAGs} solving \emph{total search problems}. 
A \emph{total search problem} is defined as a relation $S\subseteq \mathcal{I} \times \mathcal{O}$ over finite sets of inputs $\mathcal{I}$ and outputs $\mathcal{O}$, such that for every input $x\in \mathcal{I}$ there is an output $o\in \mathcal{O}$ with $(x,o)\in S$. Let $S^{-1}(o)= \{x\in \mathcal{I} \mid (x,o) \in S \}$.

For a given unsatisfiable CNF formula $F= C_1\land \dots C_m$, we consider the \emph{falsified clause search problem $\search_{F}$}: Given an assignment $x\in \{0,1\}^n$ to the variables of $F$, output $i\in [m]$, such that the clause $C_i$ is falsified under $x$. In detail, $(x,i)\in \search_F$ if and only if $C_i(x)=0$. 

\begin{definition}[\cite{GGKS20MonotoneCircuit}]
    Let $S\subseteq \mathcal{I}\times \mathcal{O}$ be a total search problem, $\mathcal{F}\subseteq 2^{\mathcal{I}}$. We call the elements of $\mathcal{F}$ \emph{shapes}.
    A \emph{shape-DAG} ($\mathcal{F}$-DAG) solving $S$ is a rooted directed acyclic graph $D$ of fan-out at most $2$, where each node $v\in D$ is labeled by a shape $F_v\in \mathcal{F}$ such that 
    \begin{enumerate}
        \item the root $r\in D$ is labeled by $F_r=\mathcal{I}$;
        \item for a node $v$ with children $u,w$, $F_v\subseteq F_u \cup F_w$; 
        \item and for every leaf $
        \ell$, there is an output $o \in \mathcal{O}$ such that $F_{\ell}\subseteq S^{-1}(o)$.
    \end{enumerate}
    The \emph{depth} of $D$ is the longest root-to-leaf path in $D$; its \emph{length} is the number of nodes in it.
\end{definition}

Given a bipartite domain $X\times Y$, a \emph{triangle} $T\subseteq X \times Y$ is a set which can be written as $T=\{(x,y)\in X\times Y \mid a_T(x)\leq b_T(y)\}$, for some $a_T\colon X \rightarrow \R$, $b_T\colon Y \rightarrow \R$. 
A \emph{triangle-DAG} is a shape-DAG where the shapes are all triangles. To consider triangle-DAGs for the falsified clause search problem, we split the variables of $F$ into two parts and view $\{0,1\}^n$ as the corresponding product. Let $\search_F^{X,Y}\subseteq (X\times Y) \times [m]$ denote $\{((x,y),i) \mid C_i(x,y)=0\}$.
As the following proposition states, lower bounds from cutting planes follow from lower bounds for triangle-DAGs.
\begin{proposition}[\cite{Sokolov17Daglike,HP18Note}]\label{prop:CP-to-triangleDAG}
A semantic cutting planes refutation of a given CNF formula $F$ yields, for any partition of the variables $\textrm{Vars}(F)=X\sqcup Y$, a triangle-DAG solving $\search_F^{X,Y}$ of the same length.
\end{proposition}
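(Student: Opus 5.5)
The plan is to read the refutation top-down: turn the sequence of inequalities into a DAG whose nodes are labeled by the sets of assignments \emph{falsifying} the corresponding inequality. Let $\{d_i x\geq c_i\}_{i\in[\prflength]}$ be the semantic cutting planes refutation of $F$, and fix the partition $\textrm{Vars}(F)=X\sqcup Y$. Write each $d_i x = d_i^X x + d_i^Y y$ according to the partition. We build a DAG $D$ on the same set of $\prflength$ nodes, reversing the direction of derivation edges: node $i$ gets as children the (at most two) premises $j,k$ from which line $i$ was derived. The node of the final inequality $0\geq 1$ is the root. Every axiom line, and every line with no premises, is a leaf.

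Each node $i$ is labeled by the set of inputs violating line $i$:
\[
F_i=\{(x,y)\in X\times Y \mid d_i^X x + d_i^Y y < c_i\}.
\]
This set has to be written in the triangle form $a(x)\le b(y)$. Over the finite Boolean domain the strict inequality $d_i^X x < c_i - d_i^Y y$ takes only finitely many values. We therefore pick some $\varepsilon_i>0$ smaller than every positive gap, so the condition becomes $d_i^X x \le c_i - d_i^Y y-\varepsilon_i$. Setting $a_i(x)=d_i^X x$ and $b_i(y)=c_i-d_i^Y y-\varepsilon_i$ shows that $F_i$ is a triangle.

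Next we check the three conditions of the definition.
\begin{enumerate}
\item The root is labeled by $0\geq 1$, which every assignment violates, so its label is all of $X\times Y$.
\item Suppose line $i$ was derived from lines $j$ and $k$ by semantic deduction. Then every Boolean point satisfying lines $j$ and $k$ satisfies line $i$. Contrapositively, any point violating line $i$ violates line $j$ or line $k$, so $F_i\subseteq F_j\cup F_k$.
\item A leaf is an axiom. If it is the translation of a clause $C_m$, then violating the inequality means exactly falsifying $C_m$, so $F_\ell\subseteq \search_F^{-1}(m)$. If it is a constraint $0\le x$ or $x\le 1$, it is never violated on Boolean inputs, so $F_\ell=\emptyset$ and the condition holds trivially.
\end{enumerate}

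Finally, the fan-out is at most $2$ because each derivation uses at most two premises. Lines that are not reachable from the root can be discarded, which only shortens the DAG, so the length is at most $\prflength$. The main subtlety is the step where the strict inequality is replaced by a non-strict one to obtain a genuine triangle; the finiteness of the Boolean domain is what makes the choice of $\varepsilon_i$ possible.
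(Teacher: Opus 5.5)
Your argument is correct and is the standard proof from the cited works; the paper itself only cites the result and does not prove it. The construction is: label each line by its set of falsifying assignments, observe that this set is a triangle, and obtain the covering condition from semantic soundness by contraposition. Your $\varepsilon_i$ step correctly matches the strict violation inequality to the paper's non-strict definition of triangles, and pruning unreachable lines gives a rooted DAG of length at most $L$, which is all the lower-bound applications need.
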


An \emph{affine subspace-DAG} is a shape-DAG where the shapes are affine subspaces of the domain $\F_2^n$. An \emph{affine-DAG} is an affine subspace-DAG $D$, where  every non-leaf node $v\in D$ has two outgoing edges, $(v,w)$ and $(v,w')$, labeled $P_v=0$ and $P_v=1$, respectively. In addition, for the linear systems $\Psi_v,\Psi_w,\Psi_{w'}$ corresponding to the affine subspaces labeling $v,w,w'$, it holds that $\Psi_w$ is implied by $\Psi_v \land (P_v=0)$ and $\Psi_{w'}$ is implied by $\Psi_v \land (P_v=1)$. 
Similar to \autoref{prop:CP-to-triangleDAG}, lower bounds for affine-DAGs can be used to derive lower bounds for $\ResLin$, as stated formally below.
\begin{proposition}[\cite{EGI24regular}]\label{prop:ResLin-to-affineDAG}
    A $\ResLin$ refutation of a given CNF formula $F$ yields an affine-DAG solving $\search_F$ of the same length and depth.
\end{proposition}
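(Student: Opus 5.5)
We prove \cref{prop:ResLin-to-affineDAG} by reversing the refutation and reading it top-down. Let $C_1,\dots,C_\prflength$ be the $\ResLin$ refutation of $F$, with $C_\prflength$ the empty clause. For each $i$, write $C_i$ as the negation of a linear system, $C_i=\lnot\Psi_i$; the set $\{x\in\F_2^n : \Psi_i(x)\}$ is then an affine subspace of $\F_2^n$ (possibly empty). The DAG $D$ has one node $v_i$ per clause $C_i$, labeled by the affine subspace $A_i=\{x : C_i(x)=0\}$, and an edge from $v_i$ to each premise used to derive $C_i$. The root is $v_\prflength$. Since $C_\prflength$ is empty, $A_\prflength=\F_2^n$, as condition 1 requires. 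Leaves are axioms $C_i\in F$, and $A_i=C_i^{-1}(0)\subseteq \search_F^{-1}(i)$, which gives condition 3. Length is preserved trivially. Depth is preserved because root-to-leaf paths in $D$ are exactly the axiom-to-empty-clause paths in the refutation, provided depth is counted in both objects by the same steps (see below).

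Next, each inference rule gives the local condition. For a resolution step $C\lor D$ from $C\lor(x=1)$ and $D\lor(x=0)$, I would take the more general form in which $x$ is replaced by an arbitrary linear form $f$, so $P_v=f$. The falsifying set of $C\lor D$ is $\Psi_C\land\Psi_D$. Adding $f=0$ to it implies the falsifying system $\Psi_C\land(f=0)$ of $C\lor(f=1)$, and adding $f=1$ implies the falsifying system of $D\lor(f=0)$. So the child reached by the edge $P_v=0$ is the premise $C\lor(f=1)$, and the edge $P_v=1$ leads to the other premise. This is exactly the implication condition in the definition of an affine-DAG, and it also implies $A_v\subseteq A_u\cup A_w$.

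For a weakening step $D$ from $C$, we have $D^{-1}(0)\subseteq C^{-1}(0)$, so $\Psi_D$ implies $\Psi_C$. This node has a single child. Since the definition requires two outgoing edges, I would pick any query, e.g.\ $P_v=0$ (the constant form) or any variable, and send both edges to the premise. Both implication conditions then hold, since $\Psi_D\land(P_v=b)$ implies $\Psi_C$.

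The main obstacle is bookkeeping rather than mathematics, and it concerns depth. Weakening nodes become query nodes, so if affine-DAG depth counts every edge, weakening steps would inflate the depth. There are two fixes. One is to contract weakening chains: each resolution node points directly to the nearest resolution or axiom ancestor, and semantic implication is transitive, so the conditions still hold. Chains ending at an axiom are contracted by weakening the axiom's subspace, which still satisfies the leaf condition. After contraction, every remaining edge corresponds to a resolution step, so depth equals the refutation's depth, and length can only decrease. A second subtlety is degenerate nodes whose subspace is empty, i.e.\ $\Psi_i$ unsatisfiable. These are harmless: the leaf condition is vacuous and every implication from an inconsistent system holds.
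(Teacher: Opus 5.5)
Your proof is correct and is essentially the standard argument behind the result cited from \cite{EGI24regular}: reverse the refutation, label each linear clause by its falsifying affine subspace, turn a resolution step on $f$ into a query to $f$, and remove weakening steps by contraction. That contraction is what makes the DAG depth agree with the convention that $\ResLin$ depth counts only resolution steps. Two minor details are left implicit, each a one-line fix: discard clauses not used to derive the empty clause so the DAG is rooted, and, if the empty clause itself comes from a weakening of some unsatisfiable clause $C$, take the nearest resolution or axiom node as the root (its subspace is then $\F_2^n$).
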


\subsection{Binary Encoding of Clique Formula}
\label{sec:bin-encoding}
We encode the claim that a graph $G$ has a clique of size $k$. We use the so-called \emph{binary encoding}~\cite{LPRT17ComplexityRamsey,DGGM24Binary}, where for every $i\in [k]$, there are $\log \setsize{V}$ variables to point out the $i$th member of a purported $k$-clique.
For a propositional variable $x$, we use the notation $(x \neq 0)$ to denote the literal $x$, and $(x \neq 1)$ to denote the literal $\overline{x}$.

Assume that the vertices $v_i$, $v_j$ and $v$ are represented as $v_{i,1}, \dots, v_{i, \log \setsize{V}}$, 
$v_{j,1}, \dots, v_{j, \log \setsize{V}}$, and
$v_{1}, \dots, v_{\log \setsize{V}}$ in binary, respectively.
Let $x_{i,a}$, where $i\in [k]$ and $a \in [\log \setsize{V}]$, be propositional variables. The formula $\binCliqueGk$ consists of the clauses
\begin{align}
    \label{eq:binary-encoding1}
    & \bigvee_{a=1}^{\log \setsize{V}} (x_{i,a}\neq v_{i,a}) \lor  \bigvee_{a=1}^{\log \setsize{V}} (x_{j,a}\neq v_{j,a}), 
    & \qquad \forall v_i\neq v_j\in V, i\neq j \in [k]: \{v_i,v_j\}\notin E, \\
    \label{eq:binary-encoding2}
    & \bigvee_{a=1}^{\log \setsize{V}} (x_{i,a}\neq v_{a}) \lor  \bigvee_{a=1}^{\log \setsize{V}} (x_{j,a}\neq v_{a}), 
    & \qquad \forall v\in V, i\neq j \in [k].
\end{align}
The clauses \eqref{eq:binary-encoding1} are called \emph{edge axioms}, and the clauses \eqref{eq:binary-encoding2} are called \emph{functionality axioms}.
The edge axioms encode that two non-neighbors are not simultaneously chosen as clique members, while the functionality axioms ensure that all clique members are distinct.

\myparagraph{Block encoding.} 
We next consider a version of the binary clique formula, where cliques are required to have a ``block-respecting'' structure. 
That is, we partition the vertices of $G$ into $k$ \emph{blocks} of equal size $V=V_1 \sqcup \dots \sqcup V_k$.
We then encode the claim that $G$ has a $k$-clique with one vertex from every block. Such a clique is also called \emph{transversal}.

Assume that $v_i$ is represented as 
$v_{i,1}, \dots, v_{i, \log n}$ and $v_j$ is represented as 
$v_{j,1}, \dots, v_{j, \log n}$. Let $x_{i,a}$, where $i\in [k]$ and $a \in [\log n]$, be propositional variables. Then the formula $\binCliqueBlock{G}{k}$ consists of the clauses
\begin{align}
    \bigvee_{a=1}^{\log n} (x_{i,a}\neq v_{i,a}) \lor  \bigvee_{a=1}^{\log n} (x_{j,a}\neq v_{j,a}), 
    \qquad \forall v_i\in V_i, v_j \in V_j: \{v_i,v_j\}\notin E.
    \label{eq:block-encoding}
\end{align}
The clauses \eqref{eq:block-encoding} are edge axioms, and again, encode that for $(v_i,v_j)\notin E$, ``$v_i$ is not the clique member from block $i$ or $v_j$ is not the clique member from block $j$''.
Note that in the block encoding, we do not need functionality axioms, since the clique members are from distinct blocks by definition.

The following lemma shows that lower bounds for $\binCliqueBlockGk$ imply lower bounds for $\binCliqueGk$. Therefore, for the rest of the paper, we will only consider the block encoding.
\begin{lemma}[\cite{BIS07IndependentSets}]\label{lem:blockToBinary}
    Let $k\in \mathbb{N}^+$ and $G$ be a graph. 
    If there is a cutting planes (respectively, $\ResLin$) refutation of $\binCliqueGk$ of length $L$ and depth $d$, then there is a cutting planes (respectively, $\ResLin$) refutation of $\binCliqueBlockGk$ of length at most $L$ and depth at most~$d$. 
\end{lemma}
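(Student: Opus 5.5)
We prove the lemma by a restriction argument. We fix the "block bits" of each clique member so that the $i$th purported clique member is forced to lie in $V_i$. We then check that this restriction turns $\binCliqueGk$ into (a formula whose every surviving clause is) a clause of $\binCliqueBlockGk$. Finally, we use that both proof systems are closed under restrictions without increasing length or depth.

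\textbf{Step 1: indexing and restriction.} Let $V=V_1\sqcup\dots\sqcup V_k$ with $\setsize{V_i}=n$. We identify the $u$th vertex of $V_i$ ($u\in[n]$) with the index $(i-1)n+(u-1)$. Since $n$ is a power of $2$, the binary representation of this index splits into a low part of $\log n$ bits equal to $\mathsf{bin}(u)$ and a high part (the remaining $\lceil\log k\rceil$ bits) equal to the binary representation of $i-1$. We let $\rho$ be the partial assignment that, for every $i\in[k]$, sets the high bits of $x_{i,\cdot}$ to the binary representation of $i-1$ and leaves the low bits free. We rename the low bits of member $i$ as the variables $x_{i,1},\dots,x_{i,\log n}$ of the block formula.

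\textbf{Step 2: $\binCliqueGk\restriction_\rho$.} Consider a functionality axiom for $v$ and $i\neq j$. The high bits of $v$ cannot agree with both $i-1$ and $j-1$, so $\rho$ satisfies one of its literals and the clause disappears. Next consider an edge axiom for $v_i,v_j$ and $i\neq j$. If $v_i\notin V_i$ or $v_j\notin V_j$, then some high-bit literal is satisfied by $\rho$, and the clause disappears. This also covers indices not corresponding to vertices, if such indices are present in the encoding. Otherwise $v_i\in V_i$, $v_j\in V_j$ with $\{v_i,v_j\}\notin E$. All high-bit literals are falsified, and what remains is exactly the block edge axiom \eqref{eq:block-encoding} for this pair. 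Hence every clause of $\binCliqueGk\restriction_\rho$ is either removed or is an axiom of $\binCliqueBlockGk$.

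\textbf{Step 3: restrictions of proofs.} We apply $\rho$ to every line of the given refutation. For semantic cutting planes, substituting constants into an inequality gives an inequality over the free variables. Each axiom becomes either trivially true (for example $0\le 1$ or a satisfied clause inequality) or a block axiom. Semantic deduction is preserved, since any assignment to the free variables extended by $\rho$ is a full assignment. Trivially true lines can be derived by semantic deduction from anything, or simply replaced by earlier lines. Thus the length does not increase; for cutting planes there is no depth measure in the statement to track.

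For $\ResLin$, a linear clause restricted by $\rho$ is again a linear clause (possibly the constant true clause). Semantic weakening steps stay valid. A resolution step on a free variable stays a resolution step. A resolution step on a variable $x$ fixed by $\rho$ makes one premise true, while the other premise restricts to $C$ (or $D$). The restricted conclusion $C\lor D$ then follows from that premise by semantic weakening, so it is no longer a resolution step, and the number of resolution steps on any path does not increase. The empty clause stays empty, so the result is a refutation of $\binCliqueBlockGk$ of length at most $L$ and depth at most $d$.

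I expect the only delicate point to be this depth bookkeeping, which the weakening replacement handles, together with making the index encoding align so that the low bits are exactly the block encoding; the choice of $n$ as a power of two guarantees the latter.
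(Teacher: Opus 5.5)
Your proposal is correct and follows the same route as the paper: obtain $\binCliqueBlockGk$ from $\binCliqueGk$ by a restriction that fixes each member's block bits, then use that semantic cutting planes and $\ResLin$ are closed under restrictions. The paper only sketches this in two lines, so your write-up just fills in the details. One small correction: the statement does bound depth for cutting planes as well, contrary to your remark, but your line-by-line transformation (dropping or replacing tautological lines) never lengthens a path in the proof DAG, so that case is covered by the same argument.
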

While the statement in~\cite{BIS07IndependentSets} is for the unary encoding of clique formulas and the resolution proof system, its proof can be modified to obtain the lemma above. 
Indeed, 
one can obtain $\binCliqueBlockGk$ from $\binCliqueGk$ by applying a random restriction, and both cutting planes and $\ResLin
$ are closed under random restrictions.

\subsection{Communication Complexity}
In this paper, we consider randomized (public coin)
two party
number-in-hand communication complexity. 
We quickly define the model and refer to \cite{KN97CommunicationCplx,RY20CommunicationCplx} for details.

Two players, \emph{Alice} and \emph{Bob}, 
are given inputs $x\in X$ and $y\in Y$, respectively. 
To solve a search problem $S\subseteq (X \times Y) \times \mathcal{O}$, they need to find $o\in \mathcal{O}$ such that $((x,y),o)\in S$, 
while communicating as few bits as possible. 
A \emph{deterministic communication protocol} $\protocolPi$ for $S$ is a rooted binary tree as follows. Every internal node $v\in\protocolPi$ determines who is going to speak, and the spoken bit, which is a function of $v$ and the input $x$ or $y$ ($x$ if Alice speaks, $y$ otherwise), determines which child of $v$ the computation moves to. Every leaf $\ell \in \protocolPi$ is labeled by $o_{\ell} \in \mathcal{O}$, such that $((x,y),o_{\ell})\in S$.
We define a \emph{randomized communication protocol} $\protocolPiRand$ for $S$ as a distribution over deterministic communication protocols, such that the computation is correct with constant probability.
The \emph{cost} of a deterministic protocol $\commCost{\protocolPi}$ is the maximum, over all inputs, of the number of bits communicated. 
The \emph{cost} of a randomized protocol $\commCost{\protocolPiRand}$ is the maximum cost among the deterministic protocols in the distribution.

We study the communication complexity of $\search^{X,Y}_{\binCliqueGk}$, where we split the variables of $\binCliqueGk$ such that Alice and Bob each get half of the bits from every clique member, i.e., $X=Y=(\{0,1\}^{\log \setsize{V}/2})^k$, and denote this problem by $\searchComm{G}$. 
As in the proof complexity setting, a version of \autoref{lem:blockToBinary} applies and we may prove lower bounds for the block encoding, i.e., for $\binCliqueBlockComm{G}{k}$. 

Furthermore, we will make use of the characterization of randomized communication cost in terms of distributional communication complexity~\cite{yao83}: To prove a lower bound on the randomized communication cost, it is enough to prove a lower bounds on deterministic protocols with constant error probability for \emph{any} input distribution. 
In particular, we will prove a lower bound for the uniform distribution.
\section{Density Properties of Random Graphs}
\label{sec:density-properties}

In the subsequent sections, we prove hardness results for graphs satisfying some combinatorial property related to the density of their neighbor sets.  
In this section, we define the properties we consider and \ifthenelse{\boolean{conferenceversion}}
{verify}{prove} that random graphs satisfy them.

\begin{definition}\label{def:neighbor-dense}

Given $s,k,n > 0$, we say a graph $G$ over $k$ blocks of size $n$ is \emph{$s$-\ndense} if for every $i,j\in [k]$, $i\neq j$,  
and $x_i,y_i, x_j \in \Sigma$, where $\Sigma = \{0,1\}^{\log (n)/2}$, there are at most $s$ different $y_j$ such that there is no edge between $(x_i,y_i)$ and $(x_j,y_j)$ in $G$.
\end{definition}
 \ifthenelse{\boolean{conferenceversion}}{Random}{We show that random} dense graphs \aas satisfy this property for an appropriate value of $s$. 

\begin{lemma}
    \label{lemma:probability-concentrates}
    For $p \in [0,1]$, if $\random{G} \sim  \calG(n,p,k)$, then, \aas, $\random{G}$ is $\max(2\sqrt{n}(1-p),9e^2\log(kn))$-\ndense.
\end{lemma}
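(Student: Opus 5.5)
Fix blocks $i\neq j$ and strings $x_i,y_i,x_j\in\Sigma$; this fixes a vertex $u=(x_i,y_i)\in V_i$ and a set $S=\{(x_j,y_j): y_j\in\Sigma\}\subseteq V_j$ of $|\Sigma|=\sqrt{n}$ vertices. In $\random{G}\sim\calG(n,p,k)$ the $\sqrt n$ potential edges between $u$ and the members of $S$ are present independently, each with probability $p$. Hence the number $\bm X$ of $y_j$ for which $\{u,(x_j,y_j)\}$ is a non-edge is a sum of $\sqrt n$ independent Bernoulli variables with mean $\mu=\sqrt n(1-p)$. The plan is to show $\Pr[\bm X> s]$ is tiny for $s=\max(2\sqrt n(1-p),9e^2\log(kn))$, and then take a union bound over all choices of $(i,j,x_i,y_i,x_j)$. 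There are at most $k^2\cdot n\cdot\sqrt n\le (kn)^{2}$ such choices, so it suffices to get $\Pr[\bm X>s]\le (kn)^{-3}$, say.

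I would split into two regimes. If $\mu\ge 9\log(kn)$ (roughly; the precise constant will come out of the calculation), then the multiplicative Chernoff bound with $\delta=1$ gives $\Pr[\bm X\ge 2\mu]\le \exp(-\mu/3)\le (kn)^{-3}$, and $2\mu\le s$. Otherwise $\mu$ is small and the term $9e^2\log(kn)$ dominates. The Chernoff form stated in the preliminaries only covers $\delta\le1$, so I would instead use the direct tail estimate
\[
\Pr[\bm X\ge t]\le \binom{\sqrt n}{t}(1-p)^t\le \left(\frac{e\mu}{t}\right)^t .
\]
This bound comes from a union bound over $t$-subsets of $S$ being all non-edges. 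With $t=s\ge 9e^2\log(kn)$ and, in this regime, $\mu< 9\log(kn)\le t/e^2$, we get $e\mu/t\le e^{-1}$. Hence $\Pr[\bm X\ge t]\le e^{-t}\le (kn)^{-9e^2}$, which is far more than enough. Actually this simple binomial bound alone handles every case where $s\ge e^2\mu$, so the Chernoff case is only needed when $2\mu\ge 9e^2\log(kn)$. There $\mu$ is large and $\exp(-\mu/3)$ is negligible.

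Combining the two regimes with the union bound, the failure probability is at most $(kn)^2\cdot (kn)^{-3}=o(1)$, so \aas no triple violates the bound and $\random{G}$ is $s$-\ndense. The main (mild) obstacle is matching the constants at the boundary between the two regimes, so that whichever term attains the maximum in $s$ yields a tail of $(kn)^{-3}$ or smaller. I would handle it by casing on which of $2\mu$ and $9e^2\log(kn)$ is larger, as described above. The large constant $9e^2$ gives ample slack in the small-$\mu$ case, and in the large-$\mu$ case $\mu\ge \tfrac{9e^2}{2}\log(kn)$ makes $\exp(-\mu/3)$ smaller than any fixed inverse polynomial needed.
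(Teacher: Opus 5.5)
Your main argument (the split in your second paragraph) is correct and is essentially the paper's proof. It uses the same threshold $\sqrt n(1-p)=9\log(kn)$, Chernoff with $\delta=1$ at level $2\mu\le s$ above it, the estimate $\binom{\sqrt n}{t}(1-p)^t\le (e\mu/t)^t\le e^{-t}$ below it, and a union bound over at most $k^2n^{3/2}$ choices. However, drop the later aside and the closing plan to case on which term of the maximum is larger. For $9\log(kn)\le\mu<\tfrac{9e^2}{2}\log(kn)$ one has $s=9e^2\log(kn)$, yet $e\mu/s$ can exceed $1$ (up to $e/2$), so the binomial estimate is vacuous there. That range must be handled by the Chernoff case, exactly as in your original split on $\mu$ versus $9\log(kn)$.
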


\ifthenelse{\boolean{conferenceversion}}
{}{
\begin{proof}
    For fixed 
    $i,j\in [k]$
    and fixed $x_i$, $y_i$, and $x_j$, let $\random{Z}$ be the random variable denoting number of $y_j$ such that there is no edge between $(x_i,y_i)$ and $(x_j,y_j)$. First, suppose that $p\leq1-\frac{9\log(nk)}{\sqrt n}$. Chernoff's inequality implies that 
    \begin{equation*}
        \Pr[\random{Z}\geq2\sqrt{n}(1-p)] \leq \exp\left ( -\frac{\sqrt{n}(1-p)}{3}\right)\,. 
    \end{equation*}

    Taking a union bound over every choice of $i,j$ and of $x_i,y_i$ and $x_j$,
    we find that the probability of a graph not satisfying the desired property is bounded by 
    \begin{align*}
        \binom{k}{2}\sqrt{n}^3 \cdot 
        \Pr[\random{Z}\geq 
        2\sqrt{n}(1-p)] &\leq k^2 n^{3/2} \cdot \exp\left (- \frac{\sqrt{n}(1-p)}{3}\right)
        \\
        &\leq k^2n^{3/2}\cdot \exp(-3\log(nk))\ll \frac{1}{nk}\,. 
    \end{align*}
    Now, we consider the case where $p > 1-\frac{9\log(nk)}{\sqrt n}$. Then a naive union bound gives
    \begin{align*}
         \binom{k}{2}\sqrt{n}^3 \cdot 
         \Pr[\random{Z}\geq 
         9e^2\log(kn)] &\leq 
         \binom{k}{2}\sqrt{n}^3 
         \binom{\sqrt n}{9e^2\log(kn)}
         (1-p)^{9e^2\log(kn)}\\
         &\leq \binom{k}{2}\sqrt{n}^3 
         \left(\frac{\sqrt n}
         {9e\log(kn)}\right)^{9e^2\log(kn)}
         \left(\frac{9\log(kn)}
         {\sqrt n}\right)^{9e^2\log(kn)}\\
         &\leq k^2 \sqrt{n}^3 \frac{1}{(kn)^{9e^2}}\,. 
    \end{align*}
    This completes the proof of the lemma.
\end{proof}
}

The following property, which appeared in \cite{dRPR23UnarySA}, is about the common neighborhood of small subsets of vertices. Similar definitions have also appeared in~\cite{BIS07IndependentSets, BGL13ParameterizedDPLL, ABdRLNR21Clique}.
\begin{definition}\label{def:common-neighbor}
    Given $\alpha,\beta,R, k,n > 0$, 
    we say a graph $G$ over $k$ blocks of size $n$ has \emph{$(\alpha,\beta,\CNbound)$-bounded} common neighborhoods if every $S\subseteq V(G)$ of size at most $\CNbound$ and every block $i \in [k]\setminus B(S)$ satisfies
    $$N^\cap(S,i)\in (1\pm \beta)\alpha^{|S|}n\,.$$
\end{definition}

For random graphs, this property expresses that the size of common neighborhoods of small sets behaves approximately as expected.

\begin{lemma}
    \label{lemma:neigbor-concentrates}
    For $0\leq\delta <1$, $p\geq 1- n^{-\delta}$, $\beta \geq \sqrt{\frac{18\log(kn)}{n^{1-\delta}}}$, if $\random{G} \sim  \calG(n,p,k)$, then, \aas, $\random{G}$ has $(p,\beta,n^\delta)$-bounded common neighbors.
\end{lemma}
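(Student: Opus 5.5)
The plan is a Chernoff bound for one fixed set and block, followed by a union bound over all sets $S$ of size at most $n^\delta$ and all blocks. Fix $S\subseteq V(G)$ with $|S|=s\le n^\delta$ and a block $i\in[k]\setminus B(S)$. For each $v\in V_i$ the edges from $v$ to the members of $S$ are $s$ distinct potential edges, and each is present in $\calG(n,p,k)$ independently with probability $p$, since $v\notin$ the blocks of $S$. Hence the indicator that $v\in N^\cap(S,i)$ has probability $p^{s}$. These indicators are independent across different $v\in V_i$, because they involve disjoint sets of edges. So $|N^\cap(S,i)|$ is a sum of $n$ independent Bernoulli variables with mean $\mu=p^{s}n$.

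Next, lower bound $\mu$. Since $p\ge 1-n^{-\delta}$ and $s\le n^\delta$, we have $p^{s}\ge (1-n^{-\delta})^{n^\delta}\ge 1/4$ for $n$ large. Actually $\ge e^{-1}(1-o(1))$, and either bound suffices, giving $\mu\ge n/4$. Applying both multiplicative Chernoff bounds with deviation $\beta$ (we may assume $\beta\le 1$, otherwise the lower tail is trivial and the upper tail is handled with $\delta=1$), the probability that $|N^\cap(S,i)|\notin(1\pm\beta)\mu$ is at most $2\exp(-\mu\beta^2/3)\le 2\exp(-n\beta^2/12)$.

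The union bound then runs over at most $k\cdot\sum_{s\le n^\delta}\binom{kn}{s}\le k\,(kn)^{n^\delta+1}=\exp\bigl((n^\delta+1)\log(kn)+\log k\bigr)$ pairs $(S,i)$. We need
\[
\frac{n\beta^2}{12} \;\ge\; (n^\delta+2)\log(kn)+\omega(1).
\]
The hypothesis $\beta^2\ge 18\log(kn)/n^{1-\delta}$ gives $n\beta^2\ge 18\,n^\delta\log(kn)$. This yields only $1.5\,n^\delta\log(kn)$ after dividing by $12$, which beats $(n^\delta+2)\log(kn)$ by a margin of $\Omega(n^\delta\log(kn))\to\infty$. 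So the failure probability is $o(1)$.

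The main obstacle is getting the constants in this exponent comparison to line up. My crude $\mu\ge n/4$ is what forces the factor $12$. If the margin is too thin, I would sharpen the bound to $\mu\ge (1-o(1))n/e$ and use the tighter Chernoff exponents $\mu\beta^2/3$ and $\mu\beta^2/2$ separately. I would also bound $\sum_s\binom{kn}{s}$ by $(ekn/n^\delta)^{n^\delta}$ to save a factor in the base. The case $\delta=0$, where $|S|\le 1$, is easier and follows the same way.
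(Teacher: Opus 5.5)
Your proof is correct for constant $\delta\in(0,1)$, which is the regime where the lemma is used. It follows the paper's route: a Chernoff bound for a fixed pair $(S,i)$, then a union bound over at most $k\sum_{j\le n^\delta}\binom{kn}{j}$ pairs.

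Your uniform estimate $p^{|S|}\ge(1-n^{-\delta})^{n^\delta}\ge 1/4$ is in fact what makes the constants close. The paper instead uses $p^{|S|}\ge 1-|S|n^{-\delta}$ and then replaces $1-j/n^\delta$ by $1/2$ for all $j\le n^\delta$. That step is invalid for $j>n^\delta/2$, since the exponent vanishes at $j=n^\delta$. Your exponent $n\beta^2/12\ge 1.5\,n^\delta\log(kn)$, by contrast, really does beat the count $(n^\delta+2)\log(kn)$.

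You should retract the closing remark that $\delta=0$ ``follows the same way''. There the hypothesis $p\ge 0$ is vacuous and the statement is actually false. Take $p=1/n$, $k=2$ and the smallest admissible $\beta<1$: every vertex of block $1$ would need exactly one neighbor in block $2$, which fails a.a.s. So your argument needs $\delta>0$, which is what makes the margin $(0.5\,n^\delta-2)\log(kn)$ tend to infinity.
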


\ifthenelse{\boolean{conferenceversion}}
{}{
\begin{proof}
For fixed 
$S \subseteq  V(\random{G})$
and $i \in [k]\setminus B(S)$, let $N_{\random{G}}^\cap(S,i)$ be the random variable which is equal to the number of vertices of block $i$ lying in the common neighborhood of $S$. By Chernoff's inequality,
\begin{align*}
    \Pr_{\random{G}}[\big\lvert\lvert {N_{\random{G}}^\cap(S,i)}\rvert - p^{|S|}n\big \rvert \geq \beta p^{|S|}n]
    &\leq 2\cdot \exp\left ( -\frac{\beta^2p^{|S|} \cdot n}{3}\right)\\
    & \leq 2\cdot \exp\left ( -\frac{\beta^2(1-|S|/n^\delta) \cdot n}{3}\right)\,,
\end{align*}
where we use that $p^{|S|} \geq 1-|S|n^{-\delta}$ for the second inequality. Then, by a union bound over all such sets $S$ and blocks $i$, we conclude that $\random{G}$ does not verify the property with probability at most
\begin{align*}
    2k\cdot \sum_{j=1}^{n^\delta} 
    \binom{kn}{j}\cdot 
    \exp\left ( 
    -\frac{\beta^2(1-j/n^\delta) \cdot n}{3}
    \right) &\leq 2k\cdot \sum_{j=1}^{n^\delta} (kn)^j\cdot 
    \exp\left ( -\frac{\beta^2 n}{6}\right)\\
    & \leq 4k \cdot (kn)^{n^\delta}
    \exp\left( -3n^{\delta} \log (kn) \right) \\
    & \leq 
    \exp\left( 2n^{\delta} \log (kn)-3n^{\delta} \log (kn) \right) \\
    &\leq \exp(-n^{\delta} \log (kn)) . \qedhere
\end{align*}
\end{proof}
}

\section{Lower Bound for Cutting Planes} \label{sec:binCliqueDagCP}

\newcommand{\protocol}{\Pi}
\newcommand{\bw}{\mathsf{bw}}

This section is devoted to proving our first result, stated formally below.
\begin{theorem}
    For any  
    integers $n$ and $k$, and for any real $p \in [0,1]$
    if $\bm G$ is a graph sampled from $\calG(n,p,k)$ then \aas semantic cutting planes requires length 
    \begin{equation*}
        \exp\left({\Omega\left({\min\left(
    \frac{1}{\sqrt{1-p}},
    \,
    \frac{n^{1/4}}{\log(kn)}\right)}\right)}\right)
    \end{equation*}
    to refute $\binCliqueBlockGk$.%
    \label{theorem:cutting-planes-bound}
\end{theorem}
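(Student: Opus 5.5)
We will argue via triangle-DAGs. By \cref{prop:CP-to-triangleDAG}, it suffices to lower bound the length of any triangle-DAG solving $\search^{X,Y}_{\binCliqueBlockGk}$, where each clique member $v_i=(x_i,y_i)\in\Sigma\times\Sigma$ (with $\Sigma=\{0,1\}^{\log(n)/2}$) is split so that Alice holds $x=(x_1,\dots,x_k)$ and Bob holds $y=(y_1,\dots,y_k)$. By \cref{lemma:probability-concentrates}, $\bm G$ is \aas $s$-\ndense with $s=\max(2\sqrt n(1-p),9e^2\log(kn))$. We then prove a deterministic statement: every triangle-DAG for an $s$-\ndense $G$ has length $\exp(\Omega(\sqrt n/s))$. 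Substituting $s$ gives exactly $\exp(\Omega(\min(1/(1-p),\sqrt n/\log(kn))))$, which is at least the claimed bound (this deterministic route may even give more; if the counting loses a square root we land precisely on $\min(1/\sqrt{1-p},n^{1/4}/\log(kn))$).

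The core is a bottleneck-counting argument in the style of Sokolov and Beame--Whitmeyer. The inputs we count are all of $(\Sigma^k)^2$, under the uniform measure. A leaf is labelled by an edge axiom for a non-edge $\{(x_i,y_i),(x_j,y_j)\}$, so every input reaching it has those two members fixed. We follow a root-to-leaf path chosen greedily by the input. Each triangle $T=\{a_T(x)\le b_T(y)\}$ has a combinatorial rectangle structure after sorting. Along the path, we track the set of block indices $i$ that the node's triangle ``depends on''. Formally, for a random input, we call block $i$ \emph{pinned} at node $v$ if resampling $y_i$ uniformly (keeping everything else) exits $F_v$ with probability at least $1/2$.

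We then define the partial map $\mu$. It sends an input $(x,y)$ to the first node on its (canonical) path at which some number $t\approx k/2$, or a constant fraction, of blocks become pinned. This ``middle'' node exists because the root pins nothing and a leaf pins only two blocks, but the leaf's non-edge must be forced. Here almost-completeness enters: given $x_i,y_i,x_j$, only $s$ of the $\sqrt n$ values of $y_j$ give a non-edge. So reaching a leaf forces Bob's side to be concentrated on a $s/\sqrt n$ fraction, and by the fan-out rule $F_v\subseteq F_u\cup F_w$ the pinning has to build up gradually. We may instead measure progress by the product $\prod_i \Pr_{y_i}[\,\cdot\,]$ shrinking below $(s/\sqrt n)$-type thresholds.

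The main obstacle, which we attack next, is the upper bound: any single triangle $T$ can be the $\mu$-image of only an $\exp(-\Omega(\sqrt n/s))$ fraction of inputs. Our plan is to use the triangle structure. Order Alice's inputs by $a_T$ and Bob's by $b_T$. Inputs mapped to $T$ satisfy $a_T(x)\le b_T(y)$, yet have many blocks where resampling $y_i$ typically leaves $T$. We would try to show that the ``boundary'' Bob-values form a chain in which each step excludes a $1-s/\sqrt n$ fraction, mirroring the BPHP argument. That would yield the fraction bound by a Chernoff-type product over $\Omega(\sqrt n/s)$ independent blocks; this is the place where we need $k\gtrsim \sqrt n/s$, or else take the minimum with $k$ implicitly. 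The lower bound on $|\mathrm{dom}\,\mu|$ will be a constant fraction of all inputs. It follows because \emph{every} input eventually reaches a leaf, and uniformly random inputs almost never hit a leaf without first pinning many blocks, since the edge density is high. Dividing the two bounds gives the length lower bound.
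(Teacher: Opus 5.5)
Your reduction (\autoref{prop:CP-to-triangleDAG} plus \autoref{lemma:probability-concentrates}, then a deterministic triangle-DAG bound for $s$-almost-complete graphs) matches the paper. The bottleneck argument itself, however, has a genuine gap: the progress measure you propose cannot work, and the per-node bound is absent.

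\textbf{The progress measure.} A leaf's shape lies inside one non-edge rectangle $R_{u,v}$, so it need pin only the two blocks of $u$ and $v$, and the root pins none. Nothing therefore forces any node on a path to pin $t\approx k/2$ blocks. Your $\mu$ may be undefined on essentially every input, and the claim that random inputs ``almost never hit a leaf without first pinning many blocks'' is asserted, not derived. More fundamentally, pinning is a per-block resampling sensitivity with no inheritance under $F_v\subseteq F_u\cup F_w$, so nothing makes it ``build up gradually''.

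The paper instead uses a covering number of one-sided slices. $\mathsf{bw}(T,x)$ is the least number of blocks mentioned by a family of non-edge rectangles covering $T^x=(\{x\}\times Y)\cap T$, and symmetrically for $y$. It has three properties you need:
\begin{itemize}
\item it is at most $2$ at leaves;
\item it is subadditive along DAG edges, since $T_v^x\subseteq T_u^x\cup T_w^x$;
\item it supports a map $\mu$ on single elements of $X\sqcup Y$, not pairs, sending each to the first node, scanning from the sinks upward, where its width in the not-yet-assigned part exceeds $q$.
\end{itemize}
Subadditivity then guarantees that every still-unassigned input has width at most $2q$ when a node is processed.

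The domain bound is then a direct count. If $x$ is unassigned at the root, $\{x\}\times Y'$ is covered by non-edge rectangles on at most $q$ blocks. Almost-completeness allows at most $\binom q2\sqrt n\,s$ such rectangles, each covering a $1/n$ fraction of $Y$. Hence $|Y'|\le |Y|/2$ once $q\le n^{1/4}/\sqrt s$.

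\textbf{The per-node upper bound.} You leave this open (``we would try to show''), but it is the heart of the proof and needs a specific idea you do not supply. Slices of a triangle are nested. So if $y$ maximizes $|T^y|$, any cover of $T^y$ (on at most $2q$ blocks, by the invariant above) contains every $x$ of the $x$-projection of $T$ in some $X_R$.

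Recursing on the uncovered parts $T\cap(X_R\times(Y\setminus Y_R))$ builds a tree along which each $x$ can follow a path whose edge labels cover $T^x$. Hence $\mathsf{bw}(T,x)$ is at most the block-depth reached. Almost-completeness bounds the branching of this tree:
\begin{itemize}
\item at most $2q^2s\le\sqrt n/4$ children raise the block-depth by one;
\item at most $2q^2\sqrt n\,s\le n/4$ children raise it by two.
\end{itemize}
So there are at most $(\sqrt n/2)^{\ell}$ tree nodes at block-depth $\ell$, each consistent with at most $|X|/\sqrt n^{\,\ell}$ Alice inputs. This gives at most $2^{-q}|X|$ inputs of width $>q$ per DAG node.

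The constraint $q^2s\lesssim\sqrt n$ is exactly where $q=\Theta(n^{1/4}/\sqrt s)$, i.e.\ the square-root loss, comes from. Your stronger $\exp(\Omega(\sqrt n/s))$ and the ``Chernoff product over $\sqrt n/s$ independent blocks'' have no supporting argument. Also, no assumption relating $k$ to $\sqrt n/s$ is needed. For $k\le q$, a union bound over the $\binom k2$ block pairs (each a non-edge with probability at most $s/\sqrt n$ under uniform vertices) shows the formula is satisfiable, so the statement is vacuous there.
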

We quickly verify that \autoref{thm:CP-informal} follows from the theorem stated above. For $p=1-n^{-1/3}$ and $k = n$, we have $\graphSize=nk=n^2$, and the bound in \autoref{theorem:cutting-planes-bound} becomes $2^{\Omega(\graphSize^{1/12})}$, 
   and it applies to $\binCliqueGk$ by \autoref{lem:blockToBinary}. Also note that at this edge density, and with $k=n$, the formula $\binCliqueGk$ is \aas unsatisfiable, meaning the bound is non-trivial.

By \autoref{prop:CP-to-triangleDAG} and \autoref{lemma:probability-concentrates}, we obtain the above theorem by proving
a size lower bound on triangle-DAG protocols solving the corresponding search problem, for the variable partition introduced in the preliminaries and over \ndense graphs.  
\comment{move to prelims :\\
Let $G$ be a graph over $k$ blocks $V(G)=V_1 \sqcup \dots \sqcup V_k$, with $\lvert V_i \rvert = n$ for all $i\in [k]$. Without loss of generality, let us assume that $n$ is a power of 2. 
For $u\in V(G)$, let $B(u)\in [k]$ denote the index of the block that contains $u$, and for every set of vertices $U \subseteq V(G)$, we define $B(U) \coloneqq \{B(u)\;\mid \; u \in U\}$. For all $i \in [k]$, every $u \in V_i$ can be identified with an integer in $[n]$, and we denote its binary representation by $\mathsf{bin}(u) =u_1\ldots u_{\log(n)}$.\\
Let also $X = Y =\Sigma^k$, where $\Sigma=\{0,1\}^{ \log(n)/2}$, such that for all $x = (x_i)_{1\leq i\leq k}\in X$, $y = (y_i)_{1\leq i\leq k}\in Y$,  $(x_i,y_i)$ uniquely identifies a vertex in the $i$-th block. This induces a partition $V_X \sqcup V_Y$ on the variables of $\binCliqueGk$, where the $V_X$ corresponds to the first $ \log(n)/2 $ variables per column, i.e $V_X = \{x_{i,j}\, \mid \, i\in [k],j\in[\log(n)/2]\}$, and $V_Y$ is the latter half, i.e $V_y = \{x_{i,j}\, \mid \, i\in [k], j\in[\log(n)/2+1,\log(n)]\}$. The lower bound for triangle-DAG protocols for this partition of the variables is then the following. 
}
\begin{theorem}
    Let $s,k,n>0$, $q = n^{1/4}/\sqrt{8s}$ and $G$ over $k$ blocks of size $n$ and $s$-\ndense then triangle-DAG protocols solving $\search_{\binCliqueGk}^{X,Y}$ require size $2^{q-2}$.
    \label{theorem:dense-dag-bound}
\end{theorem}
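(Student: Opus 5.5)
The plan is a bottleneck-counting argument in the style of Sokolov~\cite{Sokolov24RandomCNFs} and Beame--Whitmeyer~\cite{BeameW2025multiparty}. Fix a triangle-DAG $D$ solving $\search_{\binCliqueGk}^{X,Y}$, where $X=Y=\Sigma^k$ and $(x_i,y_i)$ names the chosen vertex of block $i$. I would put the uniform distribution on $X\times Y$ and define a partial map $\mu$ from inputs to nodes of $D$. The goal is two bounds:
\begin{enumerate*}[label=(\roman*)]
\item $\mu$ is defined on at least a $1/2$ fraction of inputs;
\item every node receives at most a $2^{-q+1}$ fraction of inputs.
\end{enumerate*}
Together these give $|D|\ge 2^{q-2}$.

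\textbf{Defining $\mu$.} The main structural fact about a triangle $T=\{a(x)\le b(y)\}$ is its cross property: if $(x,y)\notin T$ and $(x',y')\notin T$, then $(x,y')\notin T$ or $(x',y)\notin T$, because $\max(a(x),a(x'))>\min(b(y),b(y'))$. To track progress I would couple an input with ``switched'' variants, obtained by resampling Bob's halves $y_i$ on some block $i$. Following the input down the DAG from the root, I call block $i$ \emph{exposed} at node $v$ if the triangle $F_v$ separates the input from a noticeable fraction of its $y_i$-resamplings. I then set $\mu(x,y)$ to be the first node on a canonical root-to-leaf path for $(x,y)$ at which $q$ blocks are exposed. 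The canonical path takes, at each node, a child whose triangle contains the input; such a child exists by the covering condition $F_v\subseteq F_u\cup F_w$.

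\textbf{Lower bound on the domain.} At a leaf, $F_\ell$ is contained in the set of inputs falsifying one edge axiom for a non-edge between blocks $i$ and $j$. By $s$-almost-completeness, once $x_i,y_i,x_j$ are fixed only $s$ of the $\sqrt n$ values of $y_j$ are consistent with that non-edge, so a leaf must have exposed the relevant block. For a path to reach a leaf having exposed fewer than $q$ blocks, the input must contain a non-edge among those $<q$ blocks. I would bound this by a birthday-style union bound: over pairs of exposed blocks $i,j$, the chance that a uniformly random $y_j$ lands in the $\le s$ bad values is $s/\sqrt n$. This totals about $q^2 s/\sqrt n = 1/8$, using $q=n^{1/4}/\sqrt{8s}$. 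Hence $\mu$ is defined on at least a $7/8\ge 1/2$ fraction of inputs, and this computation is exactly what fixes the value of $q$.

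\textbf{Upper bound per node (main obstacle).} I need that a single triangle $F_v$ with $q$ exposed blocks captures at most a $2^{-q+1}$ fraction of the inputs mapped to $v$. The natural route is an iterative halving argument. Order the exposed blocks; for each one, use the cross property to pair each mapped input with a $y_i$-resampled input lying outside $F_v$ (or outside the previous node's triangle, by minimality of the first-exposure time). This injection loses a factor $2$ per block, as in the Beame--Whitmeyer BPHP argument. The difficulty is making ``exposed'' precise enough that the pairings for different blocks compose without double counting, while still being weak enough that the domain argument above goes through. I would first try exposure defined via the threshold $a_T(x)\le b_T(y)$ compared across the $\sqrt n$ values of $y_i$, measured by the median of $b_T$ over resamplings. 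If that fails to compose, I would fall back on a product-measure entropy (min-entropy) counting of the unexposed coordinates.
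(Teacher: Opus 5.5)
There is a genuine gap. The per-node upper bound is the heart of the theorem, and you do not prove it: both the notion of ``exposed'' and the halving injection are left unspecified. Your scheme also faces a dilemma that breaks one half of the argument whichever reading you choose.

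\begin{itemize}
\item If exposure is measured at the node alone, a leaf whose triangle is a single non-edge rectangle $R_{u,v}$ exposes just two blocks. Reaching a leaf with fewer than $q$ exposed blocks is then the typical situation, and your domain bound collapses.
\item If exposure is accumulated along the canonical path, then whether $\mu(x,y)=v$ is no longer a property of $F_v$. Inputs arrive at $v$ along different paths, and a DAG node can ``forget'' what earlier triangles separated. Nothing about $F_v$ then bounds how many inputs are mapped to it.
\end{itemize}

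Independently of this, your birthday-style bound $q^2s/\sqrt n$ is an adaptive union bound, because the set of exposed blocks depends on the input itself. The step ``a uniformly random $y_j$ lands in the $\le s$ bad values'' presumes $y_j$ is still nearly uniform, conditioned on the history, at the moment block $j$ becomes exposed. That holds for decision trees querying whole blocks. A triangle-DAG, however, can gather a little information about many blocks without ``exposing'' any of them, so no such conditional uniformity is available.

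The missing ingredient, as in the paper, is a static, node-local progress measure assigned bottom-up.
\begin{itemize}
\item \emph{Definition of $\mu$.} Map half-inputs $z\in X\sqcup Y$ rather than pairs, processing nodes from the sinks to the root. Assign $z$ to node $t$ once its block width in $T'_t$ exceeds $q$. Here $T'_t$ is the triangle restricted to still-unassigned halves, and block width is the least number of blocks in a cover of the slice of $z$ by non-edge rectangles $R_{u,v}$.
\item \emph{Invariant.} Since $T'_t\subseteq T'_{t_1}\cup T'_{t_2}$ for the children $t_1,t_2$, every unassigned half has width at most $2q$ at every node.
\item \emph{Domain bound, now non-adaptive.} At the root the unassigned part is a rectangle $X'\times Y'$. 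For a fixed $x\in X'$, a cover of $\{x\}\times Y'$ over a fixed set of at most $q$ blocks uses at most $\binom q2\sqrt n\,s\le n/2$ rectangles. Each contains a $1/n$ fraction of $Y$, so $|Y'|\le|Y|/2$.
\item \emph{Per-node bound.} The triangle property is used in the form of nested slices, which is equivalent to your cross property. Pick $y$ with maximal slice $T^y$; then one width-$\le 2q$ cover of that single slice covers every $x$ occurring in $T$. Recurse on $T\cap(X_R\times(Y\setminus Y_R))$ for each rectangle $R$ of the cover. After contracting out-degree-one nodes, block-depth strictly increases in the resulting tree. A node has at most $2q^2s\le\sqrt n/4$ children one block-depth higher and at most $2q^2\sqrt n\,s\le n/4$ children two higher. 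Hence there are at most $(\sqrt n/2)^d$ nodes at block-depth $d$, each consistent with at most $|X|/\sqrt n^{\,d}$ of the $x$'s. Summing over $d\in\{q+1,q+2\}$ gives the $2^{-q}|X|$ bound.
\end{itemize}

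Note that it is this child count, not the domain step, that forces the constant $8$ in $q=n^{1/4}/\sqrt{8s}$.
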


To see that \autoref{theorem:cutting-planes-bound} follows from \autoref{theorem:dense-dag-bound} recall that, by     \autoref{lemma:probability-concentrates}, $\bm G\sim \calG(n,p,k)$ is, \aas, $s$-\ndense for $s=\max(2\sqrt{n}(1-p),9e^2\log(kn))$. Applying \autoref{theorem:dense-dag-bound} to such an $s$-\ndense graph we have that 
\begin{equation*}
    q = \frac{n^{1/4}}{\sqrt{8s}} = \Omega\left(\frac{n^{1/4}}{\sqrt{\max(\sqrt{n}(1-p),\log(nk))}}\right) = \Omega\left({\min\left(
    \frac{1}{\sqrt{1-p}},
    \,
    \frac{n^{1/4}}{\log(kn)}\right)}\right)
\end{equation*}
from which \autoref{theorem:cutting-planes-bound} follows.
Before we present the proof of this theorem, let us establish some additional notation.
Let $G$ be a graph over $k$ blocks each of size $n$ with $V(G)=V_1 \sqcup \dots \sqcup V_k$. Recall that for $X = Y =\Sigma^k$, where $\Sigma=\{0,1\}^{ \log(n)/2}$, every $x = (x_i)_{1\leq i\leq k}\in X$, $y = (y_i)_{1\leq i\leq k}\in Y$,  $(x_i,y_i)$ uniquely identifies a vertex in the $i$-th block. 
Given $u\in V_i$ and $v \in V_j$, where $i\neq j$, we define $R_{u,v} \subseteq X\times Y$ to be the rectangle 
consisting of inputs $(x,y) \in X\times Y$ verifying $ (x_i,y_i)=u$ and $(x_j,y_j)=v$, that is,
$R_{u,v} \coloneqq \{x \in X \mid x_i = u_1\ldots u_{\log(n)/2}\,\land \, x_j = v_1\ldots v_{\log(n)/2} \} \times  \{y \in Y \mid y_i = u_{\log(n)/2+1}\ldots u_{\log(n)}\, \land \, y_j = v_{\log(n)/2+1}\ldots v_{\log(n)} \}$.
Let $\mathcal{R}$ be the set of all rectangles $R_{u,v}$ where $u$ and $v$ are non-adjacent. Additionally, for any $R \in \mathcal{R}$, we define $B(R)$ to be the set of vertices $\{u,v\}$ such that $R = R_{u,v}$. Note that $\mathcal{R}$ corresponds to the collection of sets of inputs corresponding to pre-images of solutions of $\search_{\binCliqueGk}^{X,Y}$.

For a triangle $T \subseteq X\times Y$ and an input $x \in X$, let $T^x$ denote the slice $(\{x\}\times Y)\cap T$. We define the \emph{block width} \cite{Sokolov24RandomCNFs, BeameW2025multiparty} of $x$ in $T$, denoted $\bw(T,x)$, to be the minimum number of unique blocks mentioned in a covering of $T^x$ by rectangles from $\mathcal{R}$, i.e 
\begin{equation*}
\bw(T,x) = \min_{\substack{S \subseteq \mathcal{R}\\T^x \subseteq \bigcup_{R \in S}S}}\lvert\bigcup_{R \in S} B(R) \rvert \,. 
\end{equation*}
When $T$ is evident from context, we simply call this the block width of $x$. 

\comment{
exhibiting a combinatorial property relative to the size of their neighbor sets.
\duri{probably move this to preliminaries or a separate section, as it is also used in \autoref{sec:binCliqueRandComm}.}

\begin{definition}
    Given $s,k,n > 0$, we say a graph $G$ over $k$ blocks of size $n$ is \emph{$s$-\ndense} if for every $i,j\in [k]$, $i\neq j$,  
    and $x_i,y_i, x_j \in \Sigma$, there are at most $s$ different $y_j$ such that there is no edge between $(x_i,y_i)$ and $(x_j,y_j)$ in $G$.
\end{definition}

In particular, random dense graphs satisfy this property with high probability for parameter $s=2\sqrt n p$.
\begin{lemma}
    \label{lemma:probability-concentrates}
    For $p\geq\frac{9\log(nk)}{\sqrt n}$, if $\random{G} \sim  \calG(n,1-p,k)$, then, with high probability, $\random{G}$ is $2\sqrt{n}p$-\ndense.
\end{lemma}

\begin{proof}
    For fixed 
    $i,j\in [k]$
    and fixed $x_i$, $y_i$, and $x_j$, let $\random{Z}$ be the random variable denoting number of $y_j$ such that there is no edge between $(x_i,y_i)$ and $(x_j,y_j)$. Chernoff's inequality implies that 
    \begin{equation*}
        \Pr[\random{Z}\geq2\sqrt{n}p] \leq \exp\left ( -\frac{\sqrt{n}p}{3}\right)\,. 
    \end{equation*}

    Taking a union bound over every choice of $i,j$ and of $x_i,y_i$ and $x_j$,
    we find that the probability of a graph not satisfying the desired property is bounded by 
    \begin{align*}
        \binom{k}{2}\sqrt{n}^3 \cdot \Pr[\random{Z}\geq 2\sqrt{n}p] &\leq k^2 n^{3/2} \cdot \exp\left (- \frac{\sqrt{n}p}{3}\right)
        \\
        &\leq k^2n^{3/2}\cdot \exp(-3\log(nk))\ll \frac{1}{nk}\,. \qedhere
    \end{align*}    
\end{proof}
}
The proof of the main theorem is based on a bottleneck counting argument following a framework that appeared in previous work \cite{HAKEN1999326, Sokolov24RandomCNFs, BeameW2025multiparty}, adapted to our setting. We construct a partial function $\mu\colon X\sqcup Y\to V(\protocol)$, defined in \autoref{alg:1} for which we prove two properties.
First, we show that a substantial fraction of all inputs must be assigned by $\mu$ to some node.
We then prove that no single node can have many inputs assigned to it and conclude that
there must be many nodes in the protocol. 

Informally, for a threshold $\threshold$, the map $\mu$ is constructed by traversing the nodes of the triangle-DAG in topological order from the leaves to the root and, whenever we encounter a $z\in X\sqcup Y$ such that $\bw(T,z)> \threshold$, we let $\mu$ map $z$ to that node in $\protocol$ and remove $z$ from all nodes not yet considered.

\begin{algorithm2e}[t]
\KwIn{input set $X\times Y$ and triangle-DAG $\protocol$}
\KwOut{partial map $\mu: X\sqcup Y\to V(\protocol)$}
    Let $X':= X$ and $Y' = Y$;
    \tcc{the remaining coordinates not yet assigned by $\mu$}
    \For{$u$ in a topological ordering of $V(\protocol)$, starting from the sinks}{
        \tcc{$T_u$ is the triangle at vertex $u$ in $\protocol$}
        $T'_u \coloneqq T_u \cap(X'\times Y')$\;
        \For{$x \in X'$}{
            \If{$\bw(T_u,x)>\threshold$}{
                Let $\mu(x) \coloneqq u$ and delete $x$ from $X'$\;
            }
           $T'_u \coloneqq T_u \cap(X'\times Y')$\; 
        }
        \For{$y \in Y'$}{
            \If{$\bw(T_u,y)>\threshold$}{
                Let $\mu(y) \coloneqq u$ and delete $y$ from $Y'$\;
            }
           $T'_u \coloneqq T_u \cap(X'\times Y')$\; 
        }  
    }
    \KwSty{return} $\mu$
\caption{Definition of $\mu$}
\label{alg:1}
\end{algorithm2e}

The first property of $\mu$ is straightforward\ifthenelse{\boolean{conferenceversion}}
{}{to prove}.
\begin{lemma}
\label{lemma:many x are assigned}
For $\threshold\leq n^{1/4}/\sqrt{s}$ and $G$ $s$-\ndense, then 
$\mu$ assigns at least $\lvert X \sqcup Y\rvert /4$ elements. 
\end{lemma}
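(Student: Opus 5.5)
Suppose, for contradiction, that fewer than $|X\sqcup Y|/4$ elements are assigned. Then at least half of $X$ or at least half of $Y$ is unassigned; and since $|X|=|Y|$, a union bound shows a uniformly random pair $(x,y)$ has both coordinates unassigned with probability greater than $1/2$. I would fix the set $X^*\times Y^*$ of such pairs, with $|X^*\times Y^*|>|X\times Y|/2$, and then follow the standard bottleneck path argument. Start at the root $r$. Its triangle is all of $X\times Y$, so every slice $T_r^x$ is $\{x\}\times Y$. Covering this slice by non-edge rectangles requires many blocks, because by $s$-almost-completeness a small set of blocks cannot force a non-edge for every $y$. So I would first show that at the root $\bw(T_r,x)>q$ for every $x$; this is the place where $q\le n^{1/4}/\sqrt s$ is used.

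\textbf{Root estimate.} Suppose $T_r^x=\{x\}\times Y$ were covered by rectangles mentioning a block set $I$ with $|I|\le q$. Pick a random $y$, coordinate by coordinate over the blocks in $I$. Fix $x$ and a pair $i\neq j\in I$. Given $y_i$, at most $s$ values $y_j$ out of $\sqrt n$ create a non-edge between $(x_i,y_i)$ and $(x_j,y_j)$. Union bounding over the at most $q^2$ pairs, the probability that the chosen vertices form a non-edge is at most $q^2 s/\sqrt n\le 1$. I would make this strict by taking $q^2\le n^{1/2}/(8s)$, or more carefully, in the lemma's regime $q\le n^{1/4}/\sqrt s$ I would use only the pairs actually appearing in the cover and choose coordinates greedily, so that a clique exists on $I$. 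That gives a $y$ not covered, which is a contradiction. Consequently the root assigns every element of $X'$ that survives to it, and in particular no element survives past the root.

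\textbf{Path argument.} Now take $(x,y)\in X^*\times Y^*$, so neither coordinate is ever assigned. I would walk down from the root, at each node choosing a child whose triangle contains $(x,y)$; such a child exists by the covering condition. The walk ends at a leaf $\ell$, where $T_\ell\subseteq R$ for a single non-edge rectangle $R$. There $\bw(T_\ell,x)\le 2\le q$. But the root must have assigned $x$, as shown above, so $x$ would be assigned, contradicting $x\in X^*$.

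\textbf{A more robust fallback.} Since the root already assigns everything, the lemma is almost trivial, and the topological order in \autoref{alg:1} makes the root come last. The only risk is that every element gets deleted earlier, which is still fine because those elements are assigned. So it suffices to prove the root claim: every $x$ (and symmetrically every $y$) has block width greater than $q$ in the full triangle. Then every element is assigned by the time the root is processed, giving $|X\sqcup Y|\ge|X\sqcup Y|/4$ assigned.

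\textbf{Main obstacle.} The main obstacle is the root block-width bound: constructing $y$ that avoids all non-edges among the at most $q$ blocks, via a greedy or probabilistic choice with $\binom q2 s/\sqrt n<1$. This holds for $q\le n^{1/4}/\sqrt{s}$ up to a constant factor, which I would absorb into the parameter choice.
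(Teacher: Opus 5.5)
\textbf{There is a genuine gap.} Your proof rests on the claim that every $x$ reaching the root has block width greater than $q$ there, because $T_r^x=\{x\}\times Y$, so that the root assigns everything that survives. That holds only if $\mu$ measures block width in the full triangle $T_u$.

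Algorithm~\ref{alg:1} as typeset does write $\mathsf{bw}(T_u,x)$. However, the intended test is on the residual triangle $T'_u=T_u\cap(X'\times Y')$. This is why $T'_u$ is maintained and recomputed after every deletion. It is also how the paper uses the test: its proof of this lemma works with a cover of $\{x\}\times Y'$, the subsequent Claim speaks of block width in $T'_u$, and the per-node lemma runs Algorithm~\ref{alg:2} on $T'$, which only certifies covers of $T'^x$. Under your reading that per-node bound would no longer follow. So the lemma becoming ``almost trivial'' (constant $1$ instead of $1/4$) is a symptom of the misreading, not a shortcut.

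With the residual test, your key step fails. When $x$ is tested at the root, the relevant slice is $\{x\}\times Y'$, and $Y'$ may already have been depleted by assignments at earlier nodes. In that case a surviving $x$ can have small (even zero) block width and is never assigned. Conversely, if $Y'$ is still large at the root, all surviving $x$ are deleted first, $T'_r$ becomes empty before $Y'$ is scanned, and no surviving $y$ is assigned. So ``every element is assigned'' is false in general. Your path argument does not repair this, since it only re-invokes the root claim.

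\textbf{The fix.} Your union-bound estimate is the right one; it just has to be applied to $Y'$ instead of $Y$, and this is exactly the paper's proof. Suppose some $x$ survives the root. When it was tested, $\{x\}\times Y'$ was covered by non-edge rectangles mentioning at most $q$ blocks, and the final $Y'$ is contained in the $Y'$ at that moment. So every $y$ in the final $Y'$ induces a non-edge among those blocks, and $s$-almost-completeness gives $|Y'|\le\binom q2\frac{s}{\sqrt n}\,|Y|\le |Y|/2$. Note that $\binom q2\le q^2/2$ already makes the factor at most $1/2$ under the stated hypothesis $q\le n^{1/4}/\sqrt s$, so your proposed change of constants is unnecessary. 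Hence either $X'=\emptyset$, so $|X|=|X\sqcup Y|/2$ elements are assigned, or at least $|Y|/2=|X\sqcup Y|/4$ elements are assigned.

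This dichotomy also uses almost-completeness only in the direction given by the definition (counting $y_j$ for fixed $x_i,y_i,x_j$). Your ``symmetrically every $y$'' would instead need the transposed property, counting $x_j$ for fixed $x_i,y_i,y_j$, which the definition does not provide.
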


\ifthenelse{\boolean{conferenceversion}}
{}{
\begin{proof}
Consider the rectangle $T'_r = X'\times Y'$ of unassigned inputs at the root. 
We prove that either $X'$ is empty and therefore $\mu$ assigns $\lvert X\rvert = \lvert X\sqcup Y\rvert /2$ elements, or $\lvert Y' \rvert \le \lvert Y\rvert/2$ and thus $\mu$ assigns at least $\lvert Y\rvert/2 = \lvert X\sqcup Y\rvert /4$ elements.

If $X'$ is empty, then we are done; otherwise, let $x$ be any element in $X'$. By construction of $\mu$,  there exists a set $S\subseteq \mathcal{R}$ that covers $\{x\}\times Y'$ and mentions at most $\threshold$ blocks. Since the graph is $s$-\ndense, and since every rectangle of $S$ is indexed by a pair of vertices that do not share an edge, 
\begin{equation}
\lvert S\rvert \leq 
\binom{\threshold}{2} 
\sqrt{n}
\cdot
s \leq 
n/2 \,,
\end{equation}
where we use that $q \leq n^{\frac{1}{4}}/\sqrt{s}$.
This implies that
\begin{align*}
    \Pr_{\random{y}\sim Y}\left[\random{y}\in Y'\right] 
    &\leq \Pr_{\random{y}\sim Y}\left[\exists R_{u,v}\in S\text{ s.t. } (x,\random{y})\in R_{u,v}\right]\\
    &\leq \lvert S\rvert /n \\
    &\leq 1/2\,,
\end{align*}
and therefore, $\lvert Y'\rvert  < \lvert Y\rvert /2$, which concludes the proof.
\end{proof}
}

By definition of a triangle-DAG, the following claim is immediate from  \autoref{alg:1}. 
\begin{claim}[{\cite{Sokolov24RandomCNFs, BeameW2025multiparty}}]
    During the execution of \autoref{alg:1}, for every $u$ in the triangle-DAG $\protocol$ and $z\in X'\cup Y'$, the block width of $z$ in $T'_u$ is at most $2\threshold$.
\end{claim}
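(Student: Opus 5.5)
We argue by induction along the topological order used in \autoref{alg:1}, from the sinks upward. The claim concerns the current state of $X'$ and $Y'$ at the moment a node $u$ is processed, together with the restricted triangle $T'_u = T_u\cap(X'\times Y')$. The width of a slice of $T'_u$ at $z\in X'$ (or $z\in Y'$) means the minimum number of blocks mentioned by a cover of $(T'_u)^z$ by rectangles of $\mathcal{R}$. Since $X'$ and $Y'$ only shrink during the execution, slices only shrink, and so block width is monotone non-increasing over time. It therefore suffices to bound the block width of $z$ in $T'_u$ at the moment $u$ is first considered, before any deletions at $u$.

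\textbf{Base case: $u$ is a leaf.} By the definition of a shape-DAG, $T_u\subseteq S^{-1}(o)$ for some output $o$, that is, $T_u\subseteq R$ for a single $R\in\mathcal{R}$. Every slice is then covered by one rectangle mentioning $2$ blocks, so the block width is at most $2\le 2\threshold$ (for $\threshold\ge1$). This holds even before any deletion.

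\textbf{Inductive step: $u$ is an internal node with children $w,w'$.} Both children precede $u$ in the ordering, so they have already been processed. When processing of $w$ finished, every $z$ still in $X'\cup Y'$ satisfied $\bw(T'_w,z)\le\threshold$. Otherwise the algorithm would have assigned $z$ and deleted it. Here we need the cutoff test to apply to the current restricted triangle; we read the test $\bw(T_u,x)>\threshold$ as evaluated on $T'_u$, since the update of $T'_u$ inside the loop indicates this is intended. By monotonicity, this bound persists at all later times, in particular when $u$ is processed.

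Now take $z$ still in $X'$, say. The DAG condition gives $T_u\subseteq T_w\cup T_{w'}$. Intersecting with $X'\times Y'$ and taking slices yields
\[
(T'_u)^z\subseteq (T'_w)^z\cup(T'_{w'})^z .
\]
Take optimal covers of the two right-hand slices, each mentioning at most $\threshold$ blocks. Their union is a cover of $(T'_u)^z$ mentioning at most $2\threshold$ blocks. This holds before any deletion at $u$ and, by monotonicity, throughout the processing of $u$. The case $z\in Y'$ is symmetric.

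\textbf{Main subtlety.} The argument depends on the cutoff test being applied to the restricted triangle $T'$ rather than the full $T$. If the test used the unrestricted $T_u$, the children's bound would hold for $T_w$ itself, and the same union argument would still go through, because $(T'_u)^z\subseteq (T_w)^z\cup(T_{w'})^z$. So either reading works, and we only need to state carefully that a surviving $z$ has small width at each processed child. The remaining points are routine: the case $z\in Y'$ is symmetric, and the degenerate case $\threshold<1$ is excluded by assumption.
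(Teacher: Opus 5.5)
Your proof is correct and is the argument the paper compresses into ``immediate from the definition of a triangle-DAG'', and the route is the same. Children are processed before $u$, so every surviving $z$ has block width at most $q$ at each child; then $T_u\subseteq T_w\cup T_{w'}$, together with monotonicity of block width as $X',Y'$ shrink, gives $2q$, while leaves give at most $2$ (needing $q\ge 1$, as you note; either reading of the cutoff test works, and a single-child node is handled the same way).
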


We now state the main lemma of this section.
\begin{lemma}
    \label{lemma:few x are assigned per node}
    Let $\threshold\leq n^{1/4}/\sqrt{8s}$ and $G$ $s$-\ndense. For all $u\in V(\protocol)$, $\mu$ maps at most $\lvert X \sqcup Y\rvert \cdot 2^{-\threshold}$ elements of $X\sqcup Y$ to~$u$. 
\end{lemma}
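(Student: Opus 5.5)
We follow the Sokolov / Beame--Whitmeyer bottleneck argument. Fix a node $u$ with triangle $T_u$, and let $X_u,Y_u$ be the inputs assigned to $u$ by $\mu$. By symmetry it suffices to show $\lvert X_u\rvert \le \lvert X\rvert 2^{-\threshold}$. Every $x\in X_u$ has $\bw(T_u,x)>\threshold$. By the Claim, just before $u$ is processed every remaining $x\in X'$ has block width at most $2\threshold$ in $T'_u=T_u\cap(X'\times Y')$.

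The key structural step uses the triangle property. Order the elements of $X$ by $a_{T_u}$ and those of $Y$ by $b_{T_u}$. Then the slices $T_u^x$ are nested: if $a(x)\ge a(x')$ then $T_u^{x}\subseteq T_u^{x'}$ as subsets of $Y$. The plan is to find $\threshold$ blocks (more precisely, about $\threshold/2$ block pairs) that witness the large width of each $x\in X_u$, and to show that such $x$ must have specific coordinates constrained. Take $x^\ast\in X_u$ maximizing $a(x)$, so its slice is the smallest, and let $Y^\ast=\{y\in Y':(x^\ast,y)\in T_u\}$. Since $\{x^\ast\}\times Y^\ast$ is covered by rectangles from $\mathcal R$, its covering mentions at least $\threshold$ blocks. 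For any other $x\in X_u$, the slice of $x$ within $Y'$ contains $Y^\ast$. I would then argue that each $(x,y)$ with $y\in Y^\ast$ must fall into some rectangle $R_{v,w}$, i.e.\ it must be a genuine non-edge witness. Because $x$ and $x^\ast$ share the same $Y^\ast$, there are two possibilities for $x$. Either $x$ agrees with $x^\ast$ on many of the relevant blocks, or the non-edge witnesses for $x$ must come from blocks where $x$ disagrees with $x^\ast$.

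To make this quantitative, proceed greedily as in Haken-style counting. Build blocks $i_1,i_2,\dots$ one at a time. At each step, condition on the values of $x$ at the blocks chosen so far. Use $s$-almost-completeness to show that once $x_i,y_i,x_j$ are fixed, only $s$ values $y_j$ give a non-edge. Hence a cover using few blocks covers only a $\le\binom{2\threshold}{2}s\sqrt n/n\le 1/2$ fraction of any $y$-slice; this is the same computation as in \autoref{lemma:many x are assigned}, with $2\threshold$ in place of $\threshold$, which is why we need $\threshold\le n^{1/4}/\sqrt{8s}$. Consequently, a slice $Y^\ast$ of width greater than $\threshold$ cannot be covered using only the $\le 2\threshold$ blocks available to cover $T'^{x}_u$ for each unassigned $x$, unless $x$ is pinned on a new block. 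Each step therefore forces, for a random $x\in X_u$, one further coordinate $x_{i_t}$ to equal a fixed value, or to lie in a set of density at most $1/2$. After $\threshold$ steps this gives $\Pr_{x}[x\in X_u]\le 2^{-\threshold}$.

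The main obstacle is this step-by-step pinning argument: turning ``block width $>\threshold$ at $u$ but $\le2\threshold$ in $T'_u$'' into $\threshold$ independent halving events. I would handle it by using the nested-slice structure of triangles to pass to a single common $y$-set, and then applying almost-completeness blockwise as above, adapting the BPHP argument of Beame--Whitmeyer with non-edges replacing pigeon collisions.
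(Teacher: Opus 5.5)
There is a genuine gap. It is exactly the step you flag as the main obstacle: nothing in the proposal produces $q$ halving events, and the ingredients you do have cannot be combined into a bound on $\lvert X_u\rvert$.

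The almost-completeness computation shows that a cover mentioning at most $2q$ blocks covers at most a constant fraction of $\{x\}\times Y$. That bounds the \emph{size} of each slice $T'^{x}_u$, not the \emph{number} of $x$ whose block width exceeds $q$.

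The common set $Y^\ast$ witnesses nothing either. It is the \emph{smallest} slice, so for $x\in X_u$ other than $x^\ast$ the set $\{x\}\times Y^\ast$ may be coverable with very few blocks; which rectangles of $\mathcal{R}$ contain $(x,y)$ depends on $x$'s own coordinates. The large width of such an $x$ can come entirely from $T'^{x}_u\setminus Y^\ast$.

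The sentence that $Y^\ast$ ``cannot be covered using only the $\le 2q$ blocks \dots\ unless $x$ is pinned on a new block'' is not a valid constraint. By the Claim, $x^\ast\in X'$ itself has width at most $2q$ in $T'_u$, so $\{x^\ast\}\times Y^\ast$ is coverable with $2q$ blocks, and no coordinate of any $x$ is forced.

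The missing idea is to use the triangle nesting on the other side, together with the width bound for $y\in Y'$ rather than for $x$. In a triangle $T$, a $y$ maximizing $\lvert T^y\rvert$ has a slice containing every $x$ that occurs in $T$. By the Claim, $T^y$ has a cover $R_{u_1,v_1},\dots,R_{u_\ell,v_\ell}$ mentioning at most $2q$ blocks. Hence every remaining $x$ lies in some $X_R$ with $R=R_{u_i,v_i}$, which pins $x$ on the blocks of $R$. The still-uncovered part of its slice lies in the triangle $T\cap(X_R\times(Y\setminus Y_R))$, on which you recurse.

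This gives a tree in which an $x$ whose path mentions at most $q$ blocks has $\mathsf{bw}\le q$. So every $x\in X_u$ is consistent with a node of block-depth $q+1$ or $q+2$.

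After contracting out-degree-one nodes (which add no new block), almost-completeness bounds the branching at each node. At most $2q^2s\le\sqrt{n}/4$ children add one new block, and at most $2q^2\sqrt{n}\,s\le n/4$ add two. The quantity $\binom{2q}{2}s\sqrt{n}$ you compute is really this branching bound, not a covered fraction.

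So there are at most $(\sqrt{n}/2)^d$ nodes at block-depth $d$, each consistent with at most $\lvert X\rvert/(\sqrt{n})^d$ inputs. Summing over $d\in\{q+1,q+2\}$ gives $\lvert X_u\rvert\le 2^{-q}\lvert X\rvert$. The factor $1/2$ per block comes from the trade-off between branching and the $1/\sqrt{n}$ loss per pinned coordinate, not from a coordinate being confined to a set of density $1/2$.
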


Assuming this lemma, \autoref{theorem:dense-dag-bound} follows immediately.

\begin{proof}[Proof of \autoref{theorem:dense-dag-bound}]
     By \autoref{lemma:many x are assigned}, at least $\lvert X\sqcup Y\rvert/4$ elements from $\lvert X\sqcup Y\rvert$ are assigned to some vertex in $\protocol$ by $\mu$. On the other hand, \autoref{lemma:few x are assigned per node} bounds the number of $z \in X\sqcup Y$ assigned to any one vertex in $\protocol$. Put together, we conclude that there must be at least $\frac{\lvert X\sqcup Y \rvert/4}{\lvert X \sqcup Y\rvert 2^{-\threshold}} = 2^{q-2}$ vertices in~$\protocol$.
\end{proof}

    \begin{algorithm2e}[t]

    \KwIn{Sets $X' \subseteq X$, $Y' \subseteq Y$, and triangle $T'\subseteq X'\times Y'$ s.t.  $\bw(T',y)\leq2\threshold$ for all $y\in Y'$}
    \KwOut{A tree $\calT$ of coverings of $T'^x$ for every $x\in X'$}
    Initialize $\calT$ as a tree with a single node labeled $T'$\;
    Let $\DataSty{Leaves}:= \{T'\}$\;
    \While{$\DataSty{Leaves}\neq\emptyset$}{
        Choose some $T$ from \DataSty{Leaves} and remove it from \DataSty{Leaves}\;
        Choose a $y\in Y$ such that $\lvert T^y\rvert $ is maximized\;
        Let $R_{u_1,v_1},\dots,R_{u_\ell,v_\ell}$ be a minimal covering for $T^y$ with respect to block-width\;
        \tcc{note that $\lvert B(\{u_1,v_1,\dots,u_\ell,v_\ell\})\rvert \leq 2\threshold$}
        \For{$i\in [\ell]$}{
            Let $R:= R_{u_i,v_i} = X_R\times Y_R$\;
            Let $T_R := T\cap (X_R\times (Y\setminus Y_R))$\;
            Add an edge to $\calT$ labeled by $(u_i,v_i)$ from the node  of $\calT$ labeled $T$ to a new leaf labeled by $T_R$\;
            \If{$T_R \neq \emptyset$}{
                Add $T_R$ to \DataSty{Leaves}\;
            }
        }
    }
    \KwSty{return} $\calT$
    \caption{Definition of a tree $\calT$ of potential coverings of $T'^x$}  
    \label{alg:2}
    \end{algorithm2e}

The rest of this section is dedicated to the proof of \autoref{lemma:few x are assigned per node}. 
    Fix $u\in V(\protocol)$. Our goal is to show that the number of elements $x\in X$ assigned by $\mu$ to $u$ is at most $2^{-q} \cdot |X|$. By symmetry, we have the same bound on the number of such $y\in Y$ and the lemma follows. 
    
    To obtain this bound, we run \autoref{alg:2} on $T' = (X'\times Y')\cap T_u$, with $X'$ and $Y'$ taking their values right before processing node $u$ in \autoref{alg:1}. From this, we obtain a tree $\calT$ with the inner nodes labeled by triangles that are subsets of $T'$ not covered yet, and with the edges labeled by rectangles in $\calR$. It satisfies the following properties: 
    \begin{enumerate}
        \item First, the set of edge label rectangles cover $T'$, i.e., for every $(x,y)$ in $T'$, there is an edge labeled $R_{u,v}$ s.t. $(x,y) \in R_{u,v}$. \item Second, if $(x,y) \in T'$ is in triangle $T_1$ at node $t_1$, and $(x,y)$ is in triangle $T_2$ at node $t_2$, then either $t_1$ is a descendant of $t_2$ or vice-versa, and $(x,y)$ is in every triangle labeling nodes between $t_1$ and $t_2$. \item Furthermore, for every $x$, there is a unique path from the root to some leaf $t$ in $T'$ such that $\{R_{u,v}\,\mid \, R_{u,v}\textsf{ edge labels of path}\}$ covers $T'^x$. 
    \end{enumerate}

    For any node $t$ of $\calT$ whose path from the root is labeled by $R = (R_1,R_2,\dots,R_\ell)$, the block-depth of $t$ is the number of unique blocks mentioned by the rectangles in $R$. Then, the block-depth of an input $x\in X'$ is the maximal block-depth over nodes that $x$ is consistent with.
    
    Note that for every leaf $t$ in $\calT$, property 3 above implies that the block-depth of $t$ is an upper bound on $bw(T'_u,x)$ for any $x\in X'$ consistent with $t$. To prove the lemma, we thus only have to bound the number of elements $x\in X'$ that are consistent with leaves of block-depth greater than $\threshold$. Before doing so, we \ifthenelse{\boolean{conferenceversion}}
{observe}{prove} a claim that will allow us to simplify~$\calT$.
    \begin{claim}
        Let $t$ be a vertex in $V(\calT)$. If $t$ has out-degree greater than one, then every child of $t$ has block-depth greater than the block-depth of $t$.
    \end{claim}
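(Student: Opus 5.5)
The plan is to argue by contraposition. Suppose $t$ has out-degree greater than one and some child $t'$ of $t$, reached by the edge labeled $(u,v)$, has block-depth equal to that of $t$. I would show that then the rectangle $R_{u,v}$ alone already covers the slice $T^y$ that \autoref{alg:2} chose at $t$. This contradicts that the covering $R_{u_1,v_1},\dots,R_{u_\ell,v_\ell}$ used at $t$ is minimal. Here I interpret ``minimal covering with respect to block-width'' as a block-width-minimal covering that is also inclusion-minimal, i.e.\ no rectangle can be dropped. Making this tie-breaking explicit in \autoref{alg:2} costs nothing, since dropping a rectangle never increases the number of mentioned blocks.

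First I would record the invariant that \autoref{alg:2} maintains. Let the path from the root to $t$ be labeled $R_{u'_1,v'_1},\dots,R_{u'_m,v'_m}$. Every triangle created along this path is of the form $T\cap(X_R\times(Y\setminus Y_R))$. So every $(x,y)$ in the triangle $T$ labeling $t$ has $x\in X_{R_{u'_j,v'_j}}$ for all $j$. Concretely, for every block $b$ mentioned on the path, the coordinate $x_b$ is fixed to the same value for all $x$ appearing in $T$.

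Next, note that the block-depth of $t'$ equals that of $t$ exactly when both blocks $B(u)$ and $B(v)$ are already mentioned on the path to $t$. For the fixed $y$ chosen at $t$, the slice $T^y$ then consists of inputs $x$ whose coordinates $x_{B(u)}$ and $x_{B(v)}$ take one common fixed value, by the invariant. Membership of $(x,y)$ in $R_{u,v}$ depends only on $x_{B(u)}$, $x_{B(v)}$, $y_{B(u)}$ and $y_{B(v)}$, and these are all identical across $T^y$. Hence $R_{u,v}$ contains either all of $T^y$ or none of it.

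Finally, if $R_{u,v}$ contains none of $T^y$, it is redundant in the covering, contradicting inclusion-minimality. If it contains all of $T^y$, then every other rectangle $R_{u_i,v_i}$ is redundant. Since the out-degree of $t$ is at least two, at least one such other rectangle exists, which again contradicts minimality. So every child must mention a new block and thus has strictly larger block-depth. The only real subtlety is this minimality convention; the invariant itself is immediate from how $T_R$ is defined.
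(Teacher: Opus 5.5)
Your proof is correct and follows the paper's argument. Both blocks of $R_{u,v}$ are already mentioned on the path, so the relevant coordinates are constant on $T^y$; hence $R_{u,v}$ either covers all of $T^y$ or misses it entirely, and in either case minimality forces the out-degree to be $1$. You are slightly more careful than the paper, which never treats the case where $R_{u,v}$ misses $T^y$ and relies on inclusion-minimality of the covering without stating it; you make that convention explicit.
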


\ifthenelse{\boolean{conferenceversion}}
{}{
    \begin{proof}
        Let $t'$ be a vertex that has the same block-depth as its parent $t$. The edge from $t$ to $t'$ must be labeled by a rectangle $R_{u,v}$ with vertices from blocks $i=B(u)$ and $j=B(v)$, both already mentioned on the path from the root to $t$. In particular, this means that all $x\in X'$ consistent with~$t$ agree on the $i$th and $j$th coordinate. 
        
        Now, let $y \in Y$ chosen when processing $t$ in \autoref{alg:2}. 
        If there exists an $x'\in X'$ consistent with $t$ such that $(x',y) \in R_{u,v}$, i.e., such that $(x'_i,y_i) = u$ and $(x'_j,y_j) = v$, this implies that for every $x\in X'$ consistent with~$t$ it holds that $(x_i,y_i) = u$ and $(x_j,y_j) = v$ since $x$ agrees with $x'$ in the  $i$th and $j$th coordinate, and therefore $(x,y) \in R_{u,v}$. Hence $T^y$ in \autoref{alg:2} is fully covered by $R_{u,v}$ and the out-degree of $t$ is 1.
    \end{proof}
}

    We can now modify $\calT$ by removing every vertex in $\calT$ with a child of the same block-depth and replacing it with that child. This leaves us with a new tree $\calT '$ where every path from the root to a leaf has strictly increasing block-depth. Note that the block-depth of an input $x\in X'$ is the same in $\mathcal{T}'$ as in $\mathcal{T}$. 
    In order to conclude the proof, it is now sufficient to \ifthenelse{\boolean{conferenceversion}}
{note the following straightforward claim}{prove the following claim}.
    
\begin{claim}
    \label{claim:size-bound}
    The number of vertices at block-depth $d$ in $\calT '$ is at most $\left(\frac{\sqrt{n}}{2}\right)^d$. 
\end{claim}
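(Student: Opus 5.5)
The plan is to prove \autoref{claim:size-bound} by induction on $d$, bounding the branching of $\calT'$ in terms of how many new blocks each edge introduces. The base case is $d=0$, where only the root has block-depth $0$, which is fine. For the inductive step I would fix a node $t$ of $\calT'$ at block-depth $d'<d$ and look at its children. Each child corresponds to a rectangle $R_{u,v}$ of the minimal covering of $T^y$ chosen in \autoref{alg:2}, where $y$ is the single $y\in Y$ fixed when processing $t$. Since $\calT'$ has strictly increasing block-depth along paths (by the preceding claim and the contraction of equal-depth vertices), every surviving edge $R_{u,v}$ mentions at least one block $j$ not yet mentioned on the path to $t$. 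So each child sits at block-depth $d'+a$ with $a\in\{1,2\}$.

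The key counting step uses the fact that $y$ is fixed at $t$. A vertex $w$ in a new block $j$ appearing in a rectangle $R_{u,v}$ that meets $T^y$ must satisfy $w=(x_j,y_j)$ with $y_j$ already determined by $y$. So once the block $j$ is fixed, $w$ is determined by $x_j\in\Sigma$, giving at most $|\Sigma|=\sqrt n$ choices. If the other endpoint of $R_{u,v}$ lies in an old block, it is pinned by the path to $t$, because all $x$ consistent with $t$ agree on the old coordinates and $y$ is fixed. This gives at most about $\sqrt n$ children per new block for $a=1$, and $(\sqrt n)^2$ for $a=2$. Moreover, distinct children correspond to disjoint sets of $x$'s, since they disagree on the new coordinate $x_j$.

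I would then set up a recurrence for $N(d)$, the number of nodes at block-depth exactly $d$. Each such node descends from a node at depth $d-1$ via a one-new-block edge, or from a node at depth $d-2$ via a two-new-block edge. Hence $N(d)\le c_1 N(d-1)+c_2 N(d-2)$, where $c_a$ bounds the number of $a$-new-block children of a single node, and this solves to at most $(\sqrt n/2)^d$ provided $c_1,c_2$ are suitably controlled.

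The main obstacle I expect is getting exactly the constant $\sqrt n/2$ rather than something like $2\threshold\sqrt n$. The naive count also pays for the choice of which new block $j$ is introduced, which could be any of up to $2\threshold$ blocks in the covering. To remove this factor, I would first try a weighted/Kraft-type argument. The children of $t$ partition the $x$'s consistent with $t$ according to disjoint conditions on the new coordinates. I would then show, by induction from the leaves, that the total weight $\sum_{\text{nodes at depth } d}(\sqrt n)^{-(\text{new coordinates fixed})}$ is at most $1$, and convert this into the stated bound. If that fails, I would fall back to tracking the pair (block, value) as a single symbol of the alphabet and absorb the block choice into the halving in the statement.
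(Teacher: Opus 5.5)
\textbf{Gap: the density of $G$ is never used.} Your skeleton matches the paper's: children of a node at block-depth $d'$ sit at $d'+1$ or $d'+2$, and a recurrence $N(d)\le c_1N(d-1)+c_2N(d-2)$ closes the induction. The gap is in how you bound $c_1,c_2$. You never use that $G$ is $s$-almost-complete, and without that the claim is false.

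Bounding the new endpoint $w=(x_j,y_j)$ by $|\Sigma|=\sqrt n$ choices gives at best $c_1\le\sqrt n$, even for a single block pair. Then $c_1\cdot 2/\sqrt n=2>1$, so no recurrence can land at $(\sqrt n/2)^d$. Your fallbacks do not help:
\begin{itemize}
\item A Kraft/volume argument cannot beat $(\sqrt n)^d$, since each node at block-depth $d$ is consistent with up to $|X|/(\sqrt n)^d$ inputs. Even that would need the children to split the $x$'s disjointly, which fails when two children introduce different new blocks. Moreover $(\sqrt n)^d$ is worthless downstream, because $(\sqrt n)^{\threshold+1}\cdot|X|/(\sqrt n)^{\threshold+1}=|X|$.
\item Treating (block, value) as one symbol only enlarges the alphabet.
\end{itemize}
Concretely, suppose every pair of vertices were a non-edge and $T'=X\times Y$. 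Covering $X\times\{y\}$ by the $n$ rectangles $R_{u,v}$ with $u\in V_1$, $v\in V_2$ consistent with $y$ has block-width $2$. It gives the root $n>n/4$ children at block-depth $2$.

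\textbf{The missing idea.} Edge labels lie in $\mathcal R$, i.e.\ they are non-edges.
\begin{itemize}
\item For a $(d'+1)$-child with old block $i$ and new block $j$: $x_i$ is fixed by the path and $y_i,y_j$ by the chosen $y$, so $u$ is fixed. Almost-completeness, in the form with $y_j$ fixed and $x_j$ free, then allows at most $s$ values of $x_j$.
\item For a $(d'+2)$-child: summing over the $\sqrt n$ values of $x_i$ gives at most $\sqrt n\,s$.
\end{itemize}
Since $\bw(T',y)\le 2\threshold$, there are at most $\binom{2\threshold}{2}\le 2\threshold^2$ block pairs. Hence $c_1\le 2\threshold^2 s$ and $c_2\le 2\threshold^2\sqrt n\,s$. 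The block-choice factor you worried about is simply absorbed by $s\ll\sqrt n$ through the choice of $\threshold$.

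\textbf{Constants when closing the recurrence.} You need $c_1\cdot\frac{2}{\sqrt n}+c_2\cdot\frac{4}{n}\le 1$, i.e.\ $12\threshold^2 s\le\sqrt n$. The paper's $\threshold\le n^{1/4}/\sqrt{8s}$ only yields $c_1\le\sqrt n/4$ and $c_2\le n/4$. For these values, the paper's displayed sum actually equals $\frac32(\sqrt n/2)^d$, not at most $(\sqrt n/2)^d$. So either take $\threshold$ a constant factor smaller, or accept growth $r^d$ with $r=c\sqrt n$ solving $r^2=c_1r+c_2$ (here $c\approx 0.64$). That still gives $2^{-\Omega(\threshold)}$ in the lemma.
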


\ifthenelse{\boolean{conferenceversion}}
{}{
\begin{proof}
    We prove this by induction over the block-depth $d$. For $d \leq 1$, the claim follows since the only vertex at block-depth $0$ is the root of $\calT '$, and there are no vertices at block-depth $1$.

    For the induction step, 
    note that the children of a vertex $t \in V(\calT')$ at block-depth $d'$ either have block-depth $d' + 1$ or $d' +2$. We will argue that any vertex $t \in V(\calT')$ at block-depth $d'$ can have at most $\sqrt{n}/4$ children of 
    block-depth $d' + 1$ and at most $n/4$ children of block-depth $d' + 2$. 
    This would imply the claim since
    we can conclude that, for $d \geq 2$, there are at most
    \begin{align*}
       \frac{\sqrt{n}}{4}\left(\frac{\sqrt{n}}{2}\right)^{d-1}
        +
       \frac{n}{4}\left(\frac{\sqrt{n}}{2}\right)^{d-2}
       \leq\left(\frac{\sqrt{n}}{2}\right)^d 
    \end{align*}
    vertices at block-depth $d$, where we apply the induction hypothesis to bound the number of vertices at level $d -1$ and $d -2$.

    Let $t \in V(\calT')$ be a vertex at block-depth $d'$. Consider the $y$ chosen when processing this vertex in \autoref{alg:2} and the rectangles $R_{u_1,v_1},\dots,R_{u_d,v_d}$ chosen to be a minimal covering for $T^y$. These rectangles are the labels of the out-edges of $t$. Since the block-width of $y$ is at most $2\threshold$, there is a set of indices $S\subseteq [k]$, with $\lvert S\rvert \leq 2\threshold$, such that $B(\{u_1,v_1,\ldots,u_{d},v_{d}\})\subseteq S$.
    
    We first argue that $t$ has at most $\sqrt{n}/4$ children of 
    block-depth $d' + 1$. 
    A child of $t$ which is adjacent to an edge labeled $R_{u,v}$ is at level $d' + 1$ if and only if exactly one of $B(u)$ or $B(v)$ is a block that already appeared in the path from the root of $\calT$ to $t$. 
    For each pair of indices $i,j \in S$, where $i$ already appears in the path from the root to $t$ and $j$ does not, we bound the number of rectangles $R_{u,v}$ such that $\{i,j\} = B(\{u,v\})$.
    Since $i$ appears earlier in the path, all $x \in X'$ consistent with $t$ agree on the $i$th coordinate and thus $x_i$ is fixed for all rectangles $R_{u,v}$ labeling the out-edges of $t$ with $ B(\{u,v\}) = \{i,j\}$. Since $G$ is $s$-\ndense, and both $y_i$ and $y_j$ are clearly also fixed in $T^y$, then there are at most 
    $s$ 
    such rectangles. 
    Considering every possible pair of indices $(i,j)$, of which there are at most $\binom{|S| }{ 2} \leq 2\threshold^2$, we conclude that $t$ has at most $2\threshold^2 s\leq \frac{\sqrt{n}}{4}$ children of 
    block-depth $d' + 1$.

    It remains to argue that $t$ has at most ${n}/4$ children of 
    block-depth $d' + 2$. A child of $t$ which is adjacent to an edge labeled $R_{u,v}$ is at level $d' + 2$ if and only if neither $B(u)$ nor $B(v)$ is a block that appears in the path from the root of $\calT$ to $t$. For each pair of indices $i,j \in S$, where neither $i$ nor $j$ appears in the path from the root to $t$, and for each $x_i \in [\sqrt{n}]$ we use that $G$ is $s$-\ndense (again using that $y_i$ and $y_j$ are fixed in $T^y$) to obtain that the number of rectangles $R_{u,v}$ labeling out-edges of $t$ such that  $B(u) = i$, $B(v) = j$ and $(x_i,y_i) = u$ is at most 
    $s$.
    Considering every possible pair of indices $(i,j)$ and every $x_i \in [\sqrt{n}]$, we conclude that $t$ has at most $2 q^2 \sqrt{n} s \leq n/4$ children of 
    block-depth $d' + 2$. 
\end{proof}
}

Given this claim we can conclude that \autoref{lemma:few x are assigned per node} holds.
Indeed, every $x\in X'$ mapped by $\mu$ must be consistent with a leaf in $\calT$ of block-depth larger that $\threshold$, and
since block-depth can only increase by $1$ or $2$ from parent to child in $\calT$, every $x\in X'$ consistent with a leaf of block-depth greater that $\threshold$ is consistent with a vertex of block-depth exactly $\threshold+1$ or $\threshold +2$. 
Furthermore, only $\lvert X \rvert/\sqrt{n}^\ell$ inputs $x\in X'$ can be consistent with a vertex $t\in V(\calT')$ at block-depth $\ell$ since all $x\in X'$ consistent with $t$ coincide on the $\ell$ blocks mentioned on the path to $t$. 
From \autoref{claim:size-bound}, it follows that at most
\begin{equation*}
    \left(\frac{\sqrt{n}}{2}\right)^{\threshold +1}
    \cdot
    \frac{\lvert X\rvert}{\left(\sqrt{n}\right)^{\threshold + 1}}
    +
    \left(\frac{\sqrt{n}}{2}\right)^{\threshold+2}
    \cdot
    \frac{\lvert X\rvert}{\left(\sqrt{n}\right)^{\threshold+2}}   
    \leq \cdot2^{-\threshold}\lvert X \rvert 
\end{equation*} elements of $\lvert X\rvert $ are mapped to $u$ by $\mu$. By symmetry the same holds for $\lvert Y\rvert$ and the lemma follows.

\comment{
    \susanna{BELOW IS THE PREVIOUS PROOF}
    Consider now a vertex $t\in \calT '$ of block-depth $\ell$ and out-degree $d$ and consider the $y$ chosen when processing this vertex in \autoref{alg:2}. 
    Since the block-depth\susanna{width?} of $y$ is at most $2\threshold$, there is a set of indices $S\subseteq [k]$, with $\lvert S\rvert \leq 2\threshold$, such that for all rectangles $R_{u_1,v_1},\dots,R_{u_d,v_d}$ labeling the out-edges, $B(\{u_1,v_1,\ldots,u_{\ell},v_{\ell}\})\subseteq S$.
     In total, we have at most ${2\threshold  \choose 2} \leq 2\threshold^2$ pairs of blocks to which $R_{u_i,v_i}$\susanna{$(u_i,v_i)$?} can belong. \susanna{the previous sentence is not clear to me.} We count the number of possible rectangles for each such pair separately.

\susanna{I think we are missing something here. Where do these $i,j \in S$ come from? Are they an arbitrary pair?}
     
     Let $i,j\in S$ and consider first the case where one of the indices, say $i$, has already appeared on the path from the root to~$t$. This implies each $x\in X'$ consistent with $t$ agrees on the $i$th coordinate and thus $x_i$ is fixed for all rectangles $R_{u,v}$ in the covering of $T^y$ with $u\in V_i$ and $v\in V_j$. By \autoref{lemma:probability-concentrates}, since $y_i$ and $y_j$ are clearly also fixed in $T^y$,  there are at most $2\sqrt{n}p$ such rectangles. Considering every possible index\susanna{Considering every pair of indices in $S$?}, we find at most $4\threshold^2 \sqrt{n}p\leq \frac{\sqrt{n}}{4}$ rectangles increasing the block width by 1. 
     
     In the second case, where neither $i$ nor $j$ have been mentioned before, we sum over the $\sqrt{n}$ possible values of $x_i$ and find that for every such value there are $2\sqrt{n}p$ possible rectangles by \autoref{lemma:probability-concentrates}. In total there are at most $4\threshold^2np\leq \frac{n}{4}$ rectangles increasing the block width by 2.

     By induction, we find that the number of vertices at block-depth $\ell$ is bounded from above by \begin{equation*}
       \frac{n}{4}\left(\frac{\sqrt{n}}{2}\right)^{\ell-2}+\frac{\sqrt{n}}{4}\left(\frac{\sqrt{n}}{2}\right)^{\ell-1}\leq\left(\frac{\sqrt{n}}{2}\right)^\ell.
    \end{equation*}

}

\section{Lower Bound for Bounded Depth Resolution over Parities}
\label{sec:binCliqueResLin}

In this section, we prove our second result, stated formally below.

\begin{theorem}
    \label{thm:reslin-bound-random-graph}
    Let $\gamma>0$ be a constant, let $p\in [1-n^{-\gamma},1]$ and $k \in\mathbb{N}$  be functions of $n$, where $k < \exp(n^{o(1)})$ and let $\Gbf\sim \calG(n,p,k)$. Then asymptotically almost surely, for any $\ResLin$-refutation of depth $D$ and size $S$ that refutes $\binCliqueGk$ it must hold that
    
    \begin{equation*}
    D\sqrt{\log S}\geq
        \begin{cases}
            \Omega(n^{3\gamma/2}), \text{ if } \gamma < 1/2\\
            \Omega(\frac{n^{\frac{1+\gamma}{2}}}{\sqrt{\log kn}}), \text{ if } 1/2\leq \gamma < 1\\
            \Omega(n^{1-\varepsilon}), \text{ if } \gamma \ge 1 
        \end{cases}
    \end{equation*}
    for any constant $\varepsilon > 0$.
\end{theorem}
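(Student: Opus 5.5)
We reduce to affine-DAGs and run the random-walk-with-restarts argument of \cite{AI25lifting,BI25bphp}, using the closure machinery of \cite{EGI24regular}. By \autoref{lem:blockToBinary} and \autoref{prop:ResLin-to-affineDAG}, it suffices to show that every affine-DAG $\affineDAG$ of length $S$ and depth $D$ solving $\search_{\binCliqueBlockGk}$ satisfies the stated bound. By \autoref{lemma:neigbor-concentrates} with $\delta=\gamma$, we may fix a graph $G$ that has $(p,\beta,n^\gamma)$-bounded common neighborhoods with $\beta\approx\sqrt{\log(kn)/n^{1-\gamma}}$. When $\gamma\geq 1$ we instead take $\delta=1-\varepsilon$, which is where the $n^{1-\varepsilon}$ case comes from. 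All further reasoning is deterministic given this property.

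\textbf{Single phase.} Each phase starts from a node $v$ with linear system $\Psi_v$ and a partial assignment $\rho$. The assignment $\rho$ fixes whole blocks to vertices $W$ forming a clique, and every remaining block $i$ is restricted to $N^\cap(W,i)$. The bounded-neighborhood property guarantees $|N^\cap(W,i)|\in(1\pm\beta)p^{|W|}n$, so the restricted instance looks like a slightly shrunken copy of the original. Starting at $v$, the walk reads queries $P_u$. Whenever a query, modulo the closure of the current system, forces the value of a block, we sample that block's vertex uniformly from the current common neighborhood of the fixed vertices. Otherwise we follow the uniformly random branch. Because every fixed vertex lies in the common neighborhood of the others, no edge axiom is ever falsified. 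The walk can only stop early when the uniformly chosen bits land on a non-neighbor, and each step fixes $O(1)$ blocks in the closure (amortized through the rank). Since $1-p\le n^{-\gamma}$ and the common neighborhoods have density at least $1-O(|W|n^{-\gamma}+\beta)$, the probability of surviving $t$ steps is at least $\exp(-O(t^2 n^{-\gamma}\cdot\text{rank cost}))$, which is the $t^2$ behaviour behind the $\sqrt{\log S}$.

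\textbf{Restart and counting.} Run $S^{\Theta(1)}$ independent walks for $t\approx\sqrt{\log S}\cdot n^{\gamma/2}$ steps each (up to logarithmic factors). Pigeonhole over the $S$ nodes gives a node reached by many walks with non-negligible probability. This forces the rank of its system to be $O(\log S)$, since a rank-$r$ affine space has mass about $2^{-r}$ under the walk distribution. Using the closure lemma of \cite{EGI24regular}, we fix only $O(\log S)$ blocks to satisfy that system while keeping the remaining coordinates uniform. We choose these vertices inside current common neighborhoods, which remain large by \autoref{def:common-neighbor} as long as the total number of fixed blocks stays below $R=n^{\gamma}$. Then we restart from that node. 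Each phase advances depth by about $t$ at the cost of $O(\log S)$ fixed blocks, so there are about $n^\gamma/\log S$ phases. The total depth reached is $D\gtrsim (n^\gamma/\log S)\cdot \sqrt{\log S}\,n^{\gamma/2}=n^{3\gamma/2}/\sqrt{\log S}$, which is the first case. In the regime $\gamma\ge1/2$ the step bound saturates because a block has only $\log n$ bits and $\beta$ stops being small. Redoing the optimisation with the constraints $t\le\sqrt{n}$-type and $\beta\sim\sqrt{\log(kn)/n^{1-\gamma}}$ should produce $n^{(1+\gamma)/2}/\sqrt{\log kn}$ and, for $\gamma\ge1$, $n^{1-\varepsilon}$. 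The condition $k<\exp(n^{o(1)})$ ensures the union bounds and $\beta$ stay $o(1)$.

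\textbf{Main obstacle.} The hardest step is the restart invariant. After fixing the closure of a low-rank system, the conditional distribution over the remaining blocks must still be close enough to uniform on common neighborhoods that both the survival-probability estimate and the rank-counting estimate apply again. Unlike BPHP, the instance is not symmetric, so we must track for each remaining block both the $(1\pm\beta)$ multiplicative errors in neighborhood sizes and the additive $\beta$ losses accumulated across phases. Controlling this accumulation over roughly $n^\gamma$ fixed vertices, which is exactly $R$, is what drives the case split in the exponents. I would first try to make this step work by handling each block-variable group atomically in the closure, following \cite{BI25bphp}, so that the pigeonhole axioms are replaced by common-neighborhood membership.
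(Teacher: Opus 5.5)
Your architecture matches the paper's. You fix a graph with $(p,\beta,R)$-bounded common neighborhoods via \autoref{lemma:neigbor-concentrates} (with $\delta=\gamma$, or $\delta=1-\varepsilon$ if $\gamma\ge 1$), walk $d$ steps, and use pigeonhole plus the rank lemma to find a node of rank $O(\log S)$. You then add $O(\log S)$ clique vertices, restart, and iterate about $R/\log S$ times. Your case $\gamma<1/2$ is right.

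The gap is the survival estimate, which is exactly what creates the case split; as a result, the cases $\gamma\ge 1/2$ are not actually derived. The bounded-common-neighborhood property only guarantees that a uniform vertex of $N^\cap(M,i)$ lies in $N^\cap(M\cup C_t,i)$ with probability at least $\frac{1-\beta}{1+\beta}p^{2(t-1)}\ge(1-2\beta)p^{2(t-1)}$. The factor $1-2\beta$ is paid at every step, so the survival probability you can guarantee for $d$ steps is only $\exp(-O(d\beta+d^2(1-p)))$, not $\exp(-O(d^2 n^{-\gamma}))$. You yourself write density $1-O(|W|n^{-\gamma}+\beta)$ and then drop the $\beta$. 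Keeping the survival probability at least $S^{-O(1)}$ forces $d\approx\sqrt{\log S}\cdot\min(\sqrt R,1/\beta)$, hence $D\sqrt{\log S}\gtrsim R\cdot\min(\sqrt R,1/\beta)$ (\autoref{thm:reslin-bound-bcn}). With $R=n^{\delta}/100$ and $1/\beta=\sqrt{n^{1-\delta}/(18\log kn)}$, the two terms cross at $\gamma=1/2$:
\begin{itemize}
\item Below $1/2$, the $d\beta$ loss is only $\Theta(\sqrt{\log S\,\log(kn)}\,n^{\gamma-1/2})=o(\sqrt{\log S})$, and you get $R^{3/2}$.
\item Above $1/2$, you get $R/\beta$, i.e.\ $n^{(1+\gamma)/2}/\sqrt{\log kn}$.
\item With $\delta=1-\varepsilon$, you get $n^{1-\varepsilon/2}/\sqrt{\log kn}\ge n^{1-\varepsilon}$.
\end{itemize}
The reasons you give (only $\log n$ bits per block, a ``$t\le\sqrt n$'' constraint) do not bind. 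Taken at face value, your survival bound would give $n^{3\gamma/2}$ for every $\gamma<1$, which the bounded-neighborhood property alone does not support.

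There are two smaller issues. First, your walk maintains closure during the walk and samples each block from the common neighborhood of the vertices already chosen. That walk does not obviously support ``a rank-$r$ space has mass about $2^{-r}$ under the walk distribution.'' In \autoref{alg:PDTsimulator}, the vertex $y$ is drawn from $N^\cap(M,i)$ only, all bits but one are fixed, and the last bit is set by a uniform answer to the query. This makes the walk reach every node with exactly the probability that a uniform $\bm x\in\prod_i N^\cap(M,i)$ does (\autoref{lemma:PDTwalkRelation}), so \autoref{lemma:rank-probability-relation} applies. Non-edges are charged to a separate failure event checked at the end, and closure is used only afterwards, to extract the restriction (\autoref{lemma:assignment-extraction}).

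Second, the ``restart invariant'' you single out as the main obstacle is not one. The bounded-neighborhood property holds for every set of at most $R$ vertices, so after a phase you simply face $\mathrm{NonEdge}_{M^*}$ with a fresh uniform product distribution on the sets $N^\cap(M^*,i)$. Nothing is conditioned, and no $\beta$-errors accumulate across phases. The only accumulation of $\beta$ is within a phase, and that is precisely the term you dropped.
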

We start by verifying that \autoref{thm:ResLin-informal} from the introduction follows from this theorem. Let $\gamma= 1/3$ and $k=n^{1/3+\varepsilon/100}$ for $\varepsilon>0$ given in \autoref{thm:ResLin-informal}.
    Then $N=kn=n^{4/3+\varepsilon/100}$, and $\binCliqueGk$ has $m= k\log N = 
    O(n^{1/3+\varepsilon/100}\log n)$. 
    So, if 
    \begin{equation*}
    D\leq m^{1.5-\varepsilon}=O(n^{(1/3+\varepsilon/100)(1.5-\varepsilon)}\log^{1.5-\varepsilon} n)=o(n^{3\gamma/2-\varepsilon/4})\,,
    \end{equation*}
    we have that $\log S= \Omega(\textrm{poly}(n))=\Omega(\textrm{poly}(N))$. Therefore, sampling $\bf G$ using $p=n^{-1/3}$ and $k=n^{1/3+\varepsilon/100}$  \aas yields an unsatisfiable formula requiring size $S= 2^{N^{\Omega(1)}}$ to refute in depth at most $m^{1.5-\varepsilon}$, as required.

To prove \autoref{thm:reslin-bound-random-graph} we prove the slightly more general statement below. 
 
\begin{theorem}
    \label{thm:reslin-bound-bcn}
    Let $\edgeProbResLin>0$, $0<\beta<\betabound$ be real numbers and let $\CNbound\leq \rbound$ be a positive integer. Given a graph $G$ with $(\edgeProbResLin,\beta,\CNbound)$-bounded common neighborhoods and a $\ResLin$-refutation of $\binCliqueGk$ of depth $D$ and size $S$, it must hold that 
    \begin{equation*}
        D\sqrt{\log S}\geq\Omega(\CNbound\cdot\min(\sqrt\CNbound,1/\beta))
    \end{equation*}
\end{theorem}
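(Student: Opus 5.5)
By \autoref{lem:blockToBinary} and \autoref{prop:ResLin-to-affineDAG}, it suffices to show the following. Let $\affineDAG$ be an affine-DAG of depth $D$ and size $S$ solving $\search_{\binCliqueBlockGk}$. Then $D\sqrt{\log S}\geq \Omega(\CNbound\cdot\min(\sqrt\CNbound,1/\beta))$. We follow the random-walks-with-restarts framework of~\cite{AI25lifting} as adapted by~\cite{BI25bphp}, using the closure of~\cite{EGI24regular}.

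\textbf{The walk.} Throughout the walk we maintain a partial clique $\rho$: a set $U$ of at most $\CNbound$ vertices, one from each of the blocks $B(U)$, which is pairwise adjacent. Every such $\rho$ is a restriction fixing the variables $x_{i,a}$ for $i\in B(U)$, and no edge axiom is falsified inside $U$. Starting at some node $v$ whose system $\Psi_v$ is consistent with $\rho$, we walk down $\affineDAG$. At a query node with parity $P_v$, the next bit is chosen as it would be for a uniformly random input. That input has the blocks outside $B(U)$ restricted to the common neighborhood $N^\cap(U,i)$ of the current partial clique. Whenever the closure of the queried linear forms forces a new block $i$, we fix $x_i$ to a uniformly random vertex of $N^\cap(U,i)$ consistent with the current linear system, and we add that vertex to $U$.

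The walk can only reach a leaf once two fixed clique members are non-adjacent, which never happens while $U$ stays inside common neighborhoods. So the walk survives as long as $|U|\le\CNbound$. The $(\edgeProbResLin,\beta,\CNbound)$-bounded property gives $|N^\cap(U,i)|\in(1\pm\beta)\edgeProbResLin^{|U|}n$. Hence each block's conditional distribution is close to uniform over a set of almost the predicted size, and the density of the system's solutions behaves as for the uniform input up to a factor $(1\pm\beta)^{O(\cdot)}$.

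\textbf{The restart.} Run the walk for $t$ steps from the current node. The probability of landing at a given node $w$ is at most $2^{-\mathrm{rank}}$ times a distortion factor. Averaging over the at most $S$ nodes, some node $w$ is reached with probability at least $1/S$. That node therefore has rank at most about $\log S$ plus the distortion.

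By the closure lemma of~\cite{EGI24regular}, a system of rank $r$ can be satisfied after fixing only $O(r)$ blocks. We fix those blocks to vertices in the common neighborhood, which extends $\rho$ by $O(\log S)$ vertices, and then restart the walk from $w$. The remaining instance is again a block-clique instance on the common neighborhoods. These have size $(1\pm\beta)\edgeProbResLin^{|U|}n$, and the bounded-neighborhood property for sets of size up to $\CNbound$ lets this be iterated.

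\textbf{Accounting.} Each phase advances depth by $t$ and adds $O(\log S)$ blocks to $U$. We stop when $|U|$ reaches $\CNbound$. That gives about $\CNbound/\log S$ phases and total depth about $t\CNbound/\log S$, which must be at most $D$.

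\textbf{Main obstacle: the distortion bound.} The crux is the anti-concentration step, namely bounding the probability of landing at a node of high rank despite the non-uniformity of the distribution. Each fixed block multiplies probabilities by up to $(1+O(\beta))$, and fixing blocks inside the walk also correlates coordinates. The cumulative deviation after $m$ newly fixed blocks is roughly $\exp(O(\beta m)+O(m^2/\CNbound))$. The second term comes from the degree drop $\edgeProbResLin^{|U|}$ and collisions with $\CNbound\le 1/(100(1-\edgeProbResLin))$.

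Choosing $t\approx \sqrt{\log S}\cdot\min(\sqrt\CNbound,1/\beta)$ keeps this deviation polynomially bounded in $S$. Substituting into $D\ge t\CNbound/\log S$ gives $D\sqrt{\log S}\geq\Omega(\CNbound\min(\sqrt\CNbound,1/\beta))$. I would first try to prove a hybrid bound comparing the walk's distribution to the uniform distribution on $\prod_i N^\cap(U,i)$, via a ratio-of-densities argument block by block. This is where the two terms $\sqrt\CNbound$ and $1/\beta$ should emerge.
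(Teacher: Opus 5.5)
There is a genuine gap. Your outer accounting matches the paper's: phases of length $t\approx\sqrt{\log S}\,\min(\sqrt R,1/\beta)$, a node of rank $O(\log S)$ found by averaging over $S$ nodes, $O(\log S)$ new clique vertices per phase, hence about $R/\log S$ phases. But the step you yourself call the crux is not proved, and the heuristic you give for it does not fit the walk you define.

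\textbf{Why the distortion heuristic fails.} You re-condition every free block on $N^\cap(U,i)$ each time a vertex joins $U$. So your walk is not a walk for any fixed input distribution, and the rank bound has nothing to apply to directly. The bounded-common-neighborhood property controls only the sizes $|N^\cap(U,i)|$. It says nothing about how parities are distributed inside these sets, which are not affine. A single linear constraint can therefore be biased by a constant factor (e.g.\ $1/4$ vs.\ $3/4$ inside a set of density $2/3$), not by $(1\pm O(\beta))$. That bias is harmless against a fixed product measure, since the $(3/4)^r$ in \autoref{lemma:rank-probability-relation} absorbs it. But it shows that the $(1\pm\beta)^{O(\cdot)}$ closeness you invoke is not available. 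A block-by-block density comparison with $\prod_i N^\cap(U_0,i)$ also cannot give $\exp(O(\beta m+m^2/R))$. Adding one vertex to $U$ shrinks \emph{every} free block by a factor of about $\alpha$, so a pointwise density ratio over full inputs is of order $\alpha^{-mk'}$, with $k'$ the number of free blocks, and $k$ may be far larger than $R$.

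\textbf{The restart is underspecified.} Fixing $\cl(\Psi_w)$ to ``vertices in the common neighborhood'' does not make $\Psi_w$ true. The values must be consistent with $\Psi_w$ and pairwise adjacent. The safe remainder of $\Psi_w$ must also be removed by an affine restriction, so that the sub-DAG at $w$ solves a genuine smaller instance.

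\textbf{What the paper does instead.} It avoids the distortion problem by keeping $M$ fixed within a phase. \autoref{alg:PDTsimulator} walks exactly according to $\bm x\sim\prod_i N^\cap(M,i)$. When a query first touches a free block $i$ with pivot bit $j$, it draws $y\sim N^\cap(M,i)$ and fixes every bit of block $i$ except $x_{i,j}$. If $y^{\oplus j}\in N^\cap(M,i)$, it answers the query with a fair coin and records both $y$ and $y^{\oplus j}$. Visit probabilities then coincide with those of the product input (\autoref{lemma:PDTwalkRelation}). The rank bound is just \autoref{lemma:rank-probability-relation}, using only $|N^\cap(M,i)|\ge 2n/3$.

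The clique condition becomes a failure event: two recorded vertices are non-adjacent. \autoref{lemma:pdtSuccessProb} uses the bounded common neighborhoods to show that a depth-$d$ run succeeds with probability $\exp(-O(d\beta+d^2(1-\alpha)))$. This is $S^{-O(1)}$ for $d=\sqrt{\log S}\,\min(\sqrt R,1/\beta)$, since $1-\alpha\le 1/(100R)$. So the expression you guessed is the right one, but it enters as a success probability, not as a density ratio.

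\autoref{lemma:assignment-extraction} then handles the restart in three steps:
\begin{enumerate}
\item A successful run ending at $v$ supplies values on $\cl(\Psi_v)$ that are consistent with $\Psi_v$ and form a clique with $M$.
\item Each of the at most $r$ pivot blocks of the safe remainder is fixed except for one bit, chosen so that both completions lie in the common neighborhood.
\item The pivots are substituted affinely.
\end{enumerate}
This gives an affine restriction $\rho$ that implies $\Psi_v$, with $|M^*\setminus M|\le 2r$.
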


The remainder of this section is dedicated to the proof of this theorem. But first we show that \autoref{thm:reslin-bound-random-graph} follows from \autoref{thm:reslin-bound-bcn}.

\begin{proof}[Proof of \autoref{thm:reslin-bound-random-graph}]
    Let $\edgeProbResLin = p \geq 1-n^{-\gamma}$. We choose $\delta$ differently according to $\gamma$. Let $\delta = \gamma$ if $\gamma <1$, otherwise, let $\delta = 1 - \varepsilon$. Now let $\CNbound = n^\delta/100$ and let $\beta = \sqrt\frac{18\log(kn)}{n^{1-\delta}}$.
    By \autoref{lemma:neigbor-concentrates}, $\random{G}$ will asymptotically almost surely have $(\edgeProbResLin,\beta,\CNbound)$-bounded common neighbourhoods. Note that $R \leq \rbound $
    and so we can apply \autoref{thm:reslin-bound-bcn} and obtain the lower bound 
    \begin{equation*}
        D\sqrt{\log S}\geq\Omega(\CNbound\cdot\min(\sqrt{\CNbound},1/\beta))  \,.
    \end{equation*}
    
    We split the analysis into cases depending on $\gamma$. If $\gamma < 1/2$, 
    note that $1/\beta \geq n^{-\delta/2} = n^{-\gamma/2}  \geq \sqrt{\CNbound}$ for large enough $n$ and $k< \exp(n^{o(1)})$,
    and therefore we get the first lower bound, that is,
$ D\sqrt{\log S}\geq \Omega(n^\gamma\cdot n^{\gamma/2}) = \Omega( n^{3\gamma/2})$.

If $\gamma \geq 1/2$, note that 
$1/\beta \leq n^{(1-\delta)/2}\leq n^{\delta/2} = 10\sqrt{\CNbound}$ for large enough $n$ and using the assumption $k< \exp(n^{o(1)})$.
This implies that we get the bound
\begin{equation*}
        D\sqrt{\log S}\geq \Omega\left(n^\gamma\cdot \frac{n^{\frac{1-\gamma}{2}}}{\sqrt{\log nk}} \right) 
        = \Omega\left(\frac{n^{\frac{1+\gamma}{2}}}{\sqrt{\log nk}} \right)
    \end{equation*}
for $1/2\leq \gamma < 1$ and the bound
   \begin{equation*}
        D\sqrt{\log S}\geq\Omega\left(n^{\delta}\cdot \frac{n^{\frac{1-\delta}{2}}}{\sqrt{\log nk}} \right) 
        = \Omega\left(\frac{n^{\frac{1+\delta}{2}}}{\sqrt{\log nk}} \right)
        = \Omega\left(\frac{n^{1 - \varepsilon/2}}{\sqrt{\log nk}} \right)
        \geq \Omega\left(n^{1 - \varepsilon}\right)
    \end{equation*}
for $\gamma > 1$, where for the last inequality we assume $n$ large enough and $k< \exp(n^{o(1)})$.

\end{proof}

The proof of \autoref{thm:reslin-bound-bcn} adapts the proof by Byramji and Impagliazzo \cite{BI25bphp} of a $\ResLin$ size depth tradeoff for the BPHP. They do this by showing that in a small $\ResLin$ refutation of BPHP, it is possible to repeatedly perform long random walks without reaching a sink. 

\subsection{Technical Lemmas and Definitions}

Here we collect some technical preliminaries that are used only in this section. We start by recalling the notion of safe systems, introduced by \cite{EGI24regular}. 

We say that a set of linear forms $F$ over variables $x_{i,j} \in ({\{0,1\}}^{\log n})^{[k]}$ is \emph{dangerous} if the number of distinct blocks from which variables appear in $F$ is smaller than the size of $F$. If no subset of $F$ is dangerous we say that $F$ is \emph{safe}. It turns out that safe sets have a very nice characterization making them nice to work with.

\begin{lemma}[\cite{EGI24regular}]
\label{lemma:safeSystem-charactarisation}
Let $L$ be a collection of $k$ independent linear forms and $M$ the corresponding coefficient matrix. $V$ is safe if and only if we can pick $k$ variables, no two from the same block, such that the corresponding columns in $M$ are linearly independent.
\end{lemma}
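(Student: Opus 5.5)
The plan is to treat the statement as a transversal problem for the column matroid of $M$ and then apply Rado's theorem. Write $M$ as a $k\times N$ matrix over $\F_2$: its rows are the forms in $L$, its columns are the variables $x_{i,a}$, and the columns are grouped by block, with $A_b$ the set of columns of block $b$. Since the $k$ forms are independent, $M$ has rank $k$. The condition ``pick $k$ variables, no two from the same block, with linearly independent columns'' then says exactly that the family $(A_b)_b$ has an independent partial transversal of size $k$ in the column matroid of $M$.

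For the direction ``transversal $\Rightarrow$ safe'', let $C$ be the chosen set of $k$ columns. The $k\times k$ submatrix $M_C$ is invertible. Take any subset $F\subseteq L$ of size $t$. The $t\times k$ submatrix of $M_C$ with rows $F$ has rank $t$, so at least $t$ columns of $C$ have a nonzero entry in some row of $F$. These columns lie in pairwise distinct blocks, so $F$ mentions at least $t$ blocks and is not dangerous. The same argument applies to any $t$-dimensional subspace of the row span, because multiplying rows by an invertible matrix does not change the rank.

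For the converse, I would use the deficiency form of Rado's theorem. The maximum size of an independent partial transversal of $(A_b)_{b\in[m]}$ equals
\[
\min_{J\subseteq[m]}\Bigl(r\bigl(\textstyle\bigcup_{b\in J}A_b\bigr)+m-|J|\Bigr).
\]
Suppose this minimum is less than $k$, attained at some $J$, and let $K=[m]\setminus J$. Then the columns in blocks $J$ span a space of rank less than $k-|K|$. Hence the set of row combinations $w$ with $wM$ vanishing on every column of $J$ has dimension more than $|K|$. This gives a subspace of the span of $L$ of dimension greater than $|K|$ whose forms involve only the blocks in $K$, which is a dangerous configuration.

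The main obstacle is the last step, turning this dangerous subspace into a dangerous subset of $L$ as the definition literally requires. I would first try to show that we may assume $L$ is in a suitable reduced form: perform Gaussian elimination relative to a basis of this subspace, and replace $L$ by an equivalent basis that contains a basis of the subspace as a subset. If this reduction cannot be justified for the given $L$, I would fall back on reading ``safe'' as a property of the span, as in \cite{EGI24regular}, where safety is invariant under change of basis. The remaining steps are routine linear algebra.
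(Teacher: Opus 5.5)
The paper gives no proof of this lemma; it is quoted from \cite{EGI24regular}. Your argument is correct and complete once safety is read as a property of the span, i.e.\ no $t$ linearly independent forms in $\langle L\rangle$ mention fewer than $t$ blocks. The direction from a transversal to safety works as written. In the other direction, the defect form of Rado's theorem (equivalently, matroid intersection of the column matroid of $M$ with the partition matroid of the blocks) gives, whenever no transversal exists, a subspace of $\langle L\rangle$ of dimension $k-r(A_J)>|K|$ supported on the blocks in $K$. That subspace is dangerous in the span sense.

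You should not treat the span reading as a fallback, though. Your first route was to replace $L$ by a basis containing a basis of the dangerous subspace and then find a dangerous subset of $L$ itself. That route cannot work, because under the paper's literal definition, where only subsets of $F$ itself are tested, the ``only if'' direction is false. Take three blocks and $f_i=x_{1,i}+x_{2,1}+x_{3,1}$ for $i=1,2,3$. These forms are independent, and every nonempty subset of $\{f_1,f_2,f_3\}$ mentions all three blocks, so no subset is dangerous. Yet the columns of $x_{2,1}$ and $x_{3,1}$ are both $(1,1,1)^{T}$, and every other column outside block $1$ is zero. So no three variables from distinct blocks have independent columns. Correspondingly, $\langle f_1+f_2,\,f_1+f_3\rangle$ is a $2$-dimensional subspace living in block $1$ alone, which is exactly your Rado witness with $J=\{2,3\}$.

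Safety under a change of basis of $L$ is therefore not preserved under the literal definition, and no Gaussian-elimination normalization can rescue it. The span-based notion is the one used in \cite{EGI24regular}, and it is also what the paper needs when it applies the lemma to a linear system such as $\Psi|_{\rho_1}$, where only the span matters. State that as your hypothesis from the outset, and your proof goes through unchanged.
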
 

To handle non-safe systems we use the notion of closure. The \emph{closure} of $F$, $\cl(F) \subseteq [k]$ is defined as the minimal set (with respect to inclusion) of blocks such that $F[\setminus \cl(F)]$ is safe, where $F[\setminus S]$ denotes the set of linear forms $F$ after setting variables belonging to a block in $S$ to zero. It turns out that the closure is uniquely defined \cite[Lemma 4.1]{EGI24regular}. In an abuse of notation we will, for a linear system $\Psi$, write $\cl(\Psi)$ to refer to the closure of 

\begin{lemma}[\cite{EGI24regular}]
\label{lemma:safe-dimension-bound} If $F$ is a collection of linear forms, then $\setsize{\cl(F)} + \mathrm{dim}(\langle F[\setminus\cl(F)]\rangle) \leq
\mathrm{dim}(F)$.
\end{lemma}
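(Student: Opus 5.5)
The plan is to build the closure greedily, one dangerous set at a time, and to track the quantity $\Phi(S) \coloneqq \setsize{S} + \dim(\langle F[\setminus S]\rangle)$ as the set $S$ of removed blocks grows. The goal is to show that $\Phi$ never increases. Since $\Phi(\emptyset) = \dim(F)$, this gives the bound once we know the process ends exactly at $\cl(F)$.

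The procedure is as follows. Start with $S_0 = \emptyset$. While $F[\setminus S_t]$ is not safe, choose a dangerous subset $D \subseteq F[\setminus S_t]$. We may take $D$ linearly independent: pass to a dangerous set whose forms are independent, using the characterization of \autoref{lemma:safeSystem-charactarisation} applied to a basis of $\langle D\rangle$, or just take a minimal dangerous set. Let $B(D)$ be the set of blocks whose variables occur in $D$, so that $\setsize{B(D)} < \setsize{D}$. Set $S_{t+1} = S_t \cup B(D)$.

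\textbf{Potential does not increase.} Zeroing the variables of the blocks in $B(D)$ is a linear map from $\langle F[\setminus S_t]\rangle$ onto $\langle F[\setminus S_{t+1}]\rangle$. Every form in $D$ is supported only on blocks of $B(D)$, so $\langle D\rangle$ lies in its kernel. Hence
\[
\dim\langle F[\setminus S_{t+1}]\rangle \le \dim\langle F[\setminus S_t]\rangle - \setsize{D}.
\]
At the same time, $\setsize{S_{t+1}} \le \setsize{S_t} + \setsize{B(D)} < \setsize{S_t} + \setsize{D}$. Therefore $\Phi(S_{t+1}) < \Phi(S_t)$ whenever $D$ is nonempty. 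The process terminates because $S_t$ strictly grows inside $[k]$ (a dangerous $D$ uses at least one block not yet in $S_t$, since forms of $F[\setminus S_t]$ vanish on $S_t$). At termination we have a set $S^\ast$ with $F[\setminus S^\ast]$ safe and $\Phi(S^\ast) \le \dim(F)$.

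\textbf{Identifying $S^\ast$ with $\cl(F)$.} This is the step I expect to be the main obstacle. I would use the structure behind the uniqueness of the closure \cite[Lemma 4.1]{EGI24regular}, namely that $\cl(F)$ is the least set $T$ (under inclusion) with $F[\setminus T]$ safe. Then $\cl(F) \subseteq S^\ast$, and I would show the reverse inclusion by induction on $t$: assume $S_t \subseteq T$ for every $T$ with $F[\setminus T]$ safe, and show $B(D) \subseteq T$.

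Suppose instead that $B(D) \not\subseteq T$. The forms $D[\setminus T]$ are supported on the $b = \setsize{B(D)\setminus T}$ blocks outside $T$. I would bound $\dim\langle D[\setminus T]\rangle$ from below by $\setsize{D} - (\text{dimension lost on } B(D)\cap T)$. Choosing $D$ minimal dangerous, so that every proper subset is safe, should force $\dim\langle D[\setminus T]\rangle > b$. That contradicts safety of $F[\setminus T]$.

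If this direct argument becomes delicate, the fallback is the monotonicity of $\Phi$ itself. Run the same greedy procedure starting from $S_0 = \cl(F)$ rather than from $\emptyset$: it stops immediately, since $F[\setminus \cl(F)]$ is already safe. So it suffices to show $\Phi(\cl(F)) \le \Phi(S^\ast)$. This follows if $\cl(F) \subseteq S^\ast$ and every block of $S^\ast \setminus \cl(F)$ contributes at least one dimension that is removed when passing from $F[\setminus \cl(F)]$ to $F[\setminus S^\ast]$. That contribution is exactly what safety of $F[\setminus \cl(F)]$ provides via \autoref{lemma:safeSystem-charactarisation}: there is a set of independent columns, at most one per block, so each such block carries a pivot.
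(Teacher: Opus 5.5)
The paper only cites this lemma, so your argument has to stand on its own. The potential computation is correct whenever $D$ is linearly independent with $\lvert B(D)\rvert<\lvert D\rvert$. To guarantee such a $D$ exists, fix the span-based notion: a subspace $U$ is dangerous if $\dim U>\lvert B(U)\rvert$, and $F$ is safe if $\langle F\rangle$ has no dangerous subspace. Under the paper's literal wording, a set like $\{a,c,a+c\}$ with $a,c$ in different blocks is dangerous but contains no independent dangerous subset. The real gap is the identification $S^\ast=\cl(F)$, and neither of your routes works as written.

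\textbf{Main route.} What you must bound is $\dim(\langle D\rangle\cap\langle X_T\rangle)$, where $\langle X_T\rangle$ denotes the forms supported on blocks of $T$. This subspace need not be spanned by a subset of $D$, so ``every proper subset of $D$ is safe'' does not control it.

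For instance, let block $1$ contain variables $a,a'$, let block $2$ contain $c$, and let $F=D=\{a+c,a'+c,c\}$. Then $D$ is dangerous and all its proper subsets are safe. Yet $\cl(F)=\{1\}$, since $F[\setminus\{1\}]=\{c\}$ is safe, while your step yields $S_1=\{1,2\}\not\subseteq\cl(F)$. Here $\langle D\rangle\cap\langle X_{\{1\}}\rangle=\langle a,a'\rangle$ has dimension $2>1$.

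The fix is to take $D$ to be a basis of an inclusion-minimal dangerous \emph{subspace} $U\subseteq\langle F[\setminus S_t]\rangle$; $U$ need not be spanned by elements of $F[\setminus S_t]$. Suppose $B(U)\not\subseteq T$. Then $U\cap\langle X_T\rangle$ is a proper subspace of $U$, so by minimality its dimension is at most $\lvert B(U)\cap T\rvert$. Hence zeroing $T$ sends $U$ to a subspace of $\langle F[\setminus T]\rangle$ supported on $B(U)\setminus T$ with dimension greater than $\lvert B(U)\setminus T\rvert$. This contradicts safety of $F[\setminus T]$.

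\textbf{Fallback.} Here the inequality is reversed. Let $Q=S^\ast\setminus\cl(F)$, and let $r$ be the number of dimensions lost when zeroing $Q$ in $L=\langle F[\setminus\cl(F)]\rangle$. Then $\Phi(S^\ast)-\Phi(\cl(F))=\lvert Q\rvert-r$, so you need $r\le\lvert Q\rvert$.

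Your condition, ``at least one lost dimension per block'', gives $r\ge\lvert Q\rvert$, which is the useless direction. It is also false in general. Take $F=\{a,a',a'',c+d\}$ with $a,a',a''$ in one block and $c,d$ in two other blocks, and greedy choice $D=F$. Then $\lvert Q\rvert=2$ but $r=1$.

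The correct direction follows immediately from safety. The kernel of zeroing $Q$ on $L$ is the subspace of $L$ supported on $Q$, which has dimension at most $\lvert Q\rvert$ because $L$ is safe. You also have $\cl(F)\subseteq S^\ast$, since a unique minimal member of a finite family is contained in every member. Together these give $\Phi(\cl(F))\le\Phi(S^\ast)\le\dim F$ for \emph{arbitrary} greedy choices. This is the shortest complete proof along your lines, and it needs no minimality at all.
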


To assign variables in a way that satisfies linear systems, we need the concept of affine restrictions. A (block respecting) \emph{affine restriction} $\rho$ is a partial assignment that for some set $F\subseteq [k]$ sets all variables $x_{i,j}$, $i\in [k]\setminus F$ as an affine function of the unassigned variables $x_{i,j}$, $i\in F$. In this paper all affine restrictions will be block respecting and will for brevity just be called affine restrictions. For a linear system $\Psi$ we use $\Psi|_\rho$ to denote the system after substituting variables according to $\rho$.

We introduce the search problem $\mathrm{NonEdge}^G_M \subseteq (\{0,1\}^{\log n})^{[k]\setminus B(M)}\times \binom{V(G)}{2}$, where $M$ is some set of vertices of $G$ such that vertices in $M$ from distinct blocks in $G$ always have edges between them. As input we are given an assignment to the blocks outside of the support of $M$ such that every selected vertex lies in $N^\cap(M,[k]\setminus B(M))$. We are then asked to output two selected vertices not connected by an edge. When the graph $G$ is clear from context, we simply write $\mathrm{NonEdge}_M$. Observe that $\mathrm{NonEdge}_M^G$ is equivalent to $\search_{\binClique{G'}{k - |B(M)|}}$ where $G'$ is the subgraph of $G$ induced by $N^\cap(M,[k]\setminus B(M))$.  
Thus, any affine-DAG solving the falsified clause search problem also solves $\mathrm{NonEdge^G_\emptyset}$, as such it is enough to find lower bound for this problem. 
The following lemma was stated in $\cite{BI25bphp}$ with $A_1=A_2=\dots=A_n$. Their proof still works for or slightly more general formulation.

\begin{lemma}[{\cite[Lemma 4.6]{BI25bphp}}]
    \label{lemma:rank-probability-relation}
    Let $\Psi$ be a linear system on $(\{0,1\}^{\log n})^k$ whose rank is $r$. For each $i \in [k]$ let $A_i \subseteq \{0,1\}^{\log n}$ be such that $|A_i|\geq 2n/3$. Let $\bm x_i \sim A_i$ independently for each $i \in [k]$ Then \begin{equation*}
        \Pr[\bm x\textrm{ satisfies }\Psi] \leq \left(\frac{3}{4}\right)^r.
    \end{equation*} 
\end{lemma}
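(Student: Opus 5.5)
Pass to a safe subsystem via the closure, then use independence of blocks and the density of each $A_i$ to show each of $r$ carefully chosen constraints costs a factor $3/4$.

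\emph{Reduction.} Choose $r$ linearly independent equations of $\Psi$ forming a system $\Psi'$. Since satisfying $\Psi$ implies satisfying $\Psi'$, it suffices to bound $\Pr[\bm x \text{ satisfies } \Psi']$. If $\Psi'$ is inconsistent the probability is $0$, so assume it is consistent.

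\emph{Key step.} We want blocks $i_1,\dots,i_r$, pairwise distinct, and an ordering of $r$ independent linear forms $\ell_1,\dots,\ell_r$ spanning the forms of $\Psi'$, such that for each $t$:
\begin{itemize}
\item $\ell_t$ involves block $i_t$;
\item $\ell_t$ involves no block $i_{t'}$ with $t'<t$.
\end{itemize}
If $\Psi'$ is safe, \autoref{lemma:safeSystem-charactarisation} gives $r$ variables in distinct blocks whose columns in the coefficient matrix are independent. Gaussian elimination then puts the forms in triangular shape: $\ell_t$ contains the $t$-th chosen variable and none of the earlier chosen variables. This handles the chosen variables, but not whole blocks, so I will instead order blocks and exploit that each step conditions only on blocks fixed so far.

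\emph{Conditioning argument.} Reveal the blocks $\bm x_{j}$ in some order; in the safe case, take the $r$ chosen blocks last. For each $t$, condition on all blocks other than $i_t$ being already determined by the time $\ell_t$ is examined. Then $\ell_t(\bm x)=c_t$ becomes a nontrivial affine constraint $\langle a,\bm x_{i_t}\rangle = b$ with $a\neq 0$. Any such hyperplane has exactly $n/2$ points, so
\[
\Pr[\langle a,\bm x_{i_t}\rangle=b]\le \frac{n/2}{|A_{i_t}|}\le \frac{n/2}{2n/3}=\frac34 .
\]
The difficulty is that the forms share blocks, so conditioning on one block's value can affect several constraints. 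The fix is to reveal blocks in an order and attach to each chosen block the constraint on which it is \emph{last} revealed.

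\emph{Handling general $\Psi$ (main obstacle).} Let $C=\cl(\Psi')$. Condition on arbitrary values of the blocks in $C$; the remaining blocks stay independent with the same $A_i$. By \autoref{lemma:safe-dimension-bound} the restricted system $\Psi'[\setminus C]$ has rank at least $r-|C|$ and is safe, so the safe-case argument gives a factor $(3/4)^{r-|C|}$. This loses a factor $(3/4)^{|C|}$, so I would recover it by recursing on the closure blocks.

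A cleaner route avoids closure altogether. Pick blocks greedily: reveal blocks one at a time in a fixed order. After each reveal, the rank of the system (restricted to unrevealed blocks, with revealed blocks substituted) drops by some $d_j \le \log n$. I would look for a chain of subspaces where each step is a single nonzero form in a fresh block. This would be achieved by ordering blocks so each rank drop is $1$ and paying $3/4$ per drop, which requires the block-triangular structure to hold.

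If this fails, I would fall back on the standard fact (the original Lemma 4.6 argument of \cite{BI25bphp}) for a single block: an affine subspace of codimension $d$ in $\F_2^{\log n}$ meets $A_i$ with probability at most $\min(1,\,2^{-d}\cdot 3/2)$. I would then argue that block-by-block rank drops $d_j$ sum to $r$ and each contributes at most $(3/4)^{d_j}$. This needs $2^{-d}\cdot 3/2\le (3/4)^d$ for $d\ge1$, which holds: $d=1$ gives $3/4$, and for larger $d$ we have $(3/2)^{d}\ge 3/2$, i.e.\ $(3/4)^d \ge (3/2)\,2^{-d}$. Multiplying the conditional bounds along the reveal order yields $(3/4)^r$.

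The main obstacle is justifying that the rank drops along the reveal order are well-defined conditionally and sum exactly to $r$. This follows by viewing the system's solution set fibered over the revealed blocks, with each fiber's constraint on the next block being an affine subspace of codimension $d_j$.
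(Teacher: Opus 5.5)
Your final (fallback) argument is correct and complete. Reveal $\bm x_1,\dots,\bm x_k$ in order. For every admissible prefix, the admissible values of the next block form a fibre of the projected solution space, hence an affine subspace of $\F_2^{\log n}$ of a fixed codimension $d_j$, and $\sum_j d_j=r$. By independence and $|A_j|\ge 2n/3$, each step then costs at most $\min(1,\tfrac32 2^{-d_j})\le(3/4)^{d_j}$. The paper gives no proof of its own and defers to \cite[Lemma 4.6]{BI25bphp}. Your block-by-block chain-rule argument is the standard proof of that statement. It also makes transparent the paper's remark that distinct sets $A_i$ are harmless, since each step uses only the independence of $\bm x_j$ from earlier blocks and $|A_j|\ge 2n/3$.

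You should delete everything before the fallback, because it contains false claims that the final argument never uses.

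First, \autoref{lemma:safe-dimension-bound} gives $\dim\langle\Psi'[\setminus C]\rangle\le r-|C|$, not $\ge r-|C|$. For $\Psi'=\{x_{1,1},x_{1,2}\}$ the closure is $\{1\}$ and nothing survives. So the closure route can lose up to $|C|\log n$ in rank, not just $|C|$.

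Second, the block-triangular structure with pairwise distinct blocks need not exist even for safe systems, and neither need a reveal order with every $d_j\le 1$. Take the forms $x_{1,1}+x_{2,2}$, $x_{2,1}+x_{3,2}$, $x_{3,1}+x_{1,2}$. They are safe, since the columns of $x_{1,1},x_{2,1},x_{3,1}$ are independent. Yet every nonzero form in their span touches at least two blocks, so in any order the last block revealed has $d_j=2$. This is exactly why you need the bound for general $d_j$, which your fallback supplies.
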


Lastly, we will later need to use a Chernoff bound on the sum of dependent variables.  For this the following lemma will be useful.

\begin{lemma}[{\cite[Lemma 17.3]{MU17}}]
\label{lemma:conditional-chernoff}
Let $\bm X_1,\bm X_2\dots,\bm X_n$ be a sequence of random variables in an arbitrary domain and let $\bm Y_1,\bm Y_2,\dots,\bm Y_n$ be a sequence of binary variables with the property that $\bm Y_i = \bm Y_i(\bm X_1,\dots,\bm X_i)$. If
\begin{equation*}
    \Pr[\bm Y_i =1\mid \bm X_1,\dots, \bm X_{i-1}] \leq p,
\end{equation*}
then 
\begin{equation*}
    \Pr\left[\sum_{i=1}^n \bm Y_i>k\right]\leq \Pr \left[\mathrm{Ber}^n(p)>k\right].
\end{equation*}
\end{lemma}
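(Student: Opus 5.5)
We prove the statement by induction on $n$, conditioning on the first variable and using that a binomial tail is monotone in the threshold. This avoids building an explicit coupling. Along the way we prove the slightly stronger form in which the hypothesis $\Pr[\bm Y_i=1\mid \bm X_1,\dots,\bm X_{i-1}]\le p$ holds conditionally on every history. This form is what makes the induction go through.

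For $n=1$, there is nothing to condition on, so $\Pr[\bm Y_1=1]\le p$ by hypothesis. Hence for every integer $k$ we have $\Pr[\bm Y_1>k]\le \Pr[\mathrm{Ber}(p)>k]$:
\begin{itemize}
\item for $k\ge 1$ both sides are $0$;
\item for $k=0$ this is exactly $\Pr[\bm Y_1=1]\le p$;
\item for $k<0$ both sides are $1$.
\end{itemize}

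For the inductive step, define $f(j)=\Pr[\mathrm{Ber}^{n-1}(p)>k-j]$ for $j\in\{0,1\}$. Fix a value $x_1$ of $\bm X_1$. Conditioned on $\bm X_1=x_1$, the sequences $\bm X_2,\dots,\bm X_n$ and $\bm Y_2,\dots,\bm Y_n$ satisfy the hypothesis of the lemma with $n-1$ variables:
\begin{itemize}
\item each $\bm Y_i$ is a function of $\bm X_2,\dots,\bm X_i$ once $x_1$ is fixed;
\item $\Pr[\bm Y_i=1\mid \bm X_1=x_1,\bm X_2,\dots,\bm X_{i-1}]\le p$ by assumption.
\end{itemize}
The value $\bm Y_1=\bm Y_1(x_1)$ is also determined. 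The induction hypothesis, applied with threshold $k-\bm Y_1(x_1)$, therefore gives
\begin{equation*}
\Pr\Bigl[\textstyle\sum_{i=2}^n \bm Y_i> k-\bm Y_1 \,\Bigm|\, \bm X_1=x_1\Bigr]\le f(\bm Y_1(x_1)).
\end{equation*}
Averaging over $\bm X_1$ yields
\begin{equation*}
\Pr\Bigl[\textstyle\sum_{i=1}^n \bm Y_i>k\Bigr]\le \Exp[f(\bm Y_1)].
\end{equation*}

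It remains to bound $\Exp[f(\bm Y_1)]$.
\begin{itemize}
\item Since $f$ is nondecreasing in $j$, we can write $\Exp[f(\bm Y_1)]=f(0)+\Pr[\bm Y_1=1]\,(f(1)-f(0))$.
\item The factor $f(1)-f(0)$ is nonnegative, so using $\Pr[\bm Y_1=1]\le p$ we get $\Exp[f(\bm Y_1)]\le (1-p)f(0)+pf(1)$.
\item Conditioning on the first coordinate of $\mathrm{Ber}^n(p)$ shows that $(1-p)f(0)+pf(1)$ is exactly $\Pr[\mathrm{Ber}^n(p)>k]$.
\end{itemize}
This closes the induction.

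The only delicate point is this monotonicity step. It is where we replace the unknown law of $\bm Y_1$ by $\mathrm{Ber}(p)$. It works only because the conditional tail bound from the induction hypothesis is uniform in the history, with the same $f$ for every $x_1$, and because $f$ is increasing in $j$. Everything else is bookkeeping of conditional probabilities over an arbitrary domain. There, one either assumes the domains are finite, as in our application, or uses regular conditional distributions.
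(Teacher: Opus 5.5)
Your induction is correct and complete. Conditioning on $\bm X_1$, applying the inductive claim for every integer threshold (including negative ones) to the remaining $n-1$ variables, and then using $\Pr[\bm Y_1=1]\le p$ with monotonicity of $f$ is exactly what is needed; one small wording slip is that the identity $\Exp[f(\bm Y_1)]=f(0)+\Pr[\bm Y_1=1](f(1)-f(0))$ uses only that $\bm Y_1$ is binary, and monotonicity of $f$ enters only in the next step. The paper gives no proof of its own and imports the lemma from \cite{MU17}, where it rests on the same idea of comparing each $\bm Y_i$, given the past, with an independent Bernoulli($p$) trial one variable at a time, so your argument is a rigorous version of that standard reasoning.
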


\subsection{Proof of \autoref{thm:reslin-bound-bcn}}

Throughout this section we let $\edgeProbResLin$, $\beta<\betabound$ be positive real values and $\CNbound<\rbound$ a positive integer. Furthermore, we assume $G$ is a fixed graph with $(\edgeProbResLin,\beta,\CNbound)$-bounded common neighborhoods. Note that this choice of parameters ensures that for every vertex set $|M|\leq R$, $|N^\cap(M,i)|\leq2n/3$.

The proof of \autoref{thm:reslin-bound-bcn} follows the same strategy as \cite{BI25bphp}. Given an affine-DAG solving $\mathrm{NonEdge}_M$ we wish to find an affine-DAG of much smaller depth solving a slightly smaller instance $\mathrm{NonEdge}_{M^*}$; repeating this process shows that the original depth must have been large. Formally, we use the following lemma which we will spend the remainder of the section proving.

\begin{restatable}{restatablelemma}{lemmareduction}
    \label{lemma:reduction}
    Let $M\subseteq V(G)$ be a set of vertices 
    and suppose there is an affine-DAG $\affineDAG$ solving $\textrm{NonEdge}_M$ whose depth is at most $D$ and size is at most $S$. Then 

    $D>(R-|M|)/8$.
    
    Moreover, for some universal constant $C$ it holds that if $|M|+8d\leq R$ for $d=\lfloor \sqrt{\log S} \cdot \min(\sqrt\CNbound,1/\beta)\rfloor$ 
    then there exists a set
    $M^*\supseteq M$ of size $|M^*|\leq |M|+C\log S$ and an affine-DAG $\affineDAG'$ solving $\textrm{NonEdge}_{M^*}$ whose depth is at most $D-d$ and size is at most $S$. 
\end{restatable}

First we show how \autoref{thm:reslin-bound-bcn} follows from \autoref{lemma:reduction}.

\begin{proof}[Proof of \autoref{thm:reslin-bound-bcn}]
A $\ResLin$-refutation of $\binCliqueGk$ of depth $D$ and size $S$ immediately gives an affine $\affineDAG$ solving $\mathrm{NonEdge}_\emptyset$. Let $d=\lfloor \sqrt{\log S} \cdot \min((1-\edgeProbResLin)^{-1/2},1/\beta)\rfloor$ we start by considering the case when $8d\geq R/2$. Then \begin{equation*}
    R\leq 16d \leq \sqrt{\log S} \min(\sqrt \CNbound,1/\beta)\leq \sqrt{\log S}\sqrt{\CNbound}. 
\end{equation*}
This means $\sqrt{\log S} \geq \sqrt{R}$ and the desired bound follows immediately since by
\autoref{lemma:reduction} we have that $D > R/8$.

Now consider the case $8d<R/2$. We can here use \autoref{lemma:reduction} to find a set $M_1$ of size $|M_1|=C\log S$ and an affine-DAG $\affineDAG_1$ of depth at most $D-d$ and size at most $S$ solving $\mathrm{NonEdge_{M_1}}$. 
We keep applying \autoref{lemma:reduction} iteratively in this way, using $M=M_{t-1}$ to find $M_{t} = M^*$ of size $|M_t|\leq tC\log S$ and $\affineDAG_t$ of depth at most $D-dt$ and size at most $S$. We repeat this process as long as the condition of \autoref{lemma:reduction} holds, i.e., as long as $|M_t|+ 8d \leq \CNbound$. 
Thus the final $t$ must satisfy $tC\log S +8d \geq \CNbound$ which means $t \geq \CNbound/2C\log S$. But at that point we have an affine-DAG $\affineDAG_t$ of depth $D-d\CNbound/2C\log S$ solving $M_t$.
Since an affine-DAG cannot have negative depth we find that $D \geq \Omega(d\CNbound/\log S) = \Omega(\CNbound \cdot \min(\sqrt{\CNbound},1/\beta)/\sqrt{\log S})$ and thus the theorem follows.
\end{proof}

Now all that remains is to prove \autoref{lemma:reduction}. We do so by constructing a random walk according to \autoref{alg:PDTsimulator}, over our affine-DAG. We require from this random walk to satisfy several properties. During the walk, we build an affine restriction, which sets relatively few variables and satisfies all linear systems labeling nodes on the walked path. 
We will ensure that the walk visits all nodes with the same probability, as answering according to a uniformly sampled fixed input.
We show that, with a significant probability, the walk does not witness a missing edge. On the other hand, if the affine-DAG is not sufficiently large, some node $v$ must be visited with a significant probability in one of these successful runs. This is only possible if the rank of the system $\Psi$ labeling $v$ is not too large. 

We can then use the fact that a small rank $\Psi$ can be satisfied
by restricting a few blocks. We combine this with the fact that $v$ was reached in a successful run to find a valid instance of $\mathrm{NonEdge}_{M^*}$ that is solved by the dag rooted at $v$, i.e., a DAG of lower depth as desired.

We start by presenting \autoref{alg:PDTsimulator}. This is essentially the same algorithm as \cite[Algorithm 1]{BI25bphp} with two main differences. First, we require $x_i$ to lie in $N^\cap(M,i)$ rather than an arbitrary set $A$. Second, we say that the algorithm \emph{fails} if it assigns blocks in a way that could select vertices not connected by an edge rather than if it finds a collision between pigeons.  

\begin{algorithm2e}[t!]
\caption{Random walk simulator}
\label{alg:PDTsimulator}
    \KwIn{$M$: Set of vertices. 
    $T$: Parity decision tree.
    }
    $C\leftarrow []$\tcp*{List of vertices used during simulation}
    $F\leftarrow [k]\setminus B(M)$\tcp*{Unassigned blocks} 
    $L\leftarrow \emptyset$\tcp*{Collection of equations}
    $v\leftarrow \text{root of T}$\;
    \While{$v$ is not a leaf}{ \label{alg-line:while-loop}
    $P'\leftarrow $ query at $v$\;
    $P \leftarrow P'$ after substituting according to $L$\;
    
    \uIf{$P$ is a constant, $b\in \{0,1\}$\label{alg-line:P-is-implied}}{
        update $v$ according to $b$\;
    }\Else{
    $(i,j)\leftarrow \min\{(i,j)\in F\times[{\log n}]\mid x_{i,j}\text{ appears in } P\}$\label{alg-line:set(i,j)}\;
    pick $y$ uniformly from $N^\cap(M,i)$\;
    $L\leftarrow L\cup \{x_{i,h} = y_h\mid h\in[{\log n}]\setminus \{j\}\}$\label{alg-line:fixingx_i,h_j}\;
    \uIf{$y^{\oplus j}\notin N^\cap(M,i)$}{\label{alg-line:bit-flip-check}
        $L\leftarrow L\cup\{x_{i,j} = y_j\}$\;
        append $\{(i,y)\}$ to $C$\;
    }\Else{
    pick $b\in \{0,1\}$ uniformly at random\;
    $L\leftarrow L\cup\{P=b\}$\;
    append $\{(i,y),(i,y^{\oplus j})\}$ to $C$\;
    update $v$ according to $b$\;
    }
    $F\leftarrow F\setminus \{i\}$
    }
  }
  \uIf{there exists $j_1\neq j_2$ such that $ (i_1,y_1)\in C_{j_1}, (i_2,y_2)\in C_{j_2}$ and there is no edge between vertex $y_1$ in block $i_1$ and $y_2$ in block $i_2$}{
  \Return FAIL\;
  }\Else{
  \Return $v,C,L,F$\;
  }
\end{algorithm2e}

Let us state two properties of \autoref{alg:PDTsimulator} that follow immediately from the analysis of \cite{BI25bphp}, the difference lying in the fact that we use $N^\cap(M,i)$ as a set of valid assignments to the $x_i$ rather than an arbitrary set $A$.

\begin{lemma}[{\cite[Lemma 4.1]{BI25bphp}}]
    \label{lemma:simulation-properties}
    At the end of every iteration of the while loop in \autoref{alg:PDTsimulator}
    \begin{enumerate}
        \item the equations in $L$ uniquely determine $x_{i,j}$ for all $i\in[k]\setminus F$, $j\in [{\log n}]$ as an affine function of $x_{i,j}$, $i\in F$, $j\in [{\log n}]$;
        \item fixing the variables in $x_{i,j}$, $i\in F$, $j\in [{\log n}]$ and setting the remaining variables as determined by $L$ ensures that $x_{i'}\in N^\cap(M,i')$ for all ${i'}\in [k]\setminus F$;
        \item $L$ implies all parity constraints on the path from the root to the current node.
    \end{enumerate}
\end{lemma}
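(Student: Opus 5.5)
We prove the three items of \autoref{lemma:simulation-properties} simultaneously, by induction on the number of completed iterations of the while loop of \autoref{alg:PDTsimulator}. The invariant to maintain is slightly stronger than item 1. For each block $i\in[k]\setminus(F\cup B(M))$ already processed, $L$ contains exactly $\log n-1$ equations $x_{i,h}=y_h$ for $h\neq j$. It also contains one further equation that is either $x_{i,j}=y_j$ or of the form $P=b$. In the latter case $x_{i,j}$ appears in $P$, and every other variable of $P$ lies in blocks still in $F$ or in blocks already determined (after substitution). At initialization $L=\emptyset$ and $F=[k]\setminus B(M)$, so all three items hold vacuously. The blocks of $B(M)$ are not part of the $\textrm{NonEdge}_M$ instance.

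For the inductive step there are two cases. If $P$, after substitution by $L$, is a constant $b$, then $L$ and $F$ are unchanged. The move from $v$ along the edge labeled $P'=b$ is consistent with $L$, since $L$ implies $P'=b$. So items 1 and 2 persist, and item 3 persists because the new edge constraint is already implied. Otherwise $P$ is a nonconstant form in the variables of blocks in $F$ only, since all other variables have been substituted away. We choose the minimal $(i,j)$ with $x_{i,j}$ occurring in $P$ and add $x_{i,h}=y_h$ for $h\ne j$.

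Consider first the subcase $y^{\oplus j}\notin N^\cap(M,i)$. We add $x_{i,j}=y_j$, so block $i$ is fixed to the constant $y\in N^\cap(M,i)$. Items 1 and 2 then follow immediately, and the walk does not move, so item 3 is unaffected. In the other subcase we add $P=b$. Solving this equation for $x_{i,j}$ expresses it as an affine function of the variables of blocks in $F\setminus\{i\}$, which gives item 1 for block $i$ after $F\leftarrow F\setminus\{i\}$. Since the other coordinates of block $i$ are fixed to those of $y$, every assignment to the free blocks sets $x_i\in\{y,y^{\oplus j}\}\subseteq N^\cap(M,i)$, which gives item 2. For item 3, the edge taken is labeled $P'=b$, and $P'$ equals $P$ modulo the old $L$, so $L$ together with $P=b$ implies $P'=b$.

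The main subtlety is the substitution order in item 1. Later blocks must not feed back into earlier ones, which would break unique determination. This is handled by noting that whenever a new block $i$ is removed from $F$, its determining equation involves only blocks still in $F$ at that time. Substituting iteratively in reverse order of removal therefore expresses every determined variable affinely in the final free blocks. This is exactly the argument of \cite[Lemma 4.1]{BI25bphp}, with the set $A$ replaced by $N^\cap(M,i)$.
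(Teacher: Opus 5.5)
Your proof is correct and follows essentially the same route as the paper. The paper gives no argument of its own and defers to the induction over loop iterations in \cite[Lemma 4.1]{BI25bphp}, with the sampling set $A$ replaced by $N^\cap(M,i)$; you carry out exactly that induction, including the triangular (reverse-order) substitution for item 1 and the correct reading that the blocks of $B(M)$ lie outside the $\mathrm{NonEdge}_M$ instance.
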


\begin{lemma}[{\cite[Lemma 4.2]{BI25bphp}}]
\label{lemma:PDTwalkRelation}
Let $W_v$ be the event that node $v$ of $T$ is visited by \autoref{alg:PDTsimulator}, Let $\bm x_i \sim N^\cap(M,i)$ independently for each $i \in [k] \setminus B(M)$, and $V_v$ be the event that running $T$ on $\bm x$ reaches $v$. Then for every $v \in T$, we have $\Pr[W_v] = \Pr[V_v]$. 
\end{lemma}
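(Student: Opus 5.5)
We have to prove Lemma (PDTwalkRelation): the walk of Algorithm~\ref{alg:PDTsimulator} visits each node $v$ with the same probability as running the affine-DAG on $\bm x$, where $\bm x_i\sim N^\cap(M,i)$ independently for $i\in[k]\setminus B(M)$. The plan is to couple the two processes. We show by induction over the steps of the walk that, conditioned on the current state $(v,L,F)$ of the simulator, the distribution of the next node agrees with the distribution of the next node in the honest run. For the honest run we condition on the event that $\bm x$ satisfies $L$, and we use that the coordinates in $F$ are still independent and uniform on $N^\cap(M,i)$.

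\textbf{Invariant.} We maintain the following. Let $\bm x$ be uniform on $\prod_{i\in[k]\setminus B(M)} N^\cap(M,i)$ conditioned on satisfying $L$. Then the blocks $\bm x_i$ with $i\in F$ remain independent and uniform on $N^\cap(M,i)$. For $i\notin F$, the block $\bm x_i$ is either a fixed $y$, or uniform on the pair $\{y,y^{\oplus j}\}$ with the single free bit tied to the parity $P$. Moreover, the probability that the simulator reaches state $(v,L,F)$ equals the probability that the honest run reaches $v$ with $\bm x$ consistent with $L$.

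\textbf{Inductive step.} Take the query $P'$ at $v$, substituted by $L$ to give $P$.
\begin{itemize}
\item If $P$ is constant, then by Lemma~\ref{lemma:simulation-properties}(1) both processes move the same way deterministically.
\item Otherwise, choose the minimal $(i,j)$ with $i\in F$ as in the algorithm. Sampling $y$ uniformly from $N^\cap(M,i)$ and fixing all bits except $j$ is exactly conditioning on $\bm x_i\in\{y,y^{\oplus j}\}$ with the correct marginal weight. Here $\bm x_i=y$ or $y^{\oplus j}$, each with probability $1/|N^\cap(M,i)|$, and the pair is picked with probability proportional to how many members lie in $N^\cap(M,i)$.
\item If $y^{\oplus j}\notin N^\cap(M,i)$, the bit $x_{i,j}$ is forced to $y_j$, matching the honest conditional distribution.
\item If $y^{\oplus j}\in N^\cap(M,i)$, the bit $x_{i,j}$ is uniform, independent of the other blocks in $F$. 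Since $x_{i,j}$ appears in $P$ and the other variables of $P$ are either determined or independent, $P$ is a uniform bit. Choosing $b$ uniformly and adding $P=b$ therefore reproduces the honest branching probability. The remaining blocks of $F\setminus\{i\}$ stay independent uniform.
\end{itemize}
Summing over the paths reaching $v$ then gives $\Pr[W_v]=\Pr[V_v]$.

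\textbf{Main obstacle.} The delicate point is the last case. We must check that, conditioned on $L$ together with the choice of $y$, the parity $P$ is exactly uniform. The worry is that $P$ also involves variables of earlier-assigned blocks whose free bits were consumed by previous equations $P=b$. We handle this by noting, from Lemma~\ref{lemma:simulation-properties}(1), that after substitution by $L$ every such variable is an affine function of $F$-variables only. Hence $P$ is an affine function of the $F$-blocks, and it contains $x_{i,j}$ with the bit $x_{i,j}$ independent of everything else. This makes $P$ a uniform bit. The argument thus follows \cite{BI25bphp} verbatim, with $A$ replaced by $N^\cap(M,i)$.
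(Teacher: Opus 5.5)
Your coupling argument is correct and is essentially what the paper relies on. The paper gives no separate proof; it invokes \cite[Lemma 4.2]{BI25bphp}, whose proof is this same step-by-step matching of the walk with a uniformly sampled input conditioned on $L$, with the common set $A$ replaced by the block-dependent sets $N^\cap(M,i)$.

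One small correction: the clause of your invariant saying that an already-processed block is uniform on its pair $\{y,y^{\oplus j}\}$ is false once you condition on $L$. For example, if the substituted query is $P=x_{i,j}$, then $P=b$ fixes that block. You never use this clause, though. The argument needs only that the blocks in $F$ are independent and uniform. That holds because every element of $\prod_{i\in F}N^\cap(M,i)$ extends uniquely via $L$ to an input in $\prod_{i\in[k]\setminus B(M)}N^\cap(M,i)$, by parts 1 and 2 of \autoref{lemma:simulation-properties}.
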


Now, we will show that the algorithm does not fail with a non-negligible probability (i.e., the walk does not witness a missing edge).

\begin{lemma}
    \label{lemma:pdtSuccessProb}
    Let $T$ be a depth-$d$ parity decision tree and $M$ a set of vertices. If $|M|+8d\leq \CNbound$, then the probability that \autoref{alg:PDTsimulator} does not return FAIL when run on $T$ and  $M$ is at least $\exp(-\Omega(d\beta+d^2(1-\edgeProbResLin))$.
\end{lemma}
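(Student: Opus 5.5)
The plan is to analyze \autoref{alg:PDTsimulator} step by step and bound, for each newly added vertex, the probability that it is non-adjacent to some earlier vertex in $C$. Since $T$ has depth $d$, the walk has at most $d$ non-implied iterations, so $C$ ends up with at most $2d$ vertices from at most $d$ blocks.

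The key point is that in the $t$-th iteration, $y$ is drawn uniformly from $N^\cap(M,i)$. This draw depends only on the history, and in particular on the current set $C_{<t}$ of at most $2(t-1)$ already chosen vertices. We condition on that history. Let $M_t = M\cup C_{<t}$, which has size at most $|M|+2d\le R$. The event that $y$ is adjacent to everything in $C_{<t}$ is exactly $y\in N^\cap(M_t,i)$. By $(\alpha,\beta,R)$-boundedness,
\[
\frac{|N^\cap(M_t,i)|}{|N^\cap(M,i)|}\ge \frac{(1-\beta)\alpha^{|M_t|}}{(1+\beta)\alpha^{|M|}}\ge \frac{1-\beta}{1+\beta}\,\alpha^{2(t-1)}.
\]

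When the bit-flip branch is taken, the second vertex $y^{\oplus j}$ must also lie in $N^\cap(M_t,i)$. This is where the argument is most delicate, since $y$ and $y^{\oplus j}$ are correlated. I would handle it by a union bound over the two events. The pair $\{y,y^{\oplus j}\}$ is determined by $y$, and $y^{\oplus j}$ is itself uniform on the set of elements of $N^\cap(M,i)$ whose flip is also in $N^\cap(M,i)$. A cleaner alternative is to note that the failure set $N^\cap(M,i)\setminus N^\cap(M_t,i)$ has relative size at most $1-\frac{1-\beta}{1+\beta}\alpha^{2(t-1)}\le 2\beta+2(t-1)(1-\alpha)+O(\beta^2)$. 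The probability that either $y$ or $y^{\oplus j}$ lands in it is then at most twice this, because the map $y\mapsto y^{\oplus j}$ is injective and $|N^\cap(M,i)|$ and $n$ are comparable, with $N^\cap(M,i)\ge (1-\beta)\alpha^{R}n\ge n/2$ since $R\le 1/(100(1-\alpha))$. The parameter restrictions $\beta<1/300$ and $R(1-\alpha)\le 1/100$ keep each per-step failure probability $\epsilon_t = O(\beta+t(1-\alpha))$ below a small constant, say $1/4$.

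Once the per-step success probability is at least $1-\epsilon_t$ conditioned on the past, the chain rule gives
\[
\Pr[\text{no FAIL}]\ge\prod_{t\le d}(1-\epsilon_t)\ge \exp\Big(-2\sum_{t\le d}\epsilon_t\Big)=\exp\big(-O(d\beta+d^2(1-\alpha))\big),
\]
using $1-x\ge e^{-2x}$ for $x\le 1/2$. The main obstacle is justifying the conditional bound cleanly. The danger is that the history (the queries, and hence which $j$ is flipped) depends on earlier random choices. This is fine because each step's fresh $y$ is uniform given the history, and the boundedness property is applied to the realized $M_t$, whose size is always at most $R$. The slack $8d$ in the hypothesis $|M|+8d\le R$ comfortably covers $|M_t|\le |M|+2d$ together with the extra vertex in the pair.
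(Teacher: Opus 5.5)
There is a genuine gap in your opening claim that, because $T$ has depth $d$, the walk has at most $d$ non-implied iterations and $C$ ends with at most $2d$ vertices. This is false for Algorithm~\ref{alg:PDTsimulator}.

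In the branch where $y^{\oplus j}\notin N^\cap(M,i)$, the algorithm fixes all of block $i$ and appends $(i,y)$ to $C$, but it does \emph{not} move $v$. At the next iteration the query at the same node is re-substituted and may still be non-constant, so another block gets fixed, and so on. Only the number of descents is bounded by $d$. The number of iterations, and hence $|C|$, is bounded only in terms of the number of blocks.

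This has two consequences. Your product has an a priori unbounded number of factors. More seriously, the realized $M_t=M\cup C_{<t}$ may exceed size $R$, and then the $(\alpha,\beta,R)$-bounded common neighborhood property cannot be invoked at all. Your reading of the slack $8d$ as covering $|M|+2d$ plus an extra vertex is a symptom of this. In fact $8d=2\cdot 4d$ is room for $4d$ iterations, each adding up to two vertices.

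The missing step is to control the number of iterations probabilistically. Conditioned on the history, every iteration descends with probability at least $1/2$. If $P$ is constant it descends surely. Otherwise, since $|N^\cap(M,i)|\geq 2n/3$ and $y\mapsto y^{\oplus j}$ is a bijection, at most $n-|N^\cap(M,i)|\leq |N^\cap(M,i)|/2$ choices of $y$ have $y^{\oplus j}\notin N^\cap(M,i)$. At most $d$ descents are possible, so running for more than $4d$ iterations forces fewer than $d$ descents among the first $4d$. By Lemma~\ref{lemma:conditional-chernoff} (applied to the complementary indicators) and the lower-tail Chernoff bound, this has probability at most $e^{-d/4}$.

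Your per-step analysis is otherwise the paper's: a union bound over $y$ and $y^{\oplus j}$ against the common-neighborhood ratio, then $1-x\geq e^{-2x}$. It should be run only for $t\leq 4d$, where the hypothesis guarantees exactly $|M_t|\leq |M|+2(t-1)\leq R$. This gives probability at least $\exp(-32d\beta-64d^2(1-\alpha))$ of no failure in the first $4d$ iterations.

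Finally, you must subtract the $e^{-d/4}$ tail and check that it does not swamp the main term. Using $\beta<1/300$ and $64d^2(1-\alpha)\leq 8d\cdot R(1-\alpha)\leq 0.08d$, the main exponent is below $0.19d<d/4$. So the difference is still $\Omega(\exp(-32d\beta-64d^2(1-\alpha)))$.
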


\ifthenelse{\boolean{conferenceversion}}
{}{
\begin{proof}
    To bound the probability of failure we consider two cases depending on the number of iterations of the while loop on \autoref{alg-line:while-loop}. First consider the case where the algorithm runs for more than $4d$ iterations. We wish to show that the probability of this happening is small. To see this,  notice that in every iteration we update $v$ except if we reach \autoref{alg-line:bit-flip-check} with $y^{\oplus j}\notin N^\cap(M,i)$. But since $|M|\leq \CNbound\leq \rbound$, it follows by the bounded common neighborhood property of the underlying graph  that $|N^\cap(M,i)| > 2n/3$. Since $y$ is picked uniformly from $N^\cap(M,i)$, the probability of $y^{\oplus j}\notin N^\cap(M,i)$ is bounded above by $1/2$. Let $\bm X_t$ be the indicator of the event that we update $v$ on iteration $t$. Then, since $\Exp [\bm X_t \mid \bm X_{t-1},\dots, \bm X_1] \geq 1/2$ we can apply \autoref{lemma:conditional-chernoff} with $\bm Y_i = \bm X_i$ to find $\Pr[\sum_{t=1}^{4d}\bm X_t<d] \leq \Pr[\mathrm{Ber}^{4d}(1/2)<d] \leq \exp(-d/4)$. 
    
    We now wish to bound the probability of finding a non-edge in the first $4d$ iterations. We start by bounding the probability in the $t$th iteration. Let $i$ and $j$ be the indices selected on \autoref{alg-line:set(i,j)}. In every iteration, we append a set of at most size $2$ to $C$, and so by the $t$th iteration the union of all sets in $C$ contains at most $2(t-1)$ elements. Call this union $C_t$. By the bounded common neighborhood property, we know that the number of vertices in the $i$th block that lie in the common neighborhood of vertices in $C_t$ and $M$ is at least $(1-\beta)\edgeProbResLin^{|M|+2(t-1)}$ while the size of $N^\cap(M,i)$ is at most $(1+\beta)\edgeProbResLin^{|M|}$. 
    Thus, the probability of $y$ having an edge to a vertex in $C_t$ is at least $\frac{1-\beta}{1+\beta}\edgeProbResLin^{2(t-1)}\geq(1-2\beta)\edgeProbResLin^{2(t-1)}$. Similarly, the probability of $y^{\oplus j}$ either not lying in $N^\cap(M,i)$ or having an edge to some vertex in $C_t$ is also bounded below by this. By a union bound, the probability of not failing at the $t$th iteration is at least $2(1-2\beta)\edgeProbResLin^{2(t-1)}-1$. Note that, for $t\leq4d$, 
    \begin{equation*}
    \edgeProbResLin^{2(t-1)}\geq \edgeProbResLin^{8d}\geq \edgeProbResLin ^R\geq \edgeProbResLin^{1/2(1-\edgeProbResLin)} \geq \edgeProbResLin^{-1/2\log \edgeProbResLin} = 1/\sqrt 2. \,
    \end{equation*} 
    Hence, using the fact that $\beta <1/20$, we have that
    \begin{equation*}
        2(1-2\beta)\edgeProbResLin^{2(t-1)}-1 \geq \sqrt2(1-1/10)-1\geq 1/4 \,.
    \end{equation*}
 
    The probability of not failing after $4d$ iterations is thus at least
    \begin{align*}
        \prod_{t=1}^{4d}\left(2(1-2\beta)\edgeProbResLin^{2(t-1)}-1\right)
        &\geq \prod_{t=1}^{4d}\exp(4((1-2\beta)\edgeProbResLin^{2t}-1))\\ 
        &\geq\exp(4\sum_{t=1}^{4d}((1-2\beta)\edgeProbResLin^{2t}-1))\\
        &\geq\exp(4\sum_{t=1}^{4d}((1-2\beta)(1-2t(1-\edgeProbResLin))-1))\\ 
        &\geq \exp(-8\sum_{t=1}^{4d}(\beta+t(1-\edgeProbResLin)))\\
        &\geq \exp(-32d\beta-64d^2(1-\edgeProbResLin)) \,,
    \end{align*}
    where for the first inequality, we use that $1-x \geq \exp(-2x)$ for $x<0.75$. In total, the probability of both not failing within the first $4d$ iterations and terminating within $4d$ steps is bounded from below by $\exp(-32d\beta-64d^2(1-\edgeProbResLin))-\exp(-d/4)\geq \Omega(\exp(-32d\beta-64d^2(1-\edgeProbResLin)))$, using the fact that $\beta \leq \betabound$ and $8d\leq\CNbound\leq\rbound$.
    \end{proof}
}

We now wish to show that it is possible to use a successful run of \autoref{alg:PDTsimulator} to find an affine restriction satisfying the reached node. 

\begin{lemma}
    \label{lemma:assignment-extraction}
    Let $v,C,L,F$ be returned by a successful run of \autoref{alg:PDTsimulator} when run on $T$ and $M$. Let $\Psi$ be the linear system labeling the node in the affine-DAG corresponding to $v$, and let $r$ be the rank of $\Psi$. If $|M|+2r\leq \CNbound $, then there exists a set $F'\subseteq [k]$ and an affine restriction $\rho$ fixing blocks $[k]\setminus F'$ satisfying the following conditions:
    \begin{enumerate}
        \item The number of blocks fixed by $\rho$, say $s$, is at most $r$.
        \item $\Psi$ is implied by $\Psi_\rho$ (the linear system equivalent to the affine restriction $\rho$).
        \item There exists a set $M'\subseteq V(G)\setminus M$ such that  $|M'|\leq 2s$ and for any assignment to $F'$, if we set the blocks $[k]\setminus F'$ according to $\rho$, all the vertices assigned to $[k]\setminus F'$ lie in $M'$ and form a complete subgraph of $G$.
    \end{enumerate}
\end{lemma}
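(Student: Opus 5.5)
We follow the closure-based extraction of \cite{BI25bphp}, replacing pigeons with clique members. Let $I=\cl(\Psi)$. By \autoref{lemma:safe-dimension-bound}, $|I|\le r$, and $\Psi[\setminus I]$ is safe of rank $r-|I|$. The restriction $\rho$ fixes the blocks $I$ to concrete vertices and the blocks in $\cl$-free positions chosen via \autoref{lemma:safeSystem-charactarisation} affinely. So $F'$ consists of the remaining blocks, and the number of fixed blocks is $s\le r$ overall. The plan is:

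\textbf{Step 1: fix the closure blocks.} We use the successful run. By \autoref{lemma:simulation-properties}, the equations $L$ imply $\Psi$ (item~3). They also determine all blocks in $[k]\setminus F$ affinely in terms of the free blocks, with every such block taking values in $N^\cap(M,\cdot)$. Moreover, because the run did not FAIL, the candidate vertex sets recorded in $C$ are pairwise adjacent across blocks. Each block in $[k]\setminus F$ has at most two candidate values, namely $y$ or $y^{\oplus j}$ (the latter only when it lies in $N^\cap(M,i)$). We would like to take $M'$ to be the union of candidates in $C$ for the fixed blocks, giving $|M'|\le 2s$. The difficulty is that $L$ may fix far more than $r$ blocks. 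We therefore cannot simply take $\rho=L$; we must restrict to a sub-collection of the touched blocks of size at most $r$.

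\textbf{Step 2: choose few blocks.} Consider $\Psi$ together with the equations of $L$. Since $L$ implies $\Psi$, each of the $r$ independent forms of $\Psi$ is a combination of equations in $L$. We pick a minimal set of blocks $S\subseteq [k]\setminus F$ with $|S|\le r$ such that the equations of $L$ restricted to the blocks of $S$ already imply $\Psi$ after the remaining variables are left free. Concretely, I would first apply the closure: set $I=\cl(\Psi)$. Then, using \autoref{lemma:safeSystem-charactarisation}, pick one pivot variable in a distinct block for each form of the safe part $\Psi[\setminus I]$. We let $\rho$ fix the blocks in $I$ according to $L$. For each pivot block $i$, $\rho$ sets the other $\log n-1$ bits from the $y$ chosen in the run, and the pivot bit is set as the affine function forced by $\Psi$. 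The pivot bit's two values give exactly $y$ and $y^{\oplus j}$. This requires arguing that the pivot variable can be taken to be the one queried by the walk, which is the main obstacle. Alternatively, we can simply fix both possible values, at the cost of the factor $2$ in $|M'|\le 2s$. The total number of fixed blocks is then $|I|+(r-\dim)\le r$ by \autoref{lemma:safe-dimension-bound}, giving item~1. Item~2 holds by construction, since $\Psi_\rho$ contains the closure part and solves the safe part.

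\textbf{Step 3: completeness.} Any vertex that $\rho$ assigns to a fixed block is one of the at most two candidates recorded in $C$ for that block. We define $M'$ to be this set. The no-FAIL condition gives pairwise adjacency across distinct entries of $C$, and by item~2 of \autoref{lemma:simulation-properties} each candidate lies in $N^\cap(M,\cdot)$, so $M\cup M'$ is complete across blocks. The hypothesis $|M|+2r\le R$ keeps $|M\cup M'|\le R$, so bounded common neighborhoods remain applicable for the subsequent $\mathrm{NonEdge}_{M\cup M'}$ instance. I expect the delicate point to be Step~2: ensuring that fixing only the closure and pivot blocks, rather than all of $[k]\setminus F$, still implies $\Psi$ while every fixed block takes values among the walk's recorded candidates. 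For this I would mirror the rank argument of \cite[Lemma~4.3]{BI25bphp}.
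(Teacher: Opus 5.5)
There is a genuine gap, and it sits exactly where you flag the ``main obstacle''. Your plan needs every block fixed by $\rho$ to take values among the walk's recorded candidates in $C$, so that the no-FAIL condition gives the clique property. This cannot work.

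First, the blocks $\rho$ must fix are $\cl(\Psi)$ and the pivot blocks $B$ from \autoref{lemma:safeSystem-charactarisation}. These may lie in $F$, where $L$ imposes nothing and no $y$ was ever sampled. So ``fix the blocks in $I$ according to $L$'' and ``set the other $\log n-1$ bits from the $y$ chosen in the run'' are undefined there.

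Second, it fails even on walked blocks. Suppose the root queries $x_{1,1}+x_{2,1}$, and the walk samples $y,y'$ with $y^{\oplus 1}\notin N^\cap(M,1)$ and ${y'}^{\oplus 1}\notin N^\cap(M,2)$. It then fixes blocks $1$ and $2$ completely and answers, and the child may be labeled $\Psi=\{x_{1,1}+x_{2,1}=c\}$ with $r=1$.
\begin{itemize}
\item No single block's equations in $L$ imply $\Psi$, so your ``minimal $S$ with $|S|\le r$'' need not exist.
\item If you fix one block with its pivot bit set affinely and its other bits taken from $y$, one completion is $y^{\oplus 1}$, which is not even in $N^\cap(M,1)$.
\end{itemize}
Your fallback ``fix both possible values'' does not say how to make both values admissible. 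Tellingly, your construction never uses the bounded-common-neighborhood property to show that anything exists. You invoke $|M|+2r\le\CNbound$ only for the downstream instance, whereas in this lemma that hypothesis is what makes the construction possible. Separately, \autoref{lemma:safe-dimension-bound} gives only $\mathrm{rank}(\Psi[\setminus I])\le r-|I|$, not equality.

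The missing idea is to make fresh choices justified by the graph property rather than by the walk.
\begin{enumerate}
\item Let $\hat M$ be the set of candidates in $C$ for blocks in $\cl(\Psi)\setminus F$. For each $i\in\cl(\Psi)\cap F$, pick a vertex in $N^\cap(M\cup\hat M,i)$ so that these vertices form a clique. This is possible because $|M|+2|\cl(\Psi)\setminus F|+|\cl(\Psi)\cap F|\le |M|+2r\le\CNbound$.
\item Set $F\setminus\cl(\Psi)$ arbitrarily and extend via $L$. Since $L$ implies $\Psi$, the resulting constant assignment $\rho_1$ to $\cl(\Psi)$ does not falsify $\Psi$, and $\Psi|_{\rho_1}$ is safe.
\item Process the pivot blocks one at a time, choosing the $\log n-1$ non-pivot bits by pigeonhole. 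Consider the common neighborhood of $M$, of the vertices of $\rho_1$, and of both candidates of every earlier pivot block. It involves at most $|M|+2r$ vertices, so by the bounded-common-neighborhood property it has size $>n/2$. Hence some pair $\{w,w^{\oplus j}\}$ differing only in the pivot bit lies entirely inside it.
\item Solve $\Psi|_{\rho_1}$ for the pivot variables as affine functions of the free blocks.
\end{enumerate}
This yields $s=|\cl(\Psi)|+|B|\le r$ by \autoref{lemma:safe-dimension-bound}, and $|M'|\le|\cl(\Psi)|+2|B|\le 2s$.
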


\ifthenelse{\boolean{conferenceversion}}
{}{
\begin{proof}
    We start by fixing the variables in $\cl(\Psi)\cap F$. Let $\hat M$ be the set of vertices assigned by the simulation to the blocks in $\cl(\Psi)\setminus F$, i.e., the $y$th vertex of the $i$th block for every $(i,y)\in C_t$ for some $t$ and $i\in \cl(\Psi)\setminus F$. Note that $|\hat M|\leq 2|\cl(\Psi)\setminus F|$. We now assign $x_i$, for every $i\in \cl(\Psi)$, such that $x_i\in N^\cap(M\cup\hat M,i)$ and all vertices selected form a complete subgraph. Since $G$ has $(\alpha, \beta, R)$-bounded common neighborhoods, this can be done as long as $|M\cup\hat M|+|\cl(\Psi)\cap F|\leq \CNbound$. But this holds since 
    \begin{align*}
        |M\cup\hat M|+|\cl(\Psi)\cap F|\leq |M|+ 2|\cl(\Psi)\setminus F|+|\cl(\Psi)\cap F|\leq |M| + 2|\cl(\Psi)|\leq |M|+2r \leq \CNbound
    \end{align*}

    Set all blocks in $F\setminus \cl(\Psi)$ arbitrarily and extend this to a full assignment $x$ according to $L$. Since all blocks in $F$ have been set, this extension exists and is unique by \autoref{lemma:simulation-properties}. Furthermore, $L$ implies all parity constraints on some path to node $v$ which in particular means that $L$ implies $\Psi$. Thus, $x$ satisfies $\Psi$. Let $\rho_1$ be the partial assignment setting every block in $\cl(\Psi)$ according to $x$. By construction this assignment will not falsify any equation in $\Psi$. 
    For the second step, we consider $\Psi' = \Psi|_{\rho_1}$. Since $\cl(\Psi)$ is fixed by $\rho_1$, $\Psi'$ is safe. By \autoref{lemma:safeSystem-charactarisation}, there exists a set of variables of size $\mathrm{dim}(\Psi)$ belonging to different blocks and whose corresponding columns in $\Psi$ are linearly independent. Let $X$ be this set of variables and $B$ the corresponding set of blocks. We now wish to fix all but one variable of every block in $B$, leaving the variables in $X$ untouched, such that no matter what the variables of $X$ are set to, the vertices in the blocks $B$ all have edges to each other and to all vertices in $M$ and the ones fixed by $\rho_1$. We do this one block at a time. When we get to block $i$, we fix ${\log n}-1$ variables, such that both vertices $\{v_i,v_i'\}$, we could potentially select when setting the last variable, lie in the common neighborhood of $M$, of the vertices selected by $\rho_1$ and of all previous $\{v_{i'},v'_{i'}\}$. By the pigeonhole principle, we can find such an assignment to the ${\log n}-1$ variables of block $i$, as long as the common neighborhood of all potential previous selections at block $i$ is larger than $n/2$. But the number of previously selected vertices is at most $|M|+|\cl(\Psi)|+2(\mathrm{dim}(\Psi')-1)$ which by \autoref{lemma:safe-dimension-bound} can be bounded by 
    \begin{equation*}
        |M|+|\cl(\Psi)|+2(\mathrm{dim}(\Psi')-1) \leq |M|+2(|\cl(\Psi)|+\mathrm{dim}(\Psi')-1)\leq |M|+2r < R.
    \end{equation*}
    This means that the common neighborhood will have size greater than $n/2$ as desired.
    
    Call this partial assignment of $({\log n}-1)|B|$ variables $\rho_2$. To find the final $\rho$ we use the fact that $\Psi'$ is safe to solve $\Psi$ expressing the variables in $X$ as an affine function of the variables outside of $B$. We now let $\rho$ be this affine restricting combined with $\rho_1$ and $\rho_2$. 
\end{proof}
}

With this we have both shown that the probability of succeeding is significant and that when we succeed, we can extract an affine restriction satisfying the system labeling the reached node. We can now combine these facts to \ifthenelse{\boolean{conferenceversion}}{obtain \autoref{lemma:reduction}. For the details of this argument see the full version~\cite{}.}
{prove \autoref{lemma:reduction}, restated here for convenience.}

\ifthenelse{\boolean{conferenceversion}}
{}{
\lemmareduction*
\begin{proof}[Proof of \autoref{lemma:reduction}]

        We start by showing that $D > (\CNbound-|M|)/8$. Assume for contradiction that $D \leq (\CNbound -|M|)/8$ and expand $\affineDAG$ into a parity decision tree $T$ of depth $D$ solving $\mathrm{NonEdge_M}$. Since $|M| + 8D \leq R$, \autoref{lemma:pdtSuccessProb} gives a non-zero probability of \autoref{alg:PDTsimulator} succeeding. But after a successful run returning $(v, C,L,F)$, an input $x$ that reaches $v$ can be constructed by setting the variables in $F$ arbitrarily and setting the remaining variables according to $L$. Since the variables in blocks of $F$ can be set arbitrarily, $v$ cannot be labeled with a valid answer to $\mathrm{NonEdge_M}$ mentioning vertices in $F$. But the vertices assigned outside of $F$ are all connected. This contradicts the assumption that $T$ solves $\mathrm{NonEdge_M}$ and $D>(R-|M|)/8$ must hold. 
        
        To prove the second claim of \autoref{lemma:reduction}, let $T$ be the depth $d$ parity decision tree obtained by starting at the root of the DAG $\affineDAG$, expanding the DAG into a tree, and removing any node at depth greater than $d$. We run \autoref{alg:PDTsimulator} on $T$ and, by \autoref{lemma:pdtSuccessProb}, the algorithm succeeds with probability $q$ at least $\exp(-\Omega(d\beta+d^2(1-\edgeProbResLin))\geq S^{-O(1)}$, here we use the fact that $\CNbound\leq \rbound$. Since the DAG has size at most $S$, one of the nodes $v$ must be reached, after a successful run, with probability at least $q/S \geq S^{-O(1)}$.
        By \autoref{lemma:PDTwalkRelation}, this is also a lower bound on the probability that $\bm x \sim \prod_{i \in [k] \setminus B(M)} N^\cap(M,i)$ ends at this node. By \autoref{lemma:rank-probability-relation}, the rank of $\Psi_v$ is bounded by $O(\log S)$. 

        Now fix some successful run of $T$ that ends at $v$ and apply \autoref{lemma:assignment-extraction} to find a corresponding $F'$, $M'$, $\rho$. Let $M^* = M\cup M'$ and let $\affineDAG'$ be the affine-DAG consisting of the sub-DAG of $\affineDAG$ of vertices reachable from $v$ and restricted by $\rho$. Note that \autoref{lemma:assignment-extraction} ensures that $\rho$ satisfies $\Psi$ and thus the root of $\affineDAG'$ is labeled by the empty system as desired. Furthermore, for any assignment to the blocks $F'$, only selecting vertices within $N^\cap(M^*,[k]\setminus F')$, extending it to all blocks according to $\rho$ can never introduce a non-edge outside of $F'$. Thus $\affineDAG'$ solves $\mathrm{NonEdge_{M^*}}$. Furthermore, note that $v$ can not be a sink in $\affineDAG$ since $\affineDAG'$ solves a non-trivial problem. Thus $v$ must have been reached after exactly $d$ steps in $T$ which means that $\affineDAG'$ cannot have depth higher than $D-d$.
       
\end{proof}
}

\section{Lower Bound for Randomized Communication} \label{sec:binCliqueRandComm}
In this section, we lower bound randomized communication cost required to solve $\binCliqueBlockComm{G}{k}$.
\begin{theorem}
\label{thm:rand-comm-lb}
    For any
    integers $n$ and $k$, and for any real $p \in [0,1]$, 
    if $\bm G$ is a graph sampled from $\calG(n,p,k)$ then \aas randomized communication protocol solving $\binCliqueBlockComm{\bm G}{k}$ has cost 
    \begin{equation*}
        \Omega\left({\min\left(
    \frac{1}{\sqrt{n^{0.05}(1-p)}},
    \,
    \frac{n^{0.225}}{\log(kn)}\right)}\right).
    \end{equation*}
\end{theorem}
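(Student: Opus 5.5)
Our plan is to follow the framework of \cite{GoosGJL2025quantum,RiazanovSSY2025searching}, which gives a black-box route from a randomized communication protocol to a structured object (a ``decision-tree-like'' strategy) for a search problem in which each variable is a small-alphabet block split between the players. Concretely, we first reduce via \autoref{lemma:probability-concentrates}: \aas{} $\bm G$ is $s$-\ndense with $s=\max(2\sqrt n(1-p),9e^2\log(kn))$, and it suffices to prove that every graph $G$ that is $s$-\ndense forces randomized cost $\Omega(\min(n^{1/4-0.025}/\sqrt{s},\,\cdot))$; plugging in $s$ gives the two terms of the minimum, the $n^{0.05}$ and $n^{0.225}$ losses coming from the structural theorem's overhead. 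By Yao's principle we fix the uniform distribution on $X\times Y=(\Sigma^k)^2$ and a deterministic protocol $\Pi$ of cost $c$ with constant error.

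The key steps are as follows. Apply the structural theorem of \cite{GoosGJL2025quantum} (in the extended form of \cite{RiazanovSSY2025searching}): for the input uniform on $X\times Y$, with constant probability the protocol reaches a leaf whose rectangle $A\times B$, after fixing a set $J\subseteq[k]$ of $O(c)$ blocks to some values $z_J$, is ``dense'' on the remaining blocks, i.e.\ $(\bm x_{[k]\setminus J},\bm y_{[k]\setminus J})$ restricted to $A\times B$ has min-entropy deficiency $O(c)$ in each coordinate (blockwise dense in the sense of \cite{GPW20}). The leaf outputs a non-edge $\{u,v\}$ in blocks $i,j$. Correctness with constant probability requires that, conditioned on the rectangle, $\Pr[(x_i,y_i)=u,(x_j,y_j)=v]$ be large. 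We split into cases. If $i,j\in J$, the output is determined by the fixed part $z_J$; we must show that the fixed vertices $z_J$ span a clique for most leaves, which we get because the density structure lets the fixed blocks be essentially uniformly random, and by $s$-almost-completeness a uniformly random pair in two blocks is a non-edge with probability at most $s/\sqrt n$, so by a union bound over $O(c^2)$ pairs this is small when $c^2 s\ll\sqrt n$. If $i\notin J$, then $(x_i,y_i)$ is dense, so hitting the specific vertex $u$ has probability at most $2^{O(c)}/n^{\Omega(1)}$-type bounds; more carefully, using $s$-almost-completeness, for fixed $x_i,y_i,x_j$ at most $s$ values $y_j$ give a non-edge, and min-entropy of $y_j$ being close to $\log\sqrt n$ bounds the success probability by roughly $s\,2^{O(c)}/\sqrt n$.

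The main obstacle is the last case analysis: the min-entropy deficiency of $O(c)$ only gives success probability bounds like $2^{O(c)}s/\sqrt n$, which would yield just $c=\Omega(\log n)$, not polynomial. To get $n^{\Omega(1)}$ we must instead use that the protocol must succeed simultaneously across many random restarts, i.e.\ use the structural theorem in its ``many-fixed-blocks'' form where density is per block with deficiency $O(1)$ after fixing $J$, and argue (as in \cite{YZ24Comm}) that a leaf can only certify a non-edge touching a free block with probability $O(s/\sqrt n)$ per block. I would try first to reproduce the BPHP argument of \cite{GoosGJL2025quantum}, replacing ``collision among pigeons'' by ``non-edge'' and the pigeon-count bound by the $s$-almost-complete count, tracking the exponents to land at $n^{0.225}$.
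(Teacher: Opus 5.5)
\textbf{There is a genuine gap: the case where both output blocks are fixed is never actually handled.} Your setup matches the paper: reduce to $s$-almost-complete graphs via \autoref{lemma:probability-concentrates}, use Yao with the uniform distribution, and apply the cleanup theorem of \cite{GoosGJL2025quantum}.

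\textbf{The free-block case.} Your $2^{O(c)}$ loss comes from taking the wrong guarantee from the cleanup theorem. The paper makes the protocol $0.9$-subcube-like at cost $O(c/\epsilon)$. So at every node, each player's unfixed bits are $0.9$-spread; there is no total deficiency of $O(c)$. Suppose a block $b$ has none of Bob's bits fixed at a node. Then $\mathrm{H}_\infty(\bm y_b)\ge 0.9\log|\Sigma|=0.45\log n$ conditioned on that node's rectangle. Now $s$-almost-completeness bounds the non-edge probability by $s/n^{0.45}$, independently of $c$. This settles leaves where an output block is not mentioned in the fixed coordinates. The constant $\gamma=0.9$ is the source of $n^{0.05}$ and $n^{0.225}$, and no ``random restarts'' are needed.

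\textbf{The fixed-block case.} Your argument there is that the fixed blocks are ``essentially uniformly random'', followed by a union bound over $O(c^2)$ pairs. This does not work. The set of fixed blocks is chosen adaptively by the protocol as a function of the input. So you cannot apply the unconditional bound $s/\sqrt n$ to a random pair, and summing unconditional bounds over all pairs only gives $k^2 s/\sqrt n$.

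Write $D_v$ for the set of blocks with some bit fixed on either side at node $v$, and $d$ for the cost of the cleaned-up protocol. What is needed is, for each fixed $a\neq b$, a bound on the non-edge probability conditioned on $\{a,b\}\subseteq D_{\bm\ell}$. Summing then gives $\sum_{a\neq b}\Pr[\{a,b\}\subseteq D_{\bm\ell}]\cdot s/n^{0.45}\le \mathbb{E}[|D_{\bm\ell}|^2]\, s/n^{0.45}\le d^2 s/n^{0.45}$.

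The paper obtains the conditional bound as follows.
\begin{itemize}
\item Look at the first node $v$ on the path at which both $a$ and $b$ lie in $D_v$.
\item Only one player, say Alice, spoke to reach $v$, so Bob's fixed set at $v$ equals his fixed set at the parent.
\item At the parent one of the two blocks, say $b$, was not mentioned at all, so $\bm y_b$ is still $0.9$-spread at $v$.
\item Hence the probability that the input's vertices in blocks $a,b$ are non-adjacent, conditioned on $R_v$, is at most $s/n^{0.45}$.
\item These first-fixing nodes partition the event $\{a,b\}\subseteq D_{\bm\ell}$, so the bound transfers to the conditional probability.
\end{itemize}
Combining both cases, the success probability is at most $(1+d^2)s/n^{0.45}$, which forces $d=\Omega(\sqrt{n^{0.45}/s})$. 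This alternation argument is the missing idea; without it your outline does not give any polynomial bound.
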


   To see that \autoref{thm:Comm-informal} follows we set $p = 1-n^{-1/3}$ and $k=n$. $\bm G$ then has $N=n^2$ vertices and we get the lower bound $\Omega(n^{0.22})=N^{\Omega(1)}$ on the cost of the protocol as required.

For the rest of this section, recall that $S_{G} \coloneqq \binCliqueBlockComm{G}{k} = \{(x, y, uv) \mid x, y \in \Sigma^k,\, uv \not\in E( G),\, \{u,v\} \subseteq V(x,y)\}$, where $\Sigma = \{0,1\}^{\log n / 2}$, $V(x,y) \in V^k$ is the list on nodes described by $x,y\in \Sigma^k$.
\begin{theorem}
\label{thm:CC-lowerbound}
    Let $s,k,n>0$, and $G$ be an $s$-\ndense graph on $k$ blocks of size $n$.
    Then the randomized communication complexity of $S_G$ is $\Omega(\sqrt{n^{0.45}/s})$.
\end{theorem}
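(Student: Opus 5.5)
We follow the black-box lifting route from the introduction: the structural theorem of \cite{GoosGJL2025quantum}, in the extended form of \cite{RiazanovSSY2025searching}. The input of $S_G$ is a composed problem. Block $i$ is the pair $(x_i,y_i)\in\Sigma\times\Sigma$, so the natural gadget is the identity-like map $g(x_i,y_i)=(x_i,y_i)\in[n]$ that names a vertex of $V_i$. The plan is to use Yao's principle with the uniform distribution on $X\times Y$. Suppose a deterministic protocol $\Pi$ of cost $c$ errs with probability at most $1/3$ on uniform inputs. The structural theorem then says: after discarding a small fraction of leaves, the inputs reaching a typical leaf $\ell$ are mostly covered by rectangles $A\times B$ in which $O(c)$ blocks are ``fixed''. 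Here fixed means the vertex in that block is determined, or heavily restricted, by the rectangle. On the remaining blocks, $A\times B$ is dense: $A$ and $B$ each have min-entropy deficiency $O(c)$ spread over the free blocks. This gives a decision-tree-like picture: each leaf corresponds, up to small error, to a partial assignment of $O(c)$ vertices together with a ``blockwise dense'' product set on the other coordinates.

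The second step is to show that such a dense rectangle cannot certify a non-edge $uv$ with error below a constant, unless $u,v$ are both among the fixed blocks. Suppose the output $uv$ involves a free block $j$, so $u\in V_i$ and $v\in V_j$. Conditioning on $x_i,y_i,x_j$, the $s$-\ndense property says at most $s$ of the $\sqrt n$ values of $y_j$ make $(x_j,y_j)$ a non-neighbor of $(x_i,y_i)$. Min-entropy density of $B$ on block $j$ then puts probability at most about $s\cdot 2^{O(c)}/\sqrt n$ on the correct answer. This is small as long as $c$ is well below $\log(\sqrt n/s)$. I would make the entropy loss per block explicit, at most $n^{0.05}$-ish slack in the $\Sigma$ coordinates, to get polynomial rather than logarithmic bounds. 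This is presumably where the exponents $0.45$ and $0.05$ come from: the theorem tolerates a deficiency up to $n^{0.05}$ per block and leaves $n^{0.45/2}$ effective room.

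The third step is to rule out the remaining case, where both endpoints of the output non-edge lie in the $O(c)$ fixed blocks. We need that, for a random input, the vertices in any set of $q=O(c)$ fixed blocks are pairwise adjacent with constant probability. The uniform input restricted to $q$ blocks hits a non-edge with probability at most $\binom{q}{2}\cdot s/\sqrt n$ by \ndense-ness, since for any three fixed halves at most $s$ of the $\sqrt n$ fourth halves are bad. Because fixing is adaptive, we apply this inside the dense rectangle, charging each fixed pair at most $s\,2^{O(\text{deficiency})}/\sqrt n$. Requiring $c^2 s\, n^{0.05}/\sqrt n\ll 1$ gives $c=\Omega(\sqrt{n^{0.45}/s})$. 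Combining the cases, the protocol's success probability is below $2/3$ unless $c$ exceeds this bound.

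The main obstacle I expect is the second and third steps' interaction with the structural theorem. We must verify that the density guarantee it gives, blockwise min-entropy of each half separately, suffices to apply the \ndense property. That property is stated for a single missing half $y_j$ with the other three halves fixed. In particular, we must handle blocks where Alice's half is fixed but Bob's is only dense. My first attempt would be to treat each half $x_i$, $y_i$ as a separate coordinate in the lifting framework, so that ``partially fixed'' blocks are naturally allowed. Then only one free half is needed per non-edge to beat the $s/\sqrt n$ bound.
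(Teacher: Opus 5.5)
Your first two steps essentially match the paper's. You use Yao with the uniform distribution and a cleanup via \cite{GoosGJL2025quantum}. When an endpoint's block is untouched, you combine almost-completeness with per-block spreadness: a free half has min-entropy at least $0.9\cdot\frac12\log n$, so the error is at most $s/n^{0.45}$. You do need spreadness here rather than a global deficiency of $O(c)$, since a $2^{O(c)}$ factor only gives a logarithmic bound.

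The gap is in your third step, which is where the real work lies. If all four halves of the two output blocks are fixed at the leaf, then inside the leaf rectangle the non-edge event is deterministic. So there is no entropy left to ``charge inside the dense rectangle''. The static estimate $\binom{q}{2}s/\sqrt n$ also does not apply, because it holds only for a fixed set of $q$ blocks. The protocol chooses which blocks to fix adaptively, possibly based on values it has already seen. You name this issue but do not resolve it. A leaf-level decomposition, which is how you describe the structural theorem, is not enough to resolve it.

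The missing idea is a stopping-time argument along the protocol path, using the node-wise property. Theorem~29 of \cite{GoosGJL2025quantum} makes the rectangle at every node $0.9$-subcube-like, not only at the leaves. Fix a pair of blocks $a\neq b$. Let $D_v$ be the set of blocks mentioned in $\mathrm{fix}(X_v)\cup\mathrm{fix}(Y_v)$, and take the first node $v$ on the path with $\{a,b\}\subseteq D_v$.

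Only one party spoke when moving from the parent $u$ to $v$; say it was Alice. Then $Y_v=Y_u$, so $\{a,b\}\not\subseteq\mathsf{blocks}(\mathrm{fix}(Y_v))$. Hence one of $y_a,y_b$ is still spread in $R_v$. Almost-completeness then bounds by $s/n^{0.45}$ the probability, conditioned on $(\bm x,\bm y)\in R_v$, that the chosen vertices in blocks $a,b$ are non-adjacent.

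These first-touch nodes are disjoint and together cover the event $\{a,b\}\subseteq D_{\bm\ell}$. So the probability that this event occurs and the pair is a non-edge is at most $(s/n^{0.45})\Pr[\{a,b\}\subseteq D_{\bm\ell}]$. Summing over all pairs gives at most $\mathbb{E}[|D_{\bm\ell}|^2]\cdot s/n^{0.45}=O(d^2s/n^{0.45})$.

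Your remark that ``only one free half is needed per non-edge'' is the right ingredient. It must be applied at this stopping time rather than at the leaf.
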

\autoref{thm:rand-comm-lb} follows immediately from \autoref{thm:CC-lowerbound} and \autoref{lemma:probability-concentrates}. To see this recall that \autoref{lemma:probability-concentrates} states that \aas a graph $\bm G$ sampled as in \autoref{thm:rand-comm-lb} is $s$-\ndense for $s={\max\{2\sqrt{n}(1-p),9e^2\log(kn)\}}$. From \autoref{thm:CC-lowerbound} it then follows that the randomized communication complexity of $S_G$ is
\[\Omega\left(\sqrt{\frac{n^{0.45}}{s}}\right)=\Omega\left(\min(\frac{1}{\sqrt{n^{0.05}(1-p)}},\frac{n^{0.225}}{\sqrt{\log(kn)}}\right)\]

The rest of this section is dedicated to the proof of \autoref{thm:CC-lowerbound}. By Yao's principle for randomized communication complexity~\cite{yao83}, it suffices to show that every small cost deterministic protocol errs significantly on some input distribution. We will show this for the uniform distribution over inputs $(\bm x, \bm y) \sim \Sigma^k \times \Sigma^k$.

\subsection{Lifting Background}
 We start by introducing some concepts necessary to define subcube-like protocols, which we will use in our analysis.
Every node $v$ in a communication protocol $\protocolPi$ is associated with a rectangle $R_v = X_v \times Y_v$. 
For a random variable $\xbf$ we write its min-entropy as $\Hminof{\xbf}=\min_x \log (1/\Pr[\xbf = x])$. We say that a random variable $\xbf \in \{0,1\}^m$ is \emph{$\gamma$-spread} 
if for every subset $I\subseteq [m]$, the marginal distribution $\xbf_I$ satisfies $\Hminof{\xbf_I}\geq \gamma \setsize{I}$. We define \emph{subcube-like communication protocols} \cite{GoosGJL2025quantum,RiazanovSSY2025searching}.

\begin{definition}
    We say that a rectangle $R=X\times Y \subseteq \{0,1\}^m \times \{0,1\}^m$ is \emph{$\gamma$-subcube-like} with respect to $I,J \subseteq [m]$ if $X_I, Y_J$ are fixed to some values, and the random variables $\bm x_{[m]\setminus I},\bm y_{[m]\setminus J}$ are $\gamma$-spread for $\bm x \sim X$ and $\bm y \sim Y$. We write $\fix (X)=I$ and $\fix (Y)=J$.

    A communication protocol $\protocolPi$ is called \emph{$\gamma$-subcube-like} if all the rectangles associated to the nodes in $\protocolPi$ are $\gamma$-subcube-like.
\end{definition}

In our proof of \cref{thm:CC-lowerbound}, we use the following theorem from~\cite{GoosGJL2025quantum}, which essentially says that, without loss of generality, we can assume the protocol to be subcube-like. 

\begin{theorem}[\cite{GoosGJL2025quantum}, Theorem 29]\label{thm:ggjl-cleanup}
Let $\protocolPi$ be a deterministic communication protocol with input from $X \times Y$. Then, for any constant $\gamma<1$, there is a protocol $\protocolPiAlt$ that  $\gamma$-subcube-like, and has cost $\commCost{\protocolPiAlt}=O(\commCost{\protocolPi}/\epsilon)$ such that $\Pr_{\bm x, \bm y \sim X \times Y}[\protocolPiAlt(\bm x, \bm y) \neq \protocolPi(\bm x, \bm y)] \le \epsilon$. 
\end{theorem}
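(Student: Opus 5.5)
This statement is quoted from~\cite{GoosGJL2025quantum} and we use it as a black box. If I had to reprove it, I would build $\protocolPiAlt$ by simulating $\protocolPi$ message by message and inserting a \emph{density-restoring} step after each message, in the style of the lifting machinery of~\cite{GPW20}.

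\textbf{Potential.} For a set $X\subseteq\{0,1\}^m$ that is fixed on $I$ and otherwise uniform-like, define the deficiency $D(X)=(m-|I|)-\Hminof{\bm x_{[m]\setminus I}}$, and define $D(Y)$ analogously.

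\textbf{Partition lemma.} The core tool is the standard density-restoring partition lemma. Any $X$ fixed on $I$ can be partitioned as $X=\bigsqcup_j X^j$. Each part $X^j$ is fixed on some $I\sqcup I_j$ and is $\gamma$-spread on the remaining coordinates. Moreover, $D(X^j)\le D(X)-(1-\gamma)|I_j|+\log(|X|/|X^{\ge j}|)$. The construction is greedy: repeatedly take a maximal set $I_j$ and value $\alpha$ whose marginal probability violates spreadness, i.e.\ $\Pr[\bm x_{I_j}=\alpha]>2^{-\gamma|I_j|}$, and split off that part. Maximality of $I_j$ forces the residual on $[m]\setminus(I\cup I_j)$ to be $\gamma$-spread. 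Fixing $|I_j|$ coordinates at such a heavy value lowers the deficiency by at least $(1-\gamma)|I_j|$.

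\textbf{Simulation.} When $\protocolPi$ has Alice send one bit, her current set $X$ splits into two halves. On average over a uniform input this raises the deficiency by at most about $1$, by the usual chain-rule accounting. Alice then tells Bob which density-restoring part she lands in. She encodes the part index with a prefix-free code whose length is roughly $\log(1/\text{relative mass of the part})$. This way the expected cost of each such announcement is controlled by the drop in deficiency, rather than being paid as $\log m$ per fixed coordinate. Once the part is announced, the rectangle is again $\gamma$-subcube-like, and the simulation proceeds the same way for Bob's messages.

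\textbf{Cost accounting.} Deficiency is nonnegative, starts at $0$, and increases by $O(1)$ in expectation per original bit. So the expected total number of fixed coordinates is $O(\commCost{\protocolPi}/(1-\gamma))$, and the expected extra communication is of the same order. Since $\gamma$ is a constant, this is $O(\commCost{\protocolPi})$.

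\textbf{Worst-case bound and error.} To turn the expected bound into a worst-case one, $\protocolPiAlt$ aborts with an arbitrary output as soon as its communication exceeds $C\commCost{\protocolPi}/\epsilon$. By Markov's inequality this happens with probability at most $\epsilon$ under the uniform distribution. On every non-aborting run, the simulated leaf agrees with $\protocolPi$'s leaf, because each part of the partition lies inside the original rectangle. Therefore $\Pr[\protocolPiAlt\neq\protocolPi]\le\epsilon$.

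\textbf{Main obstacle.} The hardest step is the amortized accounting. One has to show that the expected increase in deficiency per message is $O(1)$ even though the subcube rectangles are not uniform on their free coordinates. One also has to show that the prefix-code announcement costs are dominated by the deficiency drops, so that there is no $\log m$ overhead. I would handle both with a single potential function, namely the sum of the two deficiencies plus the bits communicated so far, and verify that it is a supermartingale up to an additive $O(1)$ per step.
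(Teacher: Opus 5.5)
You treat this statement the same way the paper does: the paper gives no proof and imports it from \cite{GoosGJL2025quantum}. Your optional reconstruction follows the natural density-restoring route, and the two-stage bookkeeping in your cost-accounting paragraph is sound. Write $L_j=\log(|X|/|X^{\ge j}|)$. Then $\Exp[L_j]\le\log e$, and the codeword for part $j$ has length at most $1+L_j+\gamma|I_j|$. Since $D\ge 0$, you get $\Exp[\sum_j|I_j|]=O(\commCost{\protocolPi}/(1-\gamma))$, hence $O(\commCost{\protocolPi})$ expected bits for constant $\gamma$. Two parts of the sketch, however, do not work as written.

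First, the single potential you propose for the main obstacle fails. The quantity $D(X)+D(Y)$ plus the bits sent is not a supermartingale up to $O(1)$ when $\gamma>1/2$, and the paper uses $\gamma=0.9$. A restoration changes it by at most $-|I_j|+2\log(|X|/|X^j|)+O(1)\le(2\gamma-1)|I_j|+2L_j+O(1)$. Concretely, let a balanced bit of $\protocolPi$ halve a $\gamma$-spread set of size below $2^{\gamma t+1}$ on $t$ free coordinates. In each half the whole free set is the only maximal violating set, so every part is a singleton and Alice reveals her entire input. She pays roughly $\gamma t$ bits, while the deficiency drops by only roughly $(1-\gamma)t$. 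The terms have to be weighted. For example, $D(X)+D(Y)+(1-\gamma)(|\fix(X)|+|\fix(Y)|)$ grows by $\log(|X|/|X_b|)$ at a split and by at most $L_j$ at a restoration. That is exactly your two-stage accounting, so keep that argument and drop the unweighted potential.

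Second, under the paper's definition your $\protocolPiAlt$ is not $\gamma$-subcube-like. The paper uses a binary protocol tree in which \emph{every} node must be subcube-like. The node right after Alice forwards her bit of $\protocolPi$ has rectangle $X_b\times Y$, which is precisely the set that needed restoring. The nodes inside a codeword, and leaves where you abort mid-message, carry unions of parts, which are in general not $\gamma$-spread. You must count only message boundaries as nodes, with the forwarded bit and the part index forming one message. You must also treat aborted leaves as exceptional and charge them to the $\epsilon$ error. This is all the paper's later argument uses, since consecutive boundary nodes differ on one side only.

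Relatedly, that later argument uses $|D_{\bm\ell}|\le d$, i.e.\ $|\fix(X_\ell)|+|\fix(Y_\ell)|=O(\commCost{\protocolPi}/\epsilon)$ at every leaf. Truncating by communication alone does not give this, because a part can fix many coordinates with an almost empty codeword. You can fix this in either of two ways:
\begin{itemize}
\item Also truncate on $\sum_j|I_j|$, using the same Markov argument.
\item Shannon-code the pair (bit, part) jointly. Then every path costs at least $\log(2^m/|X_\ell|)+\log(2^m/|Y_\ell|)\ge|\fix(X_\ell)|+|\fix(Y_\ell)|$.
\end{itemize}
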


\subsection{Proof of \autoref{thm:CC-lowerbound}}
\newcommand{\dng}{\textsf{blocks}}
Assume that $\protocolPi$ solves $S_{G}$ with an error $\epsilon$. By \autoref{thm:ggjl-cleanup}, there is a $0.9$-subcube-like protocol $\protocolPiAlt$ of cost $O(|\protocolPi| / \epsilon) \eqqcolon d$ solving 
$S_{G}$ to an error $2\varepsilon$. 
Sample uniform random inputs $\xbf, \ybf \in \Sigma^k = \{0,1\}^{k \log |\Sigma|}$ 
for Alice and Bob. Let the output of $\protocolPiAlt(\xbf,\ybf)$ be an edge connecting nodes from blocks $\bm a \in [k]$ and $\bm b \in [k]$. We denote this edge with $(\bm a \bm x_{\bm a} \bm y_{\bm a}, \bm b \bm x_{\bm b} \bm y_{\bm b})$ meaning that both nodes are described by first specifying the block, and then specifying the number of the node inside the block identified with $\Sigma \times \Sigma$.
Let $\bm \ell$ be the leaf of $\protocolPiAlt$ such that $(\bm x, \bm y) \in R_{\bm \ell}$.
 For a set $I \subseteq [k \log |\Sigma|]$ let the set $\dng(I)$ be the set of all blocks $i \in [k]$ whose bits are mentioned in $I$.

The proof can be summarized as follows. If one of the blocks $\bm a$ and $\bm b$ is never mentioned in the fixed part of the leaf rectangle $R_{\bm \ell}$, then, by density of the graph, the answer is likely to be wrong. Otherwise, at least one bit describing the nodes in $\bm a$ and $\bm b$ is fixed at the leaf. In that case, we argue that since there are at most $d$ blocks that can be mentioned, it is unlikely that a non-edge was discovered among those.

 For an arbitrary node $v$ of $\protocolPiAlt$ let $D_v \coloneqq \dng(\fix(X_v)) \cup \dng(\fix(Y_v))$. Then the event where one of the block $\bm a, \bm b$ is never fixed in the run of the protocol is expressed simply as ``$\{\bm a, \bm b\} \not\subseteq D_{\bm \ell}$''. Then
 \begin{align} \Pr[(\bm a \bm x_{\bm a} \bm y_{\bm a}, \bm b \bm x_{\bm b} \bm y_{\bm b}) \not\in E(G)] &\le \Pr[(\bm a \bm x_{\bm a} \bm y_{\bm a}, \bm b \bm x_{\bm b} \bm y_{\bm b}) \not\in E(G) \mid \{\bm a, \bm b\} \not\subseteq D_{\bm \ell}] \label{l:first-summand}\\
 &+ \Pr[(\bm a \bm x_{\bm a} \bm y_{\bm a}, \bm b \bm x_{\bm b} \bm y_{\bm b}) \not\in E(G) \land \{\bm a, \bm b\} \subseteq D_{\bm \ell}] \label{l:second-summand}
 \end{align}

\myparagraph{Bounding the summand \eqref{l:first-summand}.}
Observe that for each leaf $\ell$ of $\protocolPiAlt$ the probability $\Pr[\{\bm a, \bm b\} \subseteq D_{\ell} \mid \bm \ell = \ell] \in \{0,1\}$, i.e., leaves can be categorized into safe and dangerous: $\ell$ is \emph{safe} if $\Pr[\{\bm a, \bm b\} \subseteq D_{\bm \ell} \mid \bm \ell = \ell] = 1$, otherwise it is \emph{dangerous}. We also note the following useful claim.

\begin{claim}\label{claim:some-unfixed}
Suppose $\ell$ is a safe leaf, then
        \begin{equation}
            \Pr[(\bm a \bm x_{\bm a} \bm y_{\bm a}, \bm b \bm x_{\bm b} \bm y_{\bm b}) \not\in E(G)  \mid (\bm x, \bm y) \in R_\ell] \le s/n^{0.45}. \label{eq:safe-leaf-bound}
        \end{equation}
\end{claim}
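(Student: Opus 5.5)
Throughout, I read ``safe'' as the case in which the statement is meant to apply: at least one of the two output blocks, say $b$, satisfies $b\notin D_\ell$. The claim is then a direct consequence of the $0.9$-subcube-like structure of $R_\ell$ combined with $s$-almost-completeness. The key facts are that the output edge $(a,b)$ is determined by the leaf $\ell$, so $a,b$ are constants once we condition on $(\xbf,\ybf)\in R_\ell$, and that the uniform distribution on $R_\ell=X_\ell\times Y_\ell$ is a product distribution, so $\xbf\sim X_\ell$ and $\ybf\sim Y_\ell$ are independent. I will first condition on the whole of $\xbf$, which fixes in particular $x_a$ and $x_b$, and then bound the probability over $\ybf\sim Y_\ell$ alone.

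The main case is when $y_a$ is fixed in $Y_\ell$, i.e.\ $a\in\dng(\fix(Y_\ell))$, or more generally when we may condition on $y_a$ while keeping $y_b$ spread. Fix $x_a,y_a,x_b$. By $s$-\ndense-ness (Definition~\ref{def:neighbor-dense}) there are at most $s$ values $\beta\in\Sigma$ for which $(x_a,y_a)$ and $(x_b,\beta)$ are non-adjacent. Since $b\notin D_\ell$, all $\tfrac12\log n$ bits of $y_b$ lie outside $\fix(Y_\ell)$. By $0.9$-spreadness, $\Hmin(\ybf_b)\ge 0.9\cdot\tfrac12\log n$, so every single value of $\ybf_b$ has probability at most $n^{-0.45}$. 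A union bound over the at most $s$ bad values gives the bound $s/n^{0.45}$, and averaging over the conditioning on $\xbf$ yields \eqref{eq:safe-leaf-bound}.

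The delicate case, which I expect to be the main obstacle, is when $y_a$ is also unfixed. Then $\ybf_a$ and $\ybf_b$ may be correlated under $Y_\ell$, and spreadness controls only the marginal of $\ybf_b$ (and the joint min-entropy), not $\ybf_b$ conditioned on $\ybf_a$. I would first try to reduce to the previous case by exchanging the roles of $a$ and $b$, using that Definition~\ref{def:neighbor-dense} is symmetric in $i,j$. Thus the bad set $\{(\alpha,\beta)\}$ of non-adjacent pairs, for fixed $x_a,x_b$, is a bipartite graph on $\Sigma\times\Sigma$ with maximum degree at most $s$ on both sides. Decompose it by König's theorem into $s$ matchings $M_1,\dots,M_s$. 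For a single matching $M_t$, I would bound $\Pr[(\ybf_a,\ybf_b)\in M_t]$ using spreadness of the pair on the index set of both blocks, or of $\ybf_b$ alone via the fact that $M_t$ is a partial injection.

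If only the crude joint bound $\sqrt n\cdot n^{-0.9}$ per matching comes out, the result would be $s\,n^{-0.4}$ instead of $s\,n^{-0.45}$. That still suffices for the qualitative theorem, but it would require adjusting the constants. In that case I would revisit whether the argument can always condition on Alice's side instead, where $x_a$ might be fixed, so as to recover the stated exponent.
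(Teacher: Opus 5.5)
Your reading of ``safe'' (some output block, say $b$, has $b\notin D_\ell$) is the one the paper's proof actually uses; the paper's displayed definition has the two labels swapped. Your main case is exactly the paper's argument with $a$ and $b$ swapped: condition on $\bm x$ for free (the uniform distribution on $R_\ell$ is a product), fix the partner block, and combine $\Hmin\ge 0.9\log|\Sigma|$ with $s$-almost-completeness.

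As written, however, your proposal does not prove the claim. That argument is valid only when \emph{all} bits of $y_a$ are fixed in $Y_\ell$. Note that $a\in\textsf{blocks}(\fix(Y_\ell))$ only says that \emph{some} bit is fixed. The remaining case is left open, and neither remedy you suggest closes it. A partial injection can carry far more than $n^{-0.45}$ of the mass even when $\bm y_b$ is exactly uniform (see the example below). Switching to Alice's side changes nothing, because $\bm x$ is independent of $\bm y$ anyway. The obstruction is purely the correlation between $\bm y_a$ and $\bm y_b$ inside $Y_\ell$.

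The obstacle you found is real, though, and the paper's proof steps over it. It fixes $\bm y_{\bm b}$ and then uses $\Hmin(\bm y_{\bm a}\mid(\bm x,\bm y)\in R_\ell)\ge 0.9\log|\Sigma|$. That is a bound on the marginal, and it says nothing about $\bm y_{\bm a}$ conditioned on $\bm y_{\bm b}$. In fact the exponent $0.45$ does not follow from $0.9$-spreadness at all.

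Here is an example. Let $h=\log|\Sigma|$. Let the only non-edges be $(x_a,y_a)\not\sim(x_b,y_b)$ for $y_a=y_b$; this graph is $1$-\ndense. Let $R_\ell=\Sigma^k\times Y_\ell$ with $Y_\ell=\{y: y_a\oplus y_b\in W\}$, where $0\in W\subseteq\{0,1\}^h$ is $0.8$-spread with $|W|=2^{0.8h+o(h)}$ (a random linear code of dimension $0.8h+\log^2 h$ works). Take any bit set $I$, and let $c\le|I|/2$ be the number of aligned pairs $(y_{a,t},y_{b,t})$ it contains. The marginal of $\bm y\sim Y_\ell$ on $I$ has max-probability at most $2^{-(|I|-c)}2^{-0.8c}\le 2^{-0.9|I|}$. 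So $R_\ell$ is $0.9$-subcube-like with nothing fixed, yet its non-edge probability is $1/|W|=n^{-0.4-o(1)}\gg n^{-0.45}$.

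What \emph{is} provable is precisely your crude count, and it needs no matchings. Let $U$ be the set of unfixed bits of $y_a$. For each fixed $\bm x$ there are at most $s\,2^{|U|}$ bad pairs $(y_a,y_b)$, each of probability at most $2^{-0.9(|U|+h)}$. This gives $s\,2^{0.1|U|}n^{-0.45}\le s\,n^{-0.4}$ in every case.

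This changes the final bound to $\Omega(\sqrt{n^{0.4}/s})$. Alternatively, applying Theorem~\ref{thm:ggjl-cleanup} with $\gamma=0.95$ turns the same count into $s\,2^{(1-2\gamma)h}=s\,n^{-0.45}$ and recovers the claim as stated. Claim~\ref{claim:all-fixed} needs the same repair. So you should finish by committing to the joint count and adjusting $\gamma$, not by searching for a sharper argument at $\gamma=0.9$.
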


\ifthenelse{\boolean{conferenceversion}}
{}{
\begin{proof}[Proof of \autoref{claim:some-unfixed}]
Wlog assume that $\bm a \notin D_{\bm \ell}$. Fix $\bm x_{\bm a} \bm x_{\bm b} \bm y_{\bm b} = \alpha \in \Sigma^3$ arbitrarily. By the total probability law it suffices to bound the target probability conditioned on each value $\alpha$. By definition of $\dng(\fix(Y_{\bm \ell}))$ we have  $\Hmin(\bm y_{\bm a} \mid (\bm x, \bm y) \in R_\ell) \ge 0.9 \log |\Sigma|$. By $s$-neighbor density we have that $\Pr[(\bm a \bm x_{\bm a}\bm y_{\bm a}, \bm b \bm x_{\bm b} \bm y_{\bm b}) \not\in E(G) \mid (\bm x, \bm y) \in R_\ell, \bm x_{\bm a}\bm x_{\bm b} \bm y_{\bm b} = \alpha] \le |\Sigma|^{-0.9} s \le n^{-0.45} s$. 
\end{proof}
}

Since \eqref{l:first-summand} is a convex combination of left-hand sides of \eqref{eq:safe-leaf-bound}, we conclude by \cref{claim:some-unfixed} that the summand \eqref{l:first-summand} is bounded with $s/n^{0.45}$ as well.

\myparagraph{Bounding the summand \eqref{l:second-summand}.} We would like to apply a union bound over all values of $\{\bm a, \bm b\}$. We first \ifthenelse{\boolean{conferenceversion}}
{verify}{show} this for a \emph{fixed} pair of blocks $a$ and $b$:
\begin{claim} 
\label{claim:all-fixed}
    For every $a \neq b \in [k]$ we have
    \[\Pr[(a \bm x_a \bm y_a, b \bm x_b \bm y_b) \not\in E(G) \mid \{a,b\} \subseteq D_{\bm \ell}] \le s/n^{0.45}.\]
\end{claim}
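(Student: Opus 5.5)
We will prove \cref{claim:all-fixed} by tracking the moment during the run of $\protocolPiAlt$ at which the second of the two blocks $a,b$ first enters the set of blocks touched by fixed coordinates. Fix $a\neq b$. Along the root-to-leaf path of $\protocolPiAlt$ on $(\xbf,\ybf)$, the sets $D_v$ only grow, because fixed coordinates stay fixed as rectangles shrink. On the event $\{a,b\}\subseteq D_{\bm\ell}$ there is therefore a first node $\bm v$ on the path with $\{a,b\}\subseteq D_{\bm v}$. Its parent $\bm u$ has, without loss of generality, $a\notin D_{\bm u}$. We condition on the parent $u$ and on the child $v$, which ranges over all nodes of this type, and apply the law of total probability. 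It then suffices to show that for each such pair of nodes
\[\Pr\bigl[(a\xbf_a\ybf_a,b\xbf_b\ybf_b)\notin E(G)\bigm| (\xbf,\ybf)\in R_v\bigr]\le s/n^{0.45}.\]
The inputs reaching $v$ are exactly $R_v$, and the edge event depends only on $(\xbf,\ybf)$, not on where the protocol later goes.

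The key step is to use spreadness at the parent rather than at $v$. At $u$, block $a$ is not fixed on either side, so $\xbf_a$ and $\ybf_a$ each have min-entropy at least $0.9\log|\Sigma|$ conditioned on $R_u$. The step from $u$ to $v$ is a single round of communication, and the subcube-like cleanup only fixes coordinates. So $R_v$ is obtained from $R_u$ by fixing some new coordinates on one side only, say Alice's, and the remaining free coordinates stay $0.9$-spread. In particular, Bob's side is unchanged: $Y_v=Y_u$. Hence $\ybf_a$ still has min-entropy at least $0.9\log|\Sigma|$ given $R_v$, and $\ybf$ is independent of $\xbf$ on the rectangle $R_v$.

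Now fix arbitrary values $\xbf_a\xbf_b\ybf_b=\alpha$. This conditioning touches only Alice's input and the $b$-coordinates of Bob's input. We need $\ybf_a$ to stay spread given $\ybf_b$. This follows from $0.9$-spreadness applied to the joint marginal on the coordinates of blocks $a$ and $b$, in the form of the standard bound that conditioning on some coordinates of a spread variable leaves min-entropy at least $0.9|I|-(\text{bits revealed})$ only on average. To avoid that loss, we split by whether $b$ was already fixed on Bob's side at $u$. If it was, $\ybf_b$ is a constant and nothing is lost. If it was not, we instead use that $Y_u$ restricted to the unfixed coordinates is a product-free spread distribution, and we apply the same $s$-\ndense counting with the roles of $a$ and $b$ symmetrized.

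By $s$-\ndense{}ness, for fixed $\xbf_a,\xbf_b,\ybf_b$ there are at most $s$ values of $\ybf_a$ that give a non-edge. The probability is therefore at most $s\cdot 2^{-0.9\log|\Sigma|}=s\,n^{-0.45}$, which completes the bound by averaging.

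The main obstacle is this conditioning step: keeping $\ybf_a$ with min-entropy $0.9\log|\Sigma|$ after conditioning on $\ybf_b$ when block $b$ is not fixed on Bob's side. The first thing to try is to derive it from spreadness of the pair $(\ybf_a,\ybf_b)$. If that fails, use the fact that any of the $s$ bad $\ybf_a$-values only costs probability $2^{-0.9\cdot 2\log|\Sigma|}\cdot|\Sigma|$ summed over $\ybf_b$ jointly, which gives $s\,|\Sigma|^{-0.8}\le s\,n^{-0.4}$. That bound is weaker, but it would suffice after adjusting constants.
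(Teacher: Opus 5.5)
Your decomposition is the paper's. You take the first node $v$ on the path with $\{a,b\}\subseteq D_v$ and write the conditional probability as a convex combination over such nodes. You use $Y_v=Y_u$ (Alice spoke at $v$), so one of the two blocks, say $a$, has no fixed bit on Bob's side. You finish with spreadness and $s$-almost-completeness.

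The step you flag, however, is a genuine gap. The at most $s$ bad values of $\bm{y}_a$ are determined by $\bm{y}_b$, so you need min-entropy of $\bm{y}_a$ \emph{conditioned on} $\bm{y}_b$. But $0.9$-spreadness only controls marginals, and $\bm{y}_a,\bm{y}_b$ may be correlated on $Y_v$. Your argument covers only the case where every bit of block $b$ is fixed on $Y_v$, and your case split overlooks a partially fixed block $b$. The repair you offer when $\bm{y}_b$ is not constant (``product-free spread distribution'', symmetrizing $a$ and $b$) is not an argument: swapping roles just moves the same correlation to the other block. So the proposal does not establish $s/n^{0.45}$.

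Be aware that the paper's proof has the same hole. It derives $\Hmin(\bm{y}_b\mid \bm{y}\in Y_v)\ge 0.9\log|\Sigma|$ and then invokes density, which tacitly needs this bound conditioned on $\bm{y}_a$; the same happens in \cref{claim:some-unfixed}. The honest way to close it is your fallback, stated for partially fixed blocks:
\begin{itemize}
\item condition on $\bm{x}$, which is independent of $\bm{y}$ on $R_v$;
\item let $\bm{z}$ be the $r$ unfixed bits of $\bm{y}_b$;
\item use that $(\bm{y}_a,\bm{z})$ has min-entropy at least $0.9(\log|\Sigma|+r)$, while at most $s2^r$ of its values give a non-edge.
\end{itemize}
This yields $s\,2^{0.1r}|\Sigma|^{-0.9}\le s|\Sigma|^{-0.8}=s/n^{0.4}$, and $s/n^{0.45}$ when $r=0$. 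That suffices for \cref{thm:CC-lowerbound} with $n^{0.4}$ in place of $n^{0.45}$. Alternatively, if you apply \cref{thm:ggjl-cleanup} with $\gamma=0.95$ instead of $0.9$, the same count gives $s|\Sigma|^{1-2\gamma}=s/n^{0.45}$, recovering the claim as stated.
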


\ifthenelse{\boolean{conferenceversion}}
{}{
\begin{proof}
Let the run of the protocol $\protocolPiAlt(\xbf,\ybf)$, be the sequence $\vbf_1, \dots, \vbf_d = \bm \ell$ of nodes in the protocol tree.

Consider a node $v$ in $\protocolPiAlt$ corresponding to a rectangle $R_v = X_v \times Y_v \ni (\bm x, \bm y)$ such that $\{a,b\} \subseteq D_v$, but the inclusion is false for the parent of $v$ in $\protocolPiAlt$. Let $T_{i}$ be the set of these nodes in $\protocolPiAlt$ at the depth $i$, and $T \coloneqq \bigcup_{i \le d} T_i$. We then have
 \begin{multline*}
\Pr[(a \bm x_a \bm y_a, b \bm x_b \bm y_b) \not\in E(G) \mid \{a,b\} \subseteq D_{\bm \ell}] = \\\sum_{i \in [k], v \in T_{i}} \Pr[\bm v_i = v] \Pr[(a \bm x_a \bm y_a, b \bm x_b \bm y_b) \not\in E(G) \mid (\bm x, \bm y) \in R_v].
 \end{multline*}
Then it suffices to bound the probability $\Pr[(a \bm x_a \bm y_a, b \bm x_b \bm y_b) \not\in E(G) \mid (\bm x, \bm y) \in R_v]$ for an arbitrary $v \in T$, since the target probability is a convex combination of these probabilities.
Now suppose wlog that the node $v$ was a node where Alice spoke, i.e., for the parent $u$ of $v$ we have $Y_u = Y_v$. Then $\{a,b\} \not\subseteq \dng(\fix(Y_v))$. Indeed, since $v \in T$ we have $\{a,b\} \not\subseteq D_u$, so $\{a,b\} \not\subseteq \dng(\fix(Y_u)) = \dng(\fix(Y_v))$. Assume again wlog that $b \not\in \dng(\fix(Y_v))$. 

Finally, we have by the subcube likeness of $R_v$ and the definition of $\dng(\fix(Y_v))$ that $\Hmin(\bm y_b \mid \bm y \in Y_v) \ge 0.9 \log |\Sigma|$. Then by the graph density $\Pr[(a \bm x_a \bm y_a, b \bm x_b \bm y_b) \not\in E(G) \mid (\bm x, \bm y) \in R_v] \le s/n^{0.45}$ as required. 
\end{proof}}

We need some additional notation to implement our union bound. For a protocol leaf $\ell$ define 
\[p_{a,b}(\ell) \coloneqq \Pr_{\bm z, \bm w \sim R_\ell}[(a \bm z_{a} \bm w_{ a}, b \bm z_b \bm w_b) \not\in E(G)].\]
 Since the pair $(\bm a, \bm b)$ is uniquely determined by $\bm \ell$, so denote $a(\ell)$ and $b(\ell)$ the blocks returned in the leaf $\ell$. 
Having that, we rewrite
\begin{align*}
    \Pr[(\bm a \bm x_{\bm a} \bm y_{\bm a}, \bm b \bm x_{\bm b} \bm y_{\bm b})& \not\in E(G) \land \lnot F] \\
   \text{\small (def. of dangerous leaves) } &=
    \sum_{\ell\colon \text{ dangerous leaf}} p_{a(\ell),b(\ell)}(\ell) \Pr[(\bm x, \bm y) \in R_\ell]\\
    &=
    \Exp[p_{\bm a, \bm b}(\bm \ell) \cdot \mathbbm{1}_{\{\bm a, \bm b\} \subseteq D_{\bm \ell}}]\\
   \text{\small (sum includes the term $\bm a, \bm b$) } &\le \sum_{a \neq b \in [k]} \Exp[\mathbbm{1}_{\{a,b\} \subseteq D_{\bm \ell}} \cdot p_{a,b}(\bm \ell)]\\
    &= \sum_{a \neq b \in [k]} \Pr[\{a,b\} \subseteq D_{\bm \ell} \,\land\,(a \bm x_a \bm y_a, b \bm x_b \bm y_b) \not\in E(G)]\\
    &= \sum_{a \neq b \in [k]} \Pr[(a \bm x_a \bm y_a, b \bm x_b \bm y_b) \not\in E(G) \mid \{a,b\} \subseteq D_{\bm \ell}] \cdot \Pr[\{a,b\} \subseteq D_{\bm \ell}]\\
   \text{\small (number of pairs is $\le |D_{\bm\ell}|^2$) } &\le \Exp[|D_{\bm \ell}|^2] \cdot \max_{a \neq b \in [k]}  \Pr[(a \bm x_a \bm y_a, b \bm x_b \bm y_b) \not\in E(G) \mid \{a,b\} \subseteq D_{\bm \ell}] \\ 
   \text{\small (\autoref{claim:all-fixed}) }&\le d^2 s / n^{0.45}.
\end{align*}

Finally, we have established that the success probability of $\protocolPiAlt$ is at most $2 d^2 s/ n^{0.45} \ge 1 - 2\epsilon = \Omega(1)$, so $d = \Omega(\sqrt{n^{0.45} / s})$ as required.
\section*{Acknowledgements}
\label{sec:acknowledgements}
\addcontentsline{toc}{section}{\nameref{sec:acknowledgements}}
We would like to thank 
Dmitry~Itsykson and 
Kilian~Risse for insightful  
discussions.

David Engström and Susanna F.\ de Rezende received funding from the Knut and Alice Wallenberg grant \mbox{KAW 2023.0116}, ELLIIT, and the Swedish Research Council grant 
\mbox{2021-05104}.
Yassine~Ghannane and Duri~Andrea~Janett were supported by
the Independent Research Fund Denmark grant \mbox{9040-00389B}.
Artur~Riazanov is supported by Swiss
State Secretariat for Education, Research, and Innovation (SERI) under contract number MB22.00026.
We also gratefully acknowledge that we have benefited greatly from being part of Basic Algorithms Research Copenhagen (BARC) environment financed by the Villum Investigator grant~54451.

\bibliography{refs}
\bibliographystyle{plainurl} 

\end{document}